\documentclass[11pt,a4paper]{article}

\usepackage[T1]{fontenc}
\usepackage[utf8]{inputenc}
\usepackage[margin=1in]{geometry}
\usepackage{amsmath,amssymb,amsthm,mathtools,amsfonts}
\usepackage[backend=bibtex,style=alphabetic]{biblatex}
\usepackage{listings}
\usepackage[hidelinks]{hyperref}
\usepackage{orcidlink}
\usepackage{thmtools} 
\usepackage{thm-restate}
\usepackage[shortlabels]{enumitem}
\usepackage{mathrsfs}
\usepackage{bookmark}

\title{Generalizing LCL Complexity Gaps to Unbounded Degree via Monadic Second-Order Properties}
\author{%
	Chiara Piombi\,\orcidlink{0000-0001-5662-4443}\\
	CISPA Helmholtz Center for Information Security\\
    Saarbr\"ucken, Germany\\
	\texttt{chiara.piombi@cispa.de}}
\date{}

\theoremstyle{plain}
\newtheorem{theorem}{Theorem}[section]
\newtheorem*{theorem*}{Theorem}
\newtheorem{corollary}[theorem]{Corollary}
\newtheorem{lemma}[theorem]{Lemma}
\newtheorem*{lemma*}{Lemma}
\newtheorem{proposition}[theorem]{Proposition}

\theoremstyle{remark}
\newtheorem*{remark}{Remark}

\theoremstyle{definition}
\newtheorem{definition}[theorem]{Definition}

\begin{document}
\maketitle
\begin{abstract}
The last decade of research on the $\mathsf{LOCAL}$ model has seen tremendous progress in
understanding \emph{locally checkable labeling $(\mathsf{LCL})$ problems}, culminating in
an almost complete classification of the possible complexities $\mathsf{LCL}$ problems can
exhibit. In particular, on undirected trees, Chang and Pettie showed that there is no
$\mathsf{LCL}$ problem with complexity between $\omega(\log n)$ and $n^{o(1)}$ \cite{changHierarchy2019}
and Chang showed that, for every positive integer $k$, there is no $\mathsf{LCL}$ problem
with complexity between $\omega(n^{1/(k+1)})$ and $o(n^{1/k})$ \cite{treesChang2020};
additionally, which side of each \emph{gap} a problem is found on is decidable.

While the class of $\mathsf{LCL}$ problems -- which, roughly speaking, consists of problems for which
the correctness of a solution can be described by a finite set of allowed node
configurations, which in turn can be locally verified by a constant-time algorithm --
includes many important problems, it has one major restriction: problems can be defined
only on bounded degree graphs, which consequently restricts all the classification and gap
results mentioned above.

In this work, we propose a generalization of $\mathsf{LCL}$ problems to unbounded
degree using \emph{Presburger monadic second-order $(\mathsf{PMSO})$ formulas}; more
specifically, we consider what we call \emph{Local $\mathsf{PMSO}$ $(\mathsf{LPMSO})$
problems}, i.e., problems whose correct solutions are both \emph{finitely described} by a
$\mathsf{PMSO}$ formula and \emph{locally verifiable} by a $\mathsf{LOCAL}$
algorithm in constant time -- this class contains many of the important problems studied in
the $\mathsf{LOCAL}$ model but defines them on unbounded degree graphs. As our main result we prove that,
on unbounded degree rooted trees, the aforementioned $\omega(\log n)$--$n^{o(1)}$ and
$\omega(n^{1/(k+1)})$--$o(n^{1/k})$ complexity gaps (and their decidability) extend to the class of
$\mathsf{LPMSO}$ problems.

We obtain our gap results by proving two theorems that may be of independent
interest:
\begin{enumerate}
    \item We show that computing just a solution of an $\mathsf{LPMSO}$ problem is
    asymptotically as hard as computing \emph{both} a solution and a certificate for its correctness, if the
    certification is based on the minimal automaton verifying the $\mathsf{PMSO}$ formula.
    \item By modifying the minimal verification automaton slightly, we create a so-called rational
    multiset tree automaton whose run on the input rooted tree encodes both a solution and a
    certificate for its correctness; then we show that computing a run of \emph{any} rational multiset tree automaton on a rooted tree is asymptotically equivalent to solving the same problem on a rooted tree whose
    degree is bound by a constant depending only on the problem description.
\end{enumerate}
\end{abstract}
\paragraph*{Funding.} This work is funded by the Deutsche Forschungsgemeinschaft (DFG, German Research Foundation) -- Project number 534005335.
\paragraph*{Acknowledgements.} I want to thank Sebastian Brandt and Gustav Schmid for insightful
discussions, and Jonathan Osser and Noa Izsak for answering my questions about logical
formalism and automata theory.
\clearpage
\section{Introduction}
One of the central and most fruitful areas of research in the $\mathsf{LOCAL}$ model in
the last decade is the study of the possible complexities that problems from the class of
so-called Locally Checkable Labeling ($\mathsf{LCL}$) can exhibit. This class includes many interesting
problems, but has the major drawback of being only defined on bounded-degree graphs. In
this work, we propose a generalization of the class of $\mathsf{LCL}$ problems to the
unbounded-degree setting based on finite model theory and automata theory, and provide
\emph{gap results} for our characterization -- i.e., we show that there are intervals of
the complexity landscape into which no problems from our generalized class falls.

We begin by introducing the $\mathsf{LOCAL}$ model and discussing the literature on
$\mathsf{LCL}$ problems and their generalizations.
\subsection{Background and previous work}
\paragraph*{The LOCAL model}
We consider Linial's $\mathsf{LOCAL}$ model of distributed computing \cite{linialDGA1987},
where the input is a graph $G=(V,E)$ representing a network with processors
on each vertex $v\in V$ which can communicate over the edges $e\in E$. The computation
proceeds in synchronous rounds where each processor sends one message to each of its
neighbors; there are no limitations on message size or local computation power, and the
complexity of an algorithm is the number of rounds until all nodes terminate and output
their own part of a global solution.

We additionally assume that all nodes either know the graph size (the number $n$ of
vertices of the graph) or a polynomial bound on it and are equipped with some way to
distinguish between nodes: in the deterministic $\mathsf{LOCAL}$ model, nodes
are equipped with globally unique IDs composed of $O(\log n)$ bits, while in the randomized $\mathsf{LOCAL}$ model
nodes are equipped with infinite strings of random bits generated independently. Under those 
assumptions, a $\mathsf{LOCAL}$ algorithm that runs in $T(n)$ rounds is equivalent to an
algorithm that for each node $v$ gathers the $T(n)$-hop neighborhood of $v$ to determine
$v$'s output. We then refer to the complexity of a problem in the $\mathsf{LOCAL}$ model as its
\emph{locality}. 
\paragraph*{Locally Checkable Labelling problems}
A problem is truly \emph{local} when its locality is a constant which does not depend on the size of
the input graph; local problems are considered to be efficiently solvable in the $\mathsf{LOCAL}$
model and they are, in a way, a distributed analogue of the centralised problem class
$\mathsf{P}$. Similarly, a natural analogue of the
$\mathsf{NP}$ class is the class of \emph{Locally Checkable
Labelings} ($\mathsf{LCL}$) problems defined by Naor and Stockmeyer \cite{naorLocally1995}; these are problems that can be
\emph{verified} in constant time by a distributed algorithm. Explicitly, $\mathsf{LCL}$s are described via a \emph{finite} set of \emph{configurations} (labeled $r$-hop
neighborhoods for some finite $n\geq 0$) such that the overall solution is correct if and
only if for every node $v$ in the graph there is an isomorphism between the $r$-hop
neighborhood of $v$ and one of the configurations. This description naturally induces two restrictions: $\mathsf{LCL}$
problems are required to have finitely many input and output labels and can only be defined on graphs whose maximum
degree is bounded by a constant; this is because all possible labels and degrees
have to appear in the finitely many configurations.
\paragraph*{Complexity landscape}
The complexity of $\mathsf{LCL}$ problems has been widely studied, especially in the last
10 years (see \cite{balliuPaths2019,balliuRooted2022,brandtGrids2017} amongst others). The
most interesting results to come out of this study are \emph{gaps}, results that show no
problem can have a complexity strictly between two classes; gap results are extremely powerful,
as they can be used to automatically improve both upper bounds and lower bounds.

Conversely, \emph{density} results (like the ones found in \cite{balliuNewclasses2018})
show the lack of major gaps by constructing a set of problems whose
complexities form a dense set of the interval between two functions. These two types of
results together form what is referred to as the \emph{complexity landscape} of
$\mathsf{LCL}$ problems; this complexity landscape is today almost completely understood
(\cite{suomelaLandscapetalk2020} and Figure~\ref{fig:landgeneral} for an overview).
\begin{figure}[th]
\centering
\includegraphics[width=0.8\textwidth]{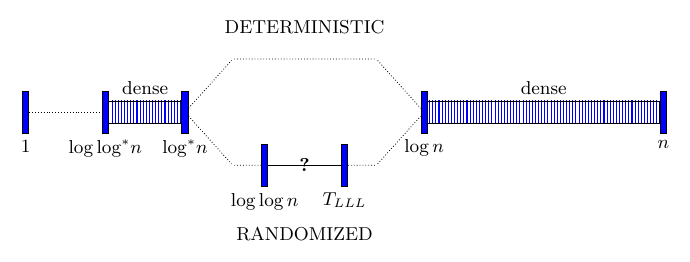}
\caption{The complexity landscape of $\mathsf{LCL}$s on general graphs.}
\label{fig:landgeneral}
\end{figure}
Even more information can be gathered by restricting the types of graphs the problem is
defined on: the complexity landscape of $\mathsf{LCL}$ problems is fully understood for many graph
classes, such as paths \cite{balliuPaths2019}, undirected trees
\cite{treesChang2020}, rooted trees \cite{balliuRooted2022}, and grids
\cite{brandtGrids2017}. We focus especially on undirected and rooted trees; they have
very similar complexity landscapes, as a problem on rooted trees can be encoded as a problem on
unrooted trees with inputs on the nodes -- additionally, both have the interesting property of containing
\emph{countably many discrete} complexity classes. Specifically, an $\mathsf{LCL}$ problem on (both
rooted and undirected) trees has deterministic complexity either
$$O(1),\Theta(\log^*n),\Theta(\log n),\text{ or }\Theta(n^{1/k})\text{ for some
}k\in\mathbb{N}_{\geq 1},$$
or is unsolvable. The main difference lies in the randomized complexity: there exist
problems on undirected trees with deterministic complexity $\Theta(\log n)$ and randomized
complexity $\Theta(\log\log n)$, such as sinkless orientation \cite{balliuSinkless2023} but no such
problems exist for rooted trees \cite{balliuRooted2022}.
\begin{figure}[th]
\centering
\includegraphics[width=0.8\textwidth]{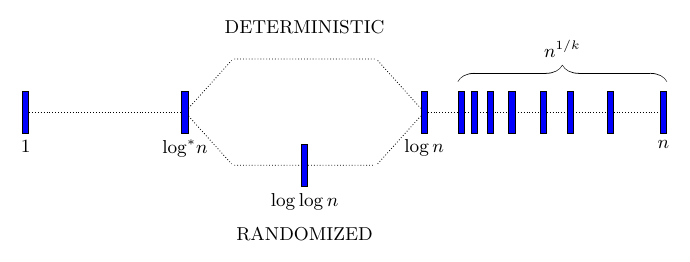}
\caption{The complexity landscape of $\mathsf{LCL}$s on undirected trees.}
\label{fig:landtrees}
\end{figure}
\begin{figure}[th]
\centering
\includegraphics[width=0.8\textwidth]{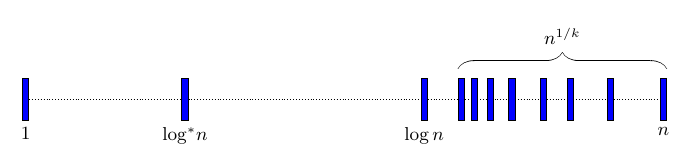}
\caption{The complexity landscape of $\mathsf{LCL}$s on rooted trees; note that the
randomized and deterministic landscapes are the same.}
\label{fig:landroottrees}
\end{figure}
Another focus of the research on distributed complexity gaps is \emph{decidability}, by which we
mean whether an algorithm exists that for a given problem decides which side of a gap it
lies on. Herein lies another difference between the complexity landscape of rooted and
undirected trees: the complexity of $\mathsf{LCL}$s on rooted trees is fully decidable,
while only the ``polynomial'' gaps (the $\omega(\log n)-n^{o(1)}$ and $\omega(n^{1/(k+1)})-o(n^{1/k})$ gaps)
are known to be decidable for unrooted trees.
\paragraph*{Generalizing LCL problems}
As we have seen, Locally Checkable Labeling problems have a nice and clean complexity
landscape; however they are limited by the nature of their description. We focus
specifically on the fact they can only be defined for \emph{bounded degree graphs}; most
natural problems, such as finding a maximal independent set (MIS) or a maximal matching,
have graph-theoretic descriptions that do not depend on the problem description -- and
even those who do, such as $(\Delta+1)$-coloring, can be defined for non-constant values
of the maximum degree $\Delta$.

Our objective is to generalize $\mathsf{LCL}$s in a way that preserves some of the gap results described
above and contains as many problems as possible: this would allow us to use the gaps to
boost upper bound and lower bounds on the complexity of a problem on unbounded degrees as
well. What properties should this generalization have?

First of all, we require our problems to be \emph{locally verifiable}: we can trivially
create problems that are not locally verifiable but have complexity $T(n)$ for any
sublinear function $T$, by requiring the nodes to output their $T(n)$-hop neighborhood.

We could try to naturally define the class of all problems that are locally
verifiable, without any restriction on having a finite description, finite labels or
bounded degree; however it was shown in \cite[Theorem 1]{schmidLFL2026} that for any real
number $r\in(0,1]$ there is such a problem with complexity $\Theta(n^r)$ in unbounded
degree trees. We then require our problems to have some form of \emph{finite description}
-- this is also an advantage when trying to \emph{decide} the complexity of a problem, as
it allows the decidability algorithm to read the whole description.

Lastly, we observe that even these restrictions are not enough to recover \emph{all} of
the above mentioned graph results: we know that finding a MIS on unbounded degree trees has complexity
$\Theta(\log n/\log\log n)$ (\cite{barenboimMIS2009,balliuMIS2019}), which shows that the
$\omega(\log^*n)$ to $o(\log n)$ gap does not hold (but a smaller one, i.e. a
$\omega(\log^*n)$ to $o(\sqrt{\log n})$ or $o(\log\log n)$ could still hold). In this
paper we focus on the polynomial gaps for rooted trees.
\paragraph*{Locally Finite Labeling problems}
One approach towards generalising $\mathsf{LCL}$ problems with finite descriptions is Locally
Finite Labeling problems, defined by Schmid in \cite{schmidLFL2026}. $\mathsf{LFL}$ problems are
described similarly to $\mathsf{LCL}$ problems, as a finite set of $r$-hop neighborhoods
such that a solution is correct if every node can find a \emph{morphism} between its own $r$-hop neighborhood and one of these configurations;
however, $\mathsf{LFL}$ configurations mark some edges as ``optional'': each ``optional''
edge can be used by the morphism arbitrarily many times (including $0$). This
class of problems has been shown to have decidable polynomial gaps.

Many important problems on unbounded degree graphs can be expressed in this formalism, 
such as finding a maximal independent set, $k$-coloring, and sinkless orientation; on the other hand, problems whose
description depends on arithmetic cannot always be described as $\mathsf{LFL}$s. One
example is \emph{degree splitting}, where each node is required to have as many blue
neighbors as red neighbors \cite[Section 1.2]{schmidLFL2026}.
\paragraph*{Meta-results on the theory of graphs}
Could we find some other way to finitely describe a problem on unbounded degree trees? In
this paper, we focus on problems whose solutions can be described by a single formula in the
\emph{monadic second-order} ($\mathsf{MSO}$) logic on trees, which are assumed to be rooted, unordered and
unranked. The relationship between trees and monadic second-order logic is
well-documented by Courcelle's theorem \cite{courcelleMSO2012}, which states that every
$\mathsf{MSO}$-definable property can be verified by a sequential algorithm on graphs of
bounded treewidth in linear time.

Monadic second-order properties on trees and tree-like graphs have been studied before in
the context of distributed algorithms: Feuilloley, Bousquet and Pierron \cite{feuilloleyMSOcert2022} showed that
$\mathsf{MSO}$ properties for trees can be certified with constant-size certificates and
verifiability radius $1$, while Jauregui, Li, Montealegre and Todinca \cite{jaureguiCourcelle2018} showed that there is
a $\mathsf{CONGEST}$ implementation of Courcelle's algorithm that can verify an
$\mathsf{MSO}$ property on graphs of bounded treewidth in $\tilde{O}(D)$ rounds, where $D$
is the diameter of the graph. Both of these results deal with \emph{verification}
problems, meaning problems where we need to check whether a (possibly labeled) graph has a
defined property -- such as whether a labeling is a correct solution to a problem, or a
graph has a certain structure. We will show that the relationship between $\mathsf{MSO}$ on trees and tree automata is strong
enough to define meta-theorems on \emph{construction} problems; that is, problem where you
are required to \emph{construct} a solution (usually a labeling) satisfying some property
described by the problem -- in our case, through a logical formula.
\paragraph*{Automata theory and Presburger definitions}
Automata theory appears in the proofs of $\mathsf{LOCAL}$ complexity gaps on trees by
Chang and Pettie (\cite{changHierarchy2019,treesChang2020}) as a way to formalize the
replacement of one tree with another which has a locally similar structure. Those results
use the bound on the maximum degree and constant checkability radius to construct the finite states
of the automaton, and so do not trivially generalize to the unbounded degree setting.
However, properties describable by $\mathsf{MSO}$ logic -- and by some of its extensions -- can be
characterized as properties recognizable by selected types of bottom-up tree automata (see
\cite{bonevaAutomata2005} for a review).  We focus on the
largest of these recognizable extensions: Presburger monadic second-order logic
($\mathsf{PMSO}$). Presburger formulas are able to describe simple arithmetic constraints
on integer variables; Presburger monadic second-order logic on rooted trees allows us to check those
arithmetic constraints on the sets of \emph{children} of a node -- they can be checked by
a class of automata called \emph{rational multiset tree automata}, where rational
multiset languages are roughly the multiset languages recognized by Presburger formulas.

In this paper, we show how we can use the local verifiability induced by the automata to
obtain gaps for $\mathsf{PMSO}$ \emph{certification} problems -- meaning problems where a solution is given
and we want to compute a certificate of its correctness that is locally verifiable. For
automata-recognizable properties, this can be done by returning a run of the verifying
automaton on the rooted tree. Additionally, we show that for \emph{construction} problems
which are both $\mathsf{PMSO}$-definable and \emph{locally verifiable}, we can find a
certification problem which also encodes a valid solution -- and solving this
certification problem is asymptotically exactly as hard as computing a solution.
\subsection{Organization}
We start by giving a high-level overview of the results and proof techniques in
Section~\ref{sec:results}, concluding with some related open questions; then in Section~\ref{sec:def} we give precise definitions for the $\mathsf{LOCAL}$ model (deterministic and
randomised), local verifiers, and the language of graphs. The bulk of the paper, where we
prove the main result, is Section~\ref{sec:main}; the proof overview contains more
specific references to its parts. Appendix~\ref{app:conversion} discusses the conversions between problems on
rooted and unrooted trees and between node-labeling and edge-labeling problems; Appendix~\ref{app:decomposition}
contains some technical results about graph decompositions, and Appendix~\ref{app:myhill} contains an
adaptation of the proof of the Myhill-Nerode theorem to the class of rational multiset
tree automata.
\subsection{A tour of the results}\label{sec:results}
\paragraph*{Presburger monadic second-order construction problems}
We focus on problems described by Presburger monadic second-order ($\mathsf{PMSO}$) logic
on rooted trees. $\mathsf{PMSO}$ is an extension of monadic
second-order logic which allows for ``linear'' constraints described in
terms of basic arithmetic (only using $\mathbb{N},+$ and $\leq$) on the children of a node\footnote{This can be
extended further to refer to all neighbors of a node on undirected general graphs;
however, the rooted tree formulation is
required for the automata-theoretic characterization to hold.}; this allows us
to define problems that first-order or even monadic second-order formulas cannot, such as
degree splitting \cite{halldorssonVertexsplit2022} and its variants. As an example,
consider a graph that is $\{B,R\}$-labeled (colored with blue or red) -- this is encoded
by two unary relations (i.e. sets) $B$ and $R$, so that ``$x$ is blue'' can be
expressed as $Bx$. We can convert this unary relation to a second-order variable by the monadic second-order formula
$$\exists S_B\,\forall x\, (x\in S_B)\leftrightarrow(Bx)$$
and we can do the same for the relation $Rx$ (``$x$ is red'') and the variable $S_R$. Then
the degree splitting problem ``the difference between the number of blue children and red
children of each node must be at most 1'' (to account for nodes with an odd number of
children) can be expressed by the formula
$$\forall x\, \left(\left((x\in S_B)\leftrightarrow(Bx)\right)\land \left((x\in S_R)\leftrightarrow(Rx)\right)\land\, x/\varphi(S_B,S_R)\right),$$
where $\varphi$ is the Presburger formula
$$\varphi(A,B)=\lvert A\rvert\leq\lvert B\rvert\leq \lvert A\rvert+1\,.$$

We define the class of construction problems $\Pi$ on rooted trees for which the property ``the graph $G$ is
correctly labeled with a solution of $\Pi$'' is both describable by a $\mathsf{PMSO}$ formula and
locally decidable; we call these Local $\mathsf{PMSO}$, or $\mathsf{LPMSO}$ construction
problems. We observe that the solutions to $\mathsf{LCL}$ problems can be described by
first-order formulas -- this is because the statement ``the $r$-hop neighborhood of vertex
$x$ is isomorphic to the finite graph $A$'' can be expressed by a first-order formula with
free variable $x$, and we can simply require each $x$ to satisfy at least one of the
finitely many isomorphism formulas to the allowed configurations. Then, any $\mathsf{LCL}$
problem can be defined by a $\mathsf{PMSO}$ formula over bounded-degree rooted trees;
conversely, many unbounded-degree generalizations of natural $\mathsf{LCL}$s can be
described by $\mathsf{PMSO}$ formulas, such as finding a MIS, a maximal matching, a $k$-coloring or
degree splitting. However, this class does not include the generalizations of problems for which the number of labels depends on
the maximum degree (i.e. is potentially unbounded), such as $(\Delta+1)$-coloring.

The main result of this paper is that the Local $\mathsf{PMSO}$ class exhibits the same
so-called \emph{polynomial} complexity gaps as the class of $\mathsf{LCL}$ problems on
rooted trees.
\begin{restatable}{theorem}{gapthm}\label{thm:gap}
    Let $\Pi$ be a Local $\mathsf{PMSO}$ construction problem on rooted trees; then one of the following holds:
    \begin{itemize}
        \item $\Pi$ is unsolvable, or
        \item $\Pi$ has deterministic and randomized $\mathsf{LOCAL}$ complexity $O(\log
        n)$, or
        \item $\Pi$ has deterministic and randomized $\mathsf{LOCAL}$ complexity
        $\Theta(n^{1/k})$ for some $k\in\mathbb{N}_{\geq 1}$.
    \end{itemize}
    Additionally, there is a sequential algorithm that can decide based only on the
    $\mathsf{PMSO}$ description of $\Pi$ which of those cases applies, and which $k$
    applies in the last case.
\end{restatable}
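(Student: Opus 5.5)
The plan follows the two-step reduction announced in the abstract: reduce an $\mathsf{LPMSO}$ construction problem to computing a run of a rational multiset tree automaton, reduce that to a bounded-degree problem, and then invoke the known $\mathsf{LCL}$ gaps on rooted trees \cite{changHierarchy2019,treesChang2020,balliuRooted2022}.

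\textbf{Step 1 (certification is as hard as construction).} Let $\varphi$ be the $\mathsf{PMSO}$ formula describing correct solutions of $\Pi$. By the automata-theoretic characterization of $\mathsf{PMSO}$ on unranked unordered trees, there is a rational multiset tree automaton $\mathcal{A}_\varphi$ that accepts exactly the correctly labeled trees. Using the Myhill--Nerode adaptation (Appendix~\ref{app:myhill}), I take $\mathcal{A}_\varphi$ to be minimal, so that its states are exactly the equivalence classes of labeled subtrees.

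I then show that any $T(n)$-round algorithm for $\Pi$ can be extended in $O(T(n))$ additional rounds to output the run of $\mathcal{A}_\varphi$ on the labeled tree. The intended argument uses local verifiability. Suppose the state of a node $v$ were not determined by a constant-radius view of the labeling around $v$. Then two subtrees in different Myhill--Nerode classes would look locally identical. Minimality gives a context distinguishing these two classes, and swapping one subtree for the other would change global correctness without changing any constant-radius view. This contradicts local checkability.

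This step is delicate. The state may depend on deep information, so I expect the correct statement to be that the state is computable from the solution restricted to an $O(T(n))$-neighborhood together with a constant amount of extra information. I plan to establish it by a pumping/replacement argument on the minimal automaton.

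\textbf{Step 2 (merging solution and certificate).} I modify $\mathcal{A}_\varphi$ into an automaton $\mathcal{B}$ whose states are pairs of an output label and a state of $\mathcal{A}_\varphi$, with input labels read as part of the transition. A run of $\mathcal{B}$ on the input tree then encodes both a solution and a certificate. By Step~1, computing a run of $\mathcal{B}$ is asymptotically equivalent to solving $\Pi$.

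\textbf{Step 3 (degree reduction).} Each transition of $\mathcal{B}$ is a rational multiset language over its finite state set, that is, a semilinear set of child-state count vectors. Each semilinear set is a finite union of linear sets $b+\sum_i \mathbb{N}p_i$ with constant-size base and periods. A node can therefore satisfy its constraint by committing to one linear set and grouping its children into a base group and period groups, each of bounded size. I intend to replace a high-degree node by a bounded-depth gadget of constant degree $\Delta_\Pi$, for instance a rooted path or binary tree of auxiliary nodes whose states track the partially accumulated counts modulo the periods, with a threshold on the counts. This converts the computation of a run of $\mathcal{B}$ into an $\mathsf{LCL}$ on rooted trees of degree at most $\Delta_\Pi$.

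The converse direction also has to be checked: the gadget must only increase distances by a constant factor, or the size must only change polynomially, so that complexities $\Theta(n^{1/k})$ and $O(\log n)$ are preserved. Bounded-degree trees are themselves instances of the unbounded problem, so lower bounds transfer as well. This step requires care, because a naive path gadget blows up distances; I would use balanced binary gadgets with $O(\log \deg)$ depth and argue via the rake-and-compress style decompositions of Appendix~\ref{app:decomposition} that the $n^{1/k}$ classes are preserved.

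\textbf{Conclusion.} The gaps and their decidability then follow from the bounded-degree rooted tree classification, because $\mathcal{A}_\varphi$, $\mathcal{B}$ and the derived $\mathsf{LCL}$ are all computable from $\varphi$. I expect Step~1 to be the main obstacle.
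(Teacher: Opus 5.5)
\textbf{The gap is in Step 3.} A bounded-degree gadget with finitely many labels, whose auxiliary nodes track ``counts modulo the periods, with a threshold'', can only enforce \emph{recognizable} multiset constraints. These are finite unions of products of ultimately periodic sets, one per coordinate. The transitions of a RatMTA, however, are arbitrary semilinear sets such as $\{M : \lvert M\rvert_a=\lvert M\rvert_b\}$, and these Presburger constraints are exactly what $\mathsf{PMSO}$ adds over $\mathsf{MSO}$.

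A concrete counterexample is the $1$-round checkable problem ``output $1$ iff you have as many children with input $a$ as with input $b$''. It is $\mathsf{LPMSO}$ and trivially $O(1)$, and its minimal automaton has exactly that transition.

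Now take any gadget whose shape is fixed by the input, with the children hanging below it in arbitrary order. Look at the labels in a constant-radius window around the edge leaving a gadget subtree with $2^h$ leaves. These labels would have to determine the difference $\lvert a\rvert-\lvert b\rvert$ below that subtree. If they did not, you could glue the inside of one accepting labeling to the outside of another and accept an unbalanced node. So you would need $\Omega(2^h)$ distinct windows, which finitely many labels cannot provide.

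Committing to one linear component and grouping the children into base and period groups does not rescue this. The grouping is part of the solution, and a gadget whose structure is part of the input cannot reflect it locally. So in general no sound and complete gadget $\mathsf{LCL}$ exists.

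Even where a gadget exists, your lower bound does not transfer. The gadget $\mathsf{LCL}$ is not a restriction of $\Pi$, so ``bounded-degree trees are instances of the unbounded problem'' does not apply. Simulating an algorithm for $\Pi$ on the gadget tree also costs the $O(\log \deg)$ stretch. That loses a $\log n$ factor, which is enough to prevent pinning $\Theta(n^{1/k})$.

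\textbf{The missing idea.} A high-degree node never has to evaluate its constraint with bounded memory. It evaluates it internally and exactly, and afterwards only \emph{virtually forgets} children. The paper does this as follows.
\begin{enumerate}
\item It runs the subset construction $\mathcal{A}_{lab}$, whose states (``types'') are sets of states still extendable, also on partially labeled subtrees.
\item It uses Lemma~\ref{lem:semibase}: every multiset in a semilinear set contains a bounded-size sub-multiset in the same set, namely a base vector. So once a node knows its children's types, it can drop all but $\Delta_R$ of them, or $\Delta_C$ when it is a direct context on a path, without changing its type (Lemmas~\ref{lem:redtree} and~\ref{lem:contypes}).
\item Types are only computable bottom-up, so this cannot be a black-box reduction. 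Instead the paper compares with the \emph{restriction} $\Pi_{\mathcal{A},\Delta}$ of the same problem to degree at most $\Delta$, so lower bounds transfer for free.
\item It extracts an $\infty$- or $k$-feasible function from the Chang--Pettie/Chang classification of $\Pi_{\mathcal{A},\Delta}$.
\item It runs a rake-and-compress layered algorithm. The feasible function pre-commits labels in the middle of each virtually degree-reduced compress path, which breaks the long bottom-up dependency chains.
\item A final top-down phase labels all children, including the forgotten ones, using the choice-function property of $\mathcal{A}_{lab}$.
\end{enumerate}

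\textbf{On Step 1, you are more pessimistic than necessary.} Your swap argument becomes the paper's Lemma~\ref{lem:refine} once you add one observation. A swap can change verdicts deep inside the subtree, but in a correct solution every node more than $k$ below the root of any subtree is already accepted. Hence two such subtrees whose roots have isomorphic $(2k+2)$-neighborhoods are $\equiv_{\mathcal{G}}$-equivalent. The minimal-automaton state is then computable in $2k+2$ rounds, via bounded-height representatives. No pumping argument and no $O(T(n))$-neighborhood is needed.
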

The proof of this theorem takes up the majority of this work. The result is achieved by solving what at first glance seems to be a harder problem:
computing \emph{both} a solution of $\Pi$ and a one-round verifiable certificate of the
correctness of the solution (Theorem~\ref{thm:autogap} + Lemma~\ref{lem:labelauto}); then showing that this new problem is not actually harder, as
by choosing our certificates carefully, we can compute them from a valid solution with a
constant-round algorithm (Theorem~\ref{thm:gap_jump}).

It is known that a $\mathsf{MSO}$-definable property on trees can be certified with constant-size
certificates \cite{feuilloleyMSOcert2022}, and this can easily be extended to
$\mathsf{PMSO}$-definable properties on rooted trees; the certificates are the states of a specific
type of bottom-up automata called rational multiset tree automata (RatMTA). These automata traverse a rooted tree from the leaves up to the
root -- a \emph{run} is a labeling which assigns to each node the state the automaton was
in when it reached that note. Seidl, Schwentick, and Muscholl \cite{seidlPresburger2003}
proved that a property defined by a $\mathsf{PMSO}$ formula $\varphi$ is also recognized by a
rational multiset tree automaton $\mathcal{A}$ -- meaning for every tree $T$ which
satisfies $\varphi$, there is a run of $\mathcal{A}$ on $T$ which gives an \emph{accepting}
state at the root.
\subparagraph*{Automata certification problems}
As the first result, we show that \emph{automata certification} problems on rooted trees
exhibit polynomial gaps: for any RatMTA $\mathcal{A}$ on $\Sigma$-labeled rooted trees we define the associated automata
certification problem of computing a run of $\mathcal{A}$ on any instance $\Sigma$-labeled
rooted tree.
\begin{restatable}{theorem}{autogapthm}\label{thm:autogap}
    Let $\mathcal{A}=(\Sigma,Q,\delta,F)$ be a rational multiset
    tree automaton, and let $\Pi_{\mathcal{A}}$ be the problem of
    computing a run of $\mathcal{A}$ on $\Sigma$-labeled rooted trees; then one of the following holds:
    \begin{itemize}
        \item $\Pi_{\mathcal{A}}$ is unsolvable, or
        \item $\Pi_{\mathcal{A}}$ has deterministic and randomized $\mathsf{LOCAL}$ complexity $O(\log
        n)$, or
        \item $\Pi_{\mathcal{A}}$ has deterministic and randomized $\mathsf{LOCAL}$ complexity
        $\Theta(n^{1/k})$ for some $k\in\mathbb{N}_{\geq 1}$.
    \end{itemize}
    Additionally, there is a sequential algorithm that can decide based only on the
    description of $\mathcal{A}$ which of those cases applies, and which $k$ applies in the last
    case.
\end{restatable}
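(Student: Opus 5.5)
Following the abstract's second item, I would reduce $\Pi_{\mathcal{A}}$ to a problem on rooted trees whose degree is bounded by a constant $\Delta_{\mathcal{A}}$ depending only on $\mathcal{A}$, and then invoke the known classification of $\mathsf{LCL}$s on bounded-degree rooted trees \cite{balliuRooted2022,changHierarchy2019,treesChang2020}. That classification already provides the $O(\log n)$ versus $\Theta(n^{1/k})$ dichotomy, the equality of deterministic and randomized complexity, and decidability. On bounded-degree rooted trees, computing a run of $\mathcal{A}$ is itself an $\mathsf{LCL}$: a node checks that its state lies in $\delta$ applied to its label and the multiset of its children's states, which is a finite check once the degree is bounded. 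The content of the proof is therefore the claim that $\Pi_{\mathcal{A}}$ on unbounded-degree trees is asymptotically equivalent to $\Pi_{\mathcal{A}}$ restricted to degree at most $\Delta_{\mathcal{A}}$.

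The key tool is a pumping property of rational multiset languages. Each transition condition is a semilinear set of state-count vectors in $\mathbb{N}^{Q}$, given as a Presburger formula. So there is a constant $c$ such that, for any count vector, the counts exceeding $c$ can be shrunk, or grown by multiples of some period $p$, without changing membership, as long as they stay above a threshold. This is the standard ultimately-periodic behaviour of Presburger-definable sets of naturals, applied coordinatewise modulo linear dependencies; I would treat this step carefully. From it:
\begin{itemize}
\item The \emph{lower bound} direction is easy. A bounded-degree tree is a special instance, so $\Pi_{\mathcal{A}}$ on general trees is at least as hard.
\item The \emph{upper bound} direction is the substance. Given an algorithm $\mathcal{B}$ for degree at most $\Delta_{\mathcal{A}}$, run in time $T(n)$, each high-degree node $v$ groups its children.
\end{itemize}
For the upper bound, I would not work with children individually. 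Instead, cluster children by an equivalence describing which states their subtrees can admit: a "type" given by the set of achievable (state, constraint) behaviours, of which there are finitely many. A node of large degree then keeps up to $c+p$ representatives of each type and treats the rest as pumped copies. The representatives are simulated in a virtual bounded-degree tree on which $\mathcal{B}$ runs. The pruned siblings copy the output states of a representative of the same type, and periodicity guarantees that the parent's condition remains satisfied.

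The main obstacle is that the type of a child's subtree is not locally computable: it depends on the whole subtree. So the algorithm cannot decide in $o(\text{depth})$ rounds which children are interchangeable. I would handle this with the rake-and-compress style decomposition from Appendix~\ref{app:decomposition}. Long paths and shallow pieces are processed level by level, and types are computed only for subtrees that are fully raked within the current budget. The remaining unresolved children of a node are few, since they lie on compress paths, and can be kept explicitly. The tree's size only shrinks under this replacement, so the virtual tree has $n'\le n$ nodes and the running time is $O(T(n))$ plus the $O(\log n)$ decomposition overhead, which is dominated by any $\Theta(n^{1/k})$ bound.

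Decidability is inherited. $\Delta_{\mathcal{A}}$ and the bounded-degree $\mathsf{LCL}$ are computed from the Presburger descriptions in $\delta$, and the bounded-degree classification is decidable.
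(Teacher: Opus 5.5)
Your outer structure matches the paper: reduce to $\Pi_{\mathcal{A},\Delta}$, get the lower bound by restriction, and inherit the classification and decidability. The upper-bound simulation, however, has a genuine gap. The virtual bounded-degree tree on which you run $\mathcal{B}$ as a black box is not locally computable. Whether a child is pruned depends on its type, and its type depends on its whole subtree, so a node cannot know its $T(n)$-neighbourhood in the virtual tree.

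Rake-and-compress does not repair this, for two reasons.
\begin{itemize}
\item Being removed by the decomposition does not make a node's type known. The lower endpoint of a compress path in $C_i$ can have its child in a \emph{higher} layer. The types of all path nodes then depend on structure processed later, and this dependency can chain through many layers, since types only propagate one hop per round (height time).
\item Unresolved children are not few. A node raked in $R_{i+1}$ can have arbitrarily many children, each heading a compress path of $C_i$ that ends above a large subtree. Keeping them ``explicitly'' gives unbounded degree.
\end{itemize}

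The missing idea is exactly what the paper does instead of a black-box simulation. It extracts an $\infty$- or $k$-feasible function from the classification of $\Pi_{\mathcal{A},\Delta}$. It degree-reduces the path nodes via \emph{direct context types}, using Lemma~\ref{lem:contypes} and $\mathrm{ForgetChildren}_C$; this needs the semilinearity of the set of direct contexts realizing a given type function. It then uses the feasible function to pre-commit states in the middle of each compress path. A pre-committed node has $\mathcal{A}_{lab}$-type $\{q\}$, with extendability guaranteed by feasibility, which cuts the dependency. Types can then be pushed up one layer in $O(\gamma+\ell)$ rounds. The number of layers must match the strength of the feasible function. This is why the $n^{1/k}$ case needs a $(\Theta(n^{1/k}),\ell,k)$-decomposition rather than your $O(\log n)$-layer one.

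Your copying step is also unsound. Semilinear sets are not coordinatewise ultimately periodic. For $\{(x,y):x=y\}$, the degree-splitting condition, changing a single coordinate by any amount, above any threshold, leaves the set. Moreover, $\mathcal{B}$'s choice of states for the representatives is uncontrolled, so adding copies of one representative's state need not add a period vector. Pruned siblings are also only of the same type, not isomorphic, so their subtrees need their own runs anyway.

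The paper avoids all of this in two ways. First, it uses only the base-vector property (Lemma~\ref{lem:semibase}: every member of a semilinear set contains a member of size at most $d$) on the transition sets of $\mathcal{A}_{lab}$, which preserves each node's type. Second, it never extends the bounded-degree labeling. Once all types are known, a top-down pass lets every labeled node choose states for \emph{all} its children, including forgotten ones, directly from their types (Lemma~\ref{lem:typeauto}). The only information imported from the bounded-degree world is the set of pre-committed labels.
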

We give an overview of the proof of Theorem~\ref{thm:autogap}, which takes up all of
Section~\ref{sec:auto_gap}. This proof is an adaptation of the $\omega(\log n)-n^{o(1)}$
and $\omega(n^{1/(k+1)})-o(n^{1/k})$ gap results for undirected trees from Chang and Pettie
\cite{changHierarchy2019,treesChang2020}; the central difference induced by the
unbounded-degree setting is the definition of \emph{classes} and \emph{types}. We quickly review the crucial
definitions from these proofs.
\begin{itemize}
    \item A \emph{partial} $Q$-labeling of a graph $G=(V,E)$ is a labeling with the symbols of
    $Q\uplus\{*\}$; nodes labeled with $*$ are considered unlabeled. A partial
    labeling $l':V\to Q\uplus\{*\}$ can be completed (or extended to a complete labeling) if there is a
    labeling $l:V\to Q$ which agrees with $l'$ on all nodes that are unlabeled;
    formally, if $l'(v)\neq *$ then $l'(v)=l(v)$.
    \item A \emph{pole} of an undirected tree $T$ is simply a special node; we refer to
    trees with one or two poles as \emph{unipolar} and \emph{bipolar} respectively.
    \item The \emph{class} of a partially labeled unipolar or bipolar tree with respect to a $\mathsf{LCL}$ problem $\Pi$ is the set of all labelings of the $r$-hop
    neighborhood(s) (where $r$ is the verifiability radius of $\Pi$) of the pole(s) which can be
    extended to valid labelings extended to full labelings of the unipolar tree which are
    valid according to $\Pi$ and agree with the partial labeling.
    \item A bipolar tree $H$ can be interpreted as a string of unipolar trees with a path
    connecting their poles, where the first and last node of this path are the poles of
    $H$; the string of types of those unipolar trees is the \emph{type} of $H$. Note that
    two trees with the same type have the same class. 
\end{itemize}
The two central properties of these definitions are: one, we can \emph{replace} a unipolar
or bipolar graph that is connected to a larger graph only by the poles by another of the
same \emph{class} without impacting whether the partial labeling can be completed. Two,
there are \emph{finitely many} classes of both unipolar and bipolar trees: this is because
they are sets of $Q$-labelings of trees of height $r$ and bounded maximum degree.
These two properties can be used to define an \emph{automaton} on the classes of unipolar
trees, but the second one clearly does not hold in the unbounded degree setting.

However, in this proof, we are given a rational multiset tree automaton from the start: we
show how we can use it and adapt it to give a new automata-theoretic definition of \emph{types}, which act as
a generalization of the concept of both types and classes defined above -- satisfying both
the replacement and finiteness properties. 

In fact, we show another strong property of automata-theoretic types: they can be used to
reduce the degree of our instance, allowing us to use results from the $\mathsf{LCL}$
version of the proof directly. Specifically, we can determine a value $\Delta$ (which only
depends on the description of the rational multiset tree automaton we start with) such
that every \emph{type} of unipolar and bipolar trees -- which in this context are called
\emph{tree types} and \emph{context types} respectively -- contains a tree of maximum
degree $\Delta$; even stronger, for each unipolar and bipolar tree, we can find a
\emph{subtree} of maximum degree $\Delta$ of the same type.

We will use this degree reduction procedure to argue that our certification problem is not
harder than the corresponding problem on graphs of maximum degree at most $\Delta$ --
which is an $\mathsf{LCL}$, since our certification can be verified in one round by the
transition function. To do so, we apply the classification results from the Chang-Pettie proofs
indirectly: if we have a $O(\log n)$ or $O(n^{1/k})$ round algorithms for the restricted
$\mathsf{LCL}$ problem, then we have a so-called a \emph{feasible function} of
suitable strength, which
allows us to precompute labels for some parts of our tree. We use this feasible function
to describe a $O(\log n)$ or $O(n^{1/k})$ algorithm for the unbounded degree certification
problem. This structure is robust: both the Chang-Pettie classification and the algorithms
we describe can be implemented in both the deterministic and randomized $\mathsf{LOCAL}$
model.

We give further details on this proof in the paragraphs below.
\subparagraph*{Automata-theoretic types}
Throughout this paper, we refer to the \emph{subtree rooted at a node} $v$ of a rooted
tree; this is the subtree induced by $v$ and all of its descendants. We can interpret a
subtree rooted at $v$ as a unipolar tree with a pole at the root $v$ -- in general, in
this rooted framework, we require unipolar trees to have their pole at the root. 

Conceptually, we define the \emph{type} of a node $v$ of a rooted tree as the set of all states of
$v$ that can be extended to a valid (but not necessarily accepting) run of the subtree
rooted at $v$ -- in practice, we define the type of a node $v$ of an unlabeled tree as the state it is assigned
by the deterministic and complete version of the automaton $\mathcal{A}$ we start from, called
$\mathcal{A}_{det}$. If the set of states of $\mathcal{A}$ is $Q$, the set of states of
$\mathcal{A}_{det}$ is $2^Q$, the set of all subsets of $Q$. Note that when we talk about
``unlabeled'' and ``partially labeled'' trees here, we are referring to output labels, i.e.
automaton states: the trees might still be labeled by input labels in $\Sigma$.

We extend the definition of $\mathcal{A}_{det}$ to work on rooted trees which are
\emph{partially labeled} with states of $\mathcal{A}$, such that a node that already has a
label $q$ is either assigned state $\{q\}$ (if there is a valid labeling of its rooted
subtree which agrees with the partial labeling) or state $\emptyset$ (if there is no such
labeling) -- we call this new automaton $\mathcal{A}_{lab}$, and define the \emph{type} of
a partially labeled rooted tree as the state that $\mathcal{A}_{lab}$ assigns to its root
(which is unique, as $\mathcal{A}_{lab}$ is deterministic and complete). The full
definitions of $\mathcal{A}_{det}$ and $\mathcal{A}_{lab}$ can be found in
Section~\ref{par:types}; the relation between this definition and the conceptual
definition we provide at the start can be seen in Lemma~\ref{lem:typeauto}.

What about bipolar trees? In this rooted framework, we require a bipolar tree to have a
\emph{consistently directed} path between the poles -- we can then relate bipolar trees to the automata-theoretic concept of a
\emph{context}: a rooted tree with one specifically marked \emph{leaf}, called a
\emph{variable node}. We denote a context as $C[x]$, where $x$ represents the variable
node -- given a rooted tree $T$, we can substitute $T$ into $C[x]$ (denoted as $C[T]$) by
identifying the variable node of $C[x]$ with the root of the tree $T$. We can interpret a
context as a bipolar tree with one pole at the root and one at the variable node.

Contexts allow us to define the \emph{replacement} properties of types, which are
discussed extensively in Section~\ref{sec:autypes} -- roughly, thanks to the
$\mathcal{A}_{lab}$ automaton for partially labeled rooted trees we constructed, we can replace a rooted (unipolar) tree with
a rooted (unipolar) tree of the same type in every context (overall graph) without
affecting whether the partial labeling can be completed. For each context $C[x]$ we can find a \emph{function} between types
$\mathrm{Type}(T)\mapsto\mathrm{Type}(C[T])$ -- the \emph{type} of a context (bipolar tree) can be
identified with this function, of which there are finitely many. We show that we can
replace a context (bipolar tree) by one of the same type in every context composition
(overall graph) without affecting whether the partial labeling can be completed.

Large contexts can be hard to efficiently analyze, so we restrict to \emph{direct contexts}, meaning
contexts where the variable node is an immediate child of the root; every context $C$ can be
expressed as a string of direct contexts $C_1,\ldots,C_k$, each corresponding to a node
along the directed path from root to variable node, such that $C=C_1\circ\ldots\circ C_k$.
This roughly corresponds to the string of \emph{classes} of unipolar trees which defines a
bipolar trees in the Chang-Pettie definitions.
\subparagraph*{Degree reductions}
Rational multiset tree automata have transitions defined by \emph{rational} sets of multisets. Rational
sets of multisets can be equivalently defined as sets of multisets definable by a
Presburger formula, or as \emph{semilinear} sets of multisets -- finite unions of linear
sets. A linear set of multisets is a set of the form
$$L=M_0+\mathbb{N}M_1+\ldots+\mathbb{N}M_k,$$
where $M_0,M_1,\ldots,M_k$ are multisets and $+$
represents multiset union. $M_0$ is the \emph{base multiset} or \emph{base vector}.

We focus on tree types and direct context types; the type of a partially labeled tree or direct context is fully defined by the
multiset of tree types of its (non-variable) children, the input label of its root, and
the partial output label of its root. We can then identify types
with sets of multisets (containing all multisets that appear in some tree or context of
that type) -- for tree types, the set corresponding to the root input-output label pair $(i,s)$ state $q$ is the
semilinear set $\delta_{lab}((i,s),q)$. We prove in Lemma~\ref{lem:contypes} that the set
representing a context type is also semilinear.

Semilinear sets have a useful property (Lemma~\ref{lem:semibase}): there is a value $d$ such that for
every multiset $M$ in the semilinear set there is an element $M'\subseteq M$ also in the
semilinear set of size $\leq d$ -- $d$ is simply the length of the largest base multiset of
a linear component, as the base multiset of a linear set is contained in all multisets of
a linear set.

By using this property a node $v$ of a rooted tree which knows the types of all its
children can compute its own tree type, and find a subset of its children that would
define the same tree type of size $\leq d$. We can set $\Delta_R$ as the largest $d$ over
all possible types, and define a procedure $\mathrm{ForgetChildren}_R(v)$ in which $v$ ``forgets'' about the children
which are not required to define its type, reducing its degree to $\leq\Delta_R+1$ (as the
possible edge between $v$ and its parent is not affected) without changing its type.

Similarly, if a node $v$ knows the types of all its children but one, it can treat itself
as a context and the unknown child as a variable node; we can then repeat the procedure
above for context types, define $\Delta_C$ as the maximum $d$ over all finite context
types, and define a procedure $\mathrm{ForgetChildren}_C(v)$ in which
$v$ ``forgets'' about some of its children and reduces its degree to $\leq\Delta_C+2$ (as
the edges between $v$'s parent, $v$ and the unknown child are unaffected) without changing
its type.

The finite representation of types in general is discussed in Section~\ref{sec:fintypes}; while the
degree reduction procedure is detailed later in Section~\ref{par:degred}.
\subparagraph*{Feasible functions}
Thanks to the degree reduction procedures we defined, we are able to utilize tools that
were defined to solve $\mathsf{LCL}$ problems on bounded-degree graphs -- in particular,
\emph{feasible functions}. Roughly speaking, a function is feasible on \emph{unlabeled}
(with regards to the output labeling) bipolar trees if it allows us to choose labels for the middle nodes of ``long enough'' bipolar
trees in a way that can always be extended to a full labeling of any \emph{unlabeled} tree
containing it. However, if we want to apply this feasible function multiple times, we need
to define it on \emph{partially labeled} bipolar trees -- this is not always possible for
all partially labeled bipolar trees. Intuitively, every application of the feasible
function might create new types of bipolar tree -- a function is $k$-feasible roughly if
it can be applied $k$ times in a row, and $\infty$-feasible if it is feasible over all
partially labeled bipolar trees. A more detailed definition can be found in Section~\ref{par:feasfunc}.

Now, consider the problem $\Pi_{\mathcal{A},\Delta}$, which is the restriction of
$\Pi_{\mathcal{A}}$ to rooted trees of maximum degree at most
$\Delta=\max\{\Delta_R+1,\Delta_C+2\}$. Clearly, $\Pi_{\mathcal{A},\Delta}$ cannot be harder than
$\Pi_{\mathcal{A}}$, as any algorithm for $\Pi_{\mathcal{A}}$ needs to solve all instances
of $\Pi_{\mathcal{A},\Delta}$.

We convert $\Pi_{\mathcal{A},\Delta}$ to a problem of the same complexity 
on undirected trees (see Appendix~\ref{app:conversion} for details) and use the following results from the
Chang-Pettie proofs: $\Pi_{\mathcal{A},\Delta}$ can be solved in $O(\log n)$ rounds if and
only if a $\infty$-feasible function exists for $\Pi_{\mathcal{A},\Delta}$, and similarly
$\Pi_{\mathcal{A},\Delta}$ can be solved in $O(n^{1/k})$ rounds if and only if a
$k$-feasible function exists for $\Pi_{\mathcal{A},\Delta}$. We argue that we can use
these feasible functions to obtain $O(\log n)$ and $O(n^{1/k})$ algorithms for
$\Pi_{\mathcal{A}}$ as well -- this is roughly equivalent to claiming that we can use an
algorithm for $\Pi_{\mathcal{A},\Delta}$ to solve $\Pi_{\mathcal{A}}$, but without being
able to treat the algorithm as a black-box.
\subparagraph*{Algorithm}
We describe the algorithm used to solve $\Pi_{\mathcal{A}}$; the technical description, as
well as correctness and complexity analysis, can be found in Section~\ref{par:algorithm}.

During the preprocessing step, we can compute a description of $\mathcal{A}_{lab}$, as
well as the value of $\Delta$ by brute-force enumeration of the
finitely many tree types and context types. Then we can compute the exact $\mathsf{LCL}$
description of $\Pi_{\mathcal{A},\Delta}$ over undirected trees, and use the results from
Chang and Chang-Pettie to decide its complexity and explicitly compute a suitable feasible
function $f$ -- including the value $\ell$, which depends only on the description of
$\Pi_{\mathcal{A},\Delta}$, such that $f$ is feasible for all bipolar trees of length $\in[\ell,2\ell]$.

We compute a parametrized decomposition of our unlabeled rooted tree instance $T=(V,E)$
with parameters $\gamma,\ell,L$, which is an ordered partition of the vertex set $V$ in
$$R_1<C_1<R_2<\ldots<R_L<C_L<R_{L+1}$$
with the following properties:
\begin{enumerate}
    \item for every $i\in[1,\ldots,L+1]$, the subgraph induced by $R_i$ is a forest of rooted tree
    of height at most $\gamma$ such that only the parent of the roots might be in a higher
    layer, and
    \item for every $i\in[1,\ldots,L]$, the subgraph induced by $C_i$ is a forest such
    that every connected component is a directed path of length $\in[\ell,2\ell]$, and
    only the starts/ends might have a parent/child in a higher layer (respectively).
\end{enumerate}
Aside from the parameters of this decomposition, the algorithm is the same for both the
$O(\log n)$ case and $O(n^{1/k})$ case -- we apply the feasible function on the layers
$C_i$ iteratively, so the number of layers cannot be higher than the ``strength'' of the
feasible function. If $f$ is a $\infty$-feasible function (equivalently,
$\Pi_{\mathcal{A},\Delta}$ has complexity $O(\log n)$) we compute a $(1,\ell,O(\log n))$
decomposition in $O(\log n)$ rounds; if $f$ is a $k$-feasible function (equivalently,
$\Pi_{\mathcal{A},\Delta}$ has complexity $O(n^{1/k})$) we compute a
$(O(n^{1/k}),\ell,k)$-decomposition in $O(n^{1/k})$ rounds -- details on how we compute
the decompositions can be found in Appendix~\ref{app:decomposition}.

Then, we can compute the types (according to $\mathcal{A}_{lab}$) of all nodes going
``up'' the layers: we can compute the types of a layer $R_i$ in $O(\gamma)$ rounds by
brute force, as long as we know the types of all nodes in the previous layers -- as we do,
each node applies the procedure $\mathrm{ForgetChildren}_R$ to virtually reduce its own
degree to $\leq\Delta$. When we reach a $C_i$ layer, assuming we know the types of all nodes of previous
layer, we can compute the types of nodes of $C_i$ as direct contexts -- we apply the
degree reduction procedure to them as well, virtually reducing their degree to $\leq
\Delta$. Now each connected component of $C_i$ represents a bipolar tree of length $\in[\ell,2\ell]$ and maximum
degree $\leq\Delta$: we can apply the feasible function $f$ for $\Pi_{\mathcal{A},\Delta}$
to assign labels to the middle nodes of the bipolar tree in a way that will not result in
an unextendable labeling -- which tells us their type according to $\mathcal{A}_{lab}$. Then we
can compute the types of all nodes of the bipolar tree which are \emph{ancestors} of the labeled
nodes, while nodes which are \emph{descendants} can wait to compute their type until the
last layer, since their parents do not need to wait for them anymore. This takes $O(\ell)$
rounds, and we repeat the whole process for each layer, taking $O((\gamma+\ell)L)$ rounds.

Once we have computed the types of \emph{all} nodes in the (now partially labeled) graph
according to $\mathcal{A}_{lab}$, we can choose an accepting state for the root from its
type (this must exist due to the feasibility of the function -- see
Lemma~\ref{lem:correct}). Then all nodes that have selected a state (through the
prelabeling or by being the root) can choose a label for their children, \emph{including}
the children that were temporarily ``forgotten'' to reduce the degree, which then choose
a label for their children, and so on -- we can show that every node is the
$O((\gamma+\ell)L)$-hop descendant of a prelabeled node, so this takes $O((\gamma+\ell)L)$
rounds and achieves a valid labeling. The $O((\gamma+\ell)L)$ complexity corresponds to
$O(n^{1/k})$ for $\gamma\in O(n^{1/k}),\ell\in O(1), L=k$ and to $O(\log n)$ for
$\gamma,\ell\in O(1), L\in O(\log n)$.

This completes the proof of Theorem~\ref{thm:autogap}.
\subparagraph*{Encoding labels as states}
Because every $\mathsf{PMSO}$ property can be recognized by a rational multiset tree
automaton, we could use Theorem~\ref{thm:autogap} to compute a certificate for an existing
solution -- we show that we can instead use it to compute \emph{both} a solution and a
certificate of its correctness. Specifically, we will find an automaton which does \emph{not}
certify the property ``is this labeling a valid
solution for $\Pi$?'' but instead ``does there \emph{exist} a valid labeling of this graph
according to $\Pi$?''. The details of this construction can be found in
Section~\ref{sec:labelauto}; the main result from that section that we will use to prove
Theorem~\ref{thm:gap} is Lemma~\ref{lem:labelauto}.
\begin{restatable}{lemma}{labelautolem}\label{lem:labelauto}
    Let $\mathcal{A}=(\Sigma_{in}\times\Sigma_{out},Q,\delta,F)$ be a rational multiset tree automaton
    recognising a $\Sigma_{in}\times\Sigma_{out}$-labeled rooted tree property $\mathcal{G}$; then there is a RatMTA
    $\mathcal{A}'=(\Sigma_{in},\Sigma_{out}\times Q,\delta',\Sigma_{out}\times F)$ which recognises the
    language of all $\Sigma_{in}$-labeled rooted trees $T$ such that there exists a
    $\Sigma_{out}$-labeling of $T$ such that:
    \begin{itemize}
        \item $T$ labeled with the resulting $\Sigma_{in}\times\Sigma_{out}$-labeling is
        in $\mathcal{G}$ (is recognized by $\mathcal{A}$), and
        \item a valid run of $\mathcal{A}'$ consists of a valid $\Sigma_{out}$-labeling
        and a run of $\mathcal{A}$ verifying the correctness of this labeling.
    \end{itemize}
\end{restatable}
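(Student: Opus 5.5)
The construction is a direct \emph{projection} of $\mathcal{A}$ onto the input alphabet. The output label is moved from the input alphabet into the state, and the state is never tied to the actual $\Sigma_{out}$-labeling of the tree. The convention for a RatMTA is that $\delta(a,q)$ is a rational (equivalently semilinear, or Presburger-definable) set of multisets over $Q$. A run $\rho$ is valid when, for every node $v$, the multiset of states of $v$'s children lies in $\delta(\lambda(v),\rho(v))$, and it is accepting when the root's state is in $F$.

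We define $\delta'$ for $a\in\Sigma_{in}$ and $(b,q)\in\Sigma_{out}\times Q$ by
$$\delta'\bigl(a,(b,q)\bigr)=\pi^{-1}\bigl(\delta((a,b),q)\bigr),$$
where $\pi$ maps multisets over $\Sigma_{out}\times Q$ to multisets over $Q$ by forgetting the first coordinate of each element.

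The first step is to check that $\delta'$ is again rational. I would do this through the Presburger characterisation. Let $\varphi_{(a,b),q}(\bar x_{q'})_{q'\in Q}$ be a Presburger formula in the counts $x_{q'}$ of children in state $q'$. Take the formula in the variables $y_{(b',q')}$ obtained by substituting $x_{q'}=\sum_{b'\in\Sigma_{out}}y_{(b',q')}$. Presburger formulas are closed under substitution of linear terms, so this is again a Presburger formula, and it defines exactly the preimage under $\pi$. Alternatively, one can argue that each linear component $M_0+\mathbb N M_1+\dots$ pulls back to a finite union of linear sets, since each generator has finitely many lifts. The Presburger argument is cleaner, and I would use it.

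The second step is the correspondence of runs, proved in both directions by a node-by-node check.

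Suppose $\rho'=(\beta,\rho)$ is a run of $\mathcal{A}'$ on a $\Sigma_{in}$-labeled tree $T$. Let $T_\beta$ be $T$ with labels $(\lambda(v),\beta(v))$. At each $v$, the multiset of children's $\mathcal{A}'$-states lies in $\delta'(\lambda(v),(\beta(v),\rho(v)))$. By definition of $\delta'$, its $\pi$-image, which is the multiset of children's $\rho$-states, lies in $\delta((\lambda(v),\beta(v)),\rho(v))$. So $\rho$ is a valid run of $\mathcal{A}$ on $T_\beta$. If the root state lies in $\Sigma_{out}\times F$, then $\rho$ is accepting and $T_\beta\in\mathcal{G}$.

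Conversely, suppose $\beta$ is a labeling with $T_\beta\in\mathcal{G}$ and $\rho$ is an accepting run of $\mathcal{A}$ on $T_\beta$. Then the pair $(\beta,\rho)$ is an accepting run of $\mathcal{A}'$ by the same computation read backwards. This gives both the language equality and the second bullet: every valid run of $\mathcal{A}'$ decomposes as a $\Sigma_{out}$-labeling together with a run of $\mathcal{A}$ certifying it.

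The only step needing care is the rationality of $\delta'$, which I expect to be the main (though mild) obstacle. The Presburger substitution argument settles it, given the equivalence of rational, semilinear and Presburger-definable multiset languages already recalled in the paper.
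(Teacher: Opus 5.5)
Your proposal is correct and follows the paper's proof. The paper defines $\delta'(s_i,(s_o,q)):=X_\Sigma(\delta((s_i,s_o),q))$, which is exactly your preimage $\pi^{-1}(\delta((a,b),q))$; where the paper simply calls the run correspondence trivial, you spell out both directions. The only difference is the rationality argument: the paper takes a finite multiset automaton for $\delta((s_i,s_o),q)$ and replaces each $q$-transition with transitions on all symbols $(s_o,q)$, whereas you substitute $x_{q'}=\sum_{b'}y_{(b',q')}$ into a Presburger formula, and both arguments work.
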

Together, Theorem~\ref{thm:autogap} and this result show that the polynomial gaps hold for problems
requiring us to compute both a solution and a certificate of the correctness of this
solution. This problem is at least as hard as computing \emph{only} a solution -- the last
remaining step is showing that they actually have the same complexity.
\subparagraph*{Minimal automaton}
Note that up to this point, we have required the correctness of our solution 
to be described by a $\mathsf{PMSO}$ sentence (equivalently, certifiable by a rational multiset tree
automaton), but not locally verifiable; let us now assume we have a property $\mathcal{G}$
which is \emph{both} $\mathsf{PMSO}$-definable and locally decidable with radius $r$.

By the result of Seidl, Schwentick, and Muscholl \cite{seidlPresburger2003} we are
guaranteed that \emph{some} rational multiset tree automaton exists which can recognize
$\mathcal{G}$. However, not \emph{every} RatMTA recognizing $\mathcal{G}$ will allow us to
efficiently compute certificates -- we provide an example below.

Let $\Pi$ be a very simple problem: we want to label a rooted tree without inputs with the label
$T$. No other conditions are given, and there is a $0$-round $\mathsf{LOCAL}$
algorithm that outputs $T$ on each node -- which is trivially $O(\log n)$. We define a
rational multiset tree automaton that recognises the language $\mathcal{T}$ of $T$-labeled
rooted trees: $\mathcal{A}_T=(\{T\},\{0,1\},\delta_T,\{0,1\})$ where
$\delta(T,0)=\mathcal{M}(\{1\})$ (all multisets composed only of $1$s) and
$\delta(T,1)=\mathcal{M}(\{0\})$ (all multisets composed only of $0$s). Those sets are
semilinear (in fact, they are linear) and so this is a valid RatMTA; a run of
$\mathcal{A}_T$ is a $2$-coloring of the rooted tree, which always exists for every rooted
tree -- since the automaton accepts every state, this automaton accepts every tree.
However, computing a $2$-coloring of a rooted tree takes $\Omega(n)$ rounds in the
$\mathsf{LOCAL}$ model \cite{balliuRooted2022}. The result we were trying to achieve --
computing a solution \emph{and} a certificate is not harder than computing \emph{only} a
solution -- is then false for this specific choice of automaton.

Our example purposefully constructed a wasteful automaton, and we can see that the
language $\mathcal{T}$ could be more easily recognized by a one-state automaton -- then,
we could compute a run of it in $0$ rounds. To avoid such wasteful constructions, we need to work on the
\emph{simplest} possible automaton; this corresponds to the automata-theoretic concept of 
\emph{minimal} (by number of states) \emph{deterministic} and \emph{complete} automaton. For classical
(ranked, ordered) bottom-up tree automata, the existence and uniqueness of the minimal automaton is
guaranteed by the Myhill-Nerode theorem: we prove a version of it for rational multiset
tree automata on unranked, unordered trees in Appendix~\ref{app:myhill}.

The states of the minimal automaton are in bijection with the equivalence classes of a relation depending only on the property,
denoted $\equiv_{\mathcal{G}}$, which is the coarsest \emph{context-free} equivalence; two
trees $T_1$ and $T_2$ are equivalent iff for every context $C[x]$ we have
$C[T_1]\in\mathcal{G}\Leftrightarrow C[T_2]\in\mathcal{G}$. In simpler words,
$T_1\equiv_{\mathcal{G}} T_2$ if we can replace a rooted subtree $T_1$ by $T_2$ while
preserving the correctness of the labeling; as we have observed during the previous
proofs, this is in general true for trees of the same \emph{type}, meaning the same state
of the root, so the coarsest (minimal number of equivalence classes)
such equivalence relation corresponds to the automata with the minimal number of states.
\subparagraph*{Locally computing states}
This replacement property is in fact the same that underlies the proof for the bounded
degree gaps; however, in that case, two partially labeled unipolar trees are equivalent roughly speaking if the $2r$-hop
neighborhoods of their poles are isomorphic \emph{and} any partial labeling of this
isomorphic neighborhood can be extended in one tree if and only if it can be extended in
the other. In the unbounded degree setting, there are infinitely many equivalence classes
for this relation -- however, \emph{because} it satisfies the replacement property, this
neighborhood-based equivalence is a \emph{refinement} of the relation
$\equiv_{\mathcal{G}}$ defined above; by restricting to fully labeled trees, this implies that \emph{two correctly labeled trees
with isomorphic $2r$-hop neighborhoods of the root have the same type} in the minimal
automaton.

We can then build a very simple reduction from our original problem (``find a correct
solution for $\Pi$'') to the following problem: find \emph{both} a correct solution for
$\Pi$ and a run of the \emph{minimal} automaton certifying that solution. By
Lemma~\ref{lem:labelauto} we know this can be expressed as a single automaton
certification problem on unlabeled trees, and by Theorem~\ref{thm:autogap} we know these
problems have polynomial gaps.
\begin{restatable}{theorem}{gapjumpthm}\label{thm:gap_jump}
    Let $\Pi=(\Sigma_{in},\Sigma_{out},\mathcal{G})$ be a Local $\mathsf{PMSO}$ problem with checkability radius $k$ and let
    $\mathcal{A}=(\Sigma_{in}\times \Sigma_{out},Q,\delta,F)$ be the \emph{minimal} deterministic and complete RatMTA
    recognising $\mathcal{G}$. Denote $\mathcal{G}_Q$ the graph language of $(\Sigma_{in}\times \Sigma_{out}\times
    Q)$-labeled rooted trees such that for each $G\in\mathcal{G}_Q$:
    \begin{enumerate}
        \item the underlying $\Sigma_{in}\times \Sigma_{out}$-labeled graph of $G$ is in $\mathcal{G}$, and
        \item the underlying $Q$-labeled graph of $G$ is the run of $\mathcal{A}$ on its
        underlying $\Sigma_{in}\times \Sigma_{out}$-labeled graph.
    \end{enumerate}
    Then given a correct solution to $\Pi$, we can compute the run of $\mathcal{A}$ which
    accepts it in $2k+2$ rounds; equivalently,
    $\Pi$ and $\Pi'=(\Sigma_{in},\Sigma_{out}\times Q,\mathcal{G}_Q)$ have the same asymptotic complexity.
\end{restatable}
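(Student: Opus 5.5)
We will show that the state of the minimal automaton at a node $v$ of a correctly labeled tree is determined by the labeled $(2k+1)$-hop downward neighborhood of $v$. Each node gathers this neighborhood and applies a fixed finite lookup, which takes $2k+2$ rounds. The equivalence of complexities then follows at once. Any algorithm for $\Pi'$ yields one for $\Pi$ by projecting away the $Q$-component. Conversely, an algorithm for $\Pi$ running in $T(n)$ rounds, followed by the $(2k+2)$-round certification, solves $\Pi'$ in $T(n)+O(1)$ rounds.

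\textbf{Key step.} Let $T_1,T_2$ be correctly labeled rooted trees, or more precisely rooted subtrees $T_1=T|_v$ and $T_2=T'|_{v'}$ of trees in $\mathcal{G}$, whose roots have isomorphic labeled $2k$-hop (downward) neighborhoods. We claim $T_1\equiv_{\mathcal{G}}T_2$. Recall from the Myhill--Nerode characterization (Appendix~\ref{app:myhill}) that the states of the minimal deterministic complete automaton are the $\equiv_{\mathcal{G}}$-classes, so the claim gives $\mathcal{A}(T_1)=\mathcal{A}(T_2)$.

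To prove the claim, take any context $C[x]$ with $C[T_1]\in\mathcal{G}$ and show $C[T_2]\in\mathcal{G}$. Since $\mathcal{G}$ is locally decidable with radius $k$, membership is decided by the $k$-hop neighborhoods of all nodes. Split the nodes of $C[T_2]$ into three groups.
\begin{itemize}
\item Nodes of $T_2$ at depth $\ge k$ below its root: their $k$-ball lies inside $T_2$, and it is locally correct because $T_2$ is a subtree of a correct tree $T'\in\mathcal{G}$.
\item Nodes of $C$ at distance $>k$ from the root of $T_2$: their $k$-ball is identical to the one in $C[T_1]$.
\item Nodes within distance $k$ of the root of $T_2$, on either side: their $k$-balls lie within the $2k$-ball of that root, where the two trees agree.
\end{itemize}
For the third group a little care is needed. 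The $k$-ball of such a node can reach up into $C$ and back down into other branches, but it only enters the $T$-part through the root. Hence it sees at most $k$ levels down from the root of $T_2$ along any path, which is covered by the isomorphic neighborhoods. The upshot is that every $k$-ball of $C[T_2]$ is isomorphic to some $k$-ball of $C[T_1]$ or of $T'$, both correct trees, so the local verifier accepts everywhere.

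\textbf{Main obstacle.} The delicate point is making this rigorous for the precise notion of local decidability in Section~\ref{sec:def}. The verifier may use IDs or the value of $n$, and the rootedness of the tree (knowing the direction to the parent) matters. If the local verifier is only required to accept exactly the members of $\mathcal{G}$, I would first argue that an accepting verdict depends only on the labeled rooted $k$-ball, up to IDs, by swapping ID assignments. The cut-and-paste argument is then carried out on isomorphism classes of balls.

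\textbf{The lookup.} There are infinitely many neighborhood types because the degree is unbounded, so the map from neighborhoods to states is not literally a finite table. What the claim gives is a well-defined function: the node evaluates $\mathcal{A}$ on any correct tree completing its observed neighborhood, for instance on a canonical one. Since the LOCAL model allows unbounded local computation, this suffices. Gathering the $2k$-ball plus one round to learn the parent/child orientation gives $2k+1\le 2k+2$ rounds.
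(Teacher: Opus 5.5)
Your proposal is correct. Its core is the same as the paper's Lemma~\ref{lem:refine}: a cut-and-paste argument with the radius-$k$ verifier shows that a bounded downward neighborhood of the root refines $\equiv_{\mathcal{G}}$. The minimal automaton's state is therefore a function of what a node sees in $O(k)$ rounds. The differences are in how you set this up and in how the lookup is handled.

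\textbf{The equivalence.} The paper defines $\equiv_{\Pi}$ on \emph{all} labeled rooted trees. It uses the $(2k+2)$-neighborhood of the root together with a bit recording whether the tree is $k$-away from happy. You restrict from the start to rooted subtrees of trees in $\mathcal{G}$, which is the only case the algorithm needs. This makes the happiness bit unnecessary: deep nodes of $T_2$ are accepted simply because their views already occur in $T'$.

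\textbf{The lookup.} This is the more substantial difference. The paper spends a separate lemma proving that the map $f$ from neighborhoods to states is \emph{computable}. For every (label, state) pair it precomputes a representative of degree $\le\Delta$ and bounded height, using $\mathrm{ForgetChildren}$. It then tries the finitely many state assignments to the leaves of the observed neighborhood. You instead appeal to unbounded local computation, which is all the LOCAL complexity statement requires.

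If you want computability anyway, your version gives it cheaply. $\mathcal{G}$ is decidable, and the node's actual subtree $T|_v$ is itself a witness. So enumerating pairs $(T',u)$ with $T'\in\mathcal{G}$ until the subtree at $u$ matches the observed neighborhood is a terminating search. What the paper's construction buys over this is an a priori bound on the search, at the cost of the automaton machinery.

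\textbf{Two small remarks.}
\begin{itemize}
\item How many levels below $v$ you must gather depends on whether a $k$-round view includes the degrees and parent pointers of nodes at distance exactly $k$; you may need $2k+1$ rather than $2k$. This is presumably why the paper takes $2k+2$, and you are within budget either way.
\item Your cut-and-paste needs exactly the paper's $\mathsf{LD}^*$ convention: the verifier depends on neither IDs \emph{nor} $n$. Swapping ID assignments, as you suggest, would not rescue a verifier that depends on $n$, because $C[T_1]$, $C[T_2]$ and $T'$ generally have different sizes. It is the definition that excludes this case, not the ID argument.
\end{itemize}
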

Together Theorem~\ref{thm:autogap}, Lemma~\ref{lem:labelauto} and
Theorem~\ref{thm:gap_jump} form a full proof of Theorem~\ref{thm:gap}.
\subsection{Open questions}
The central result of this paper is a step towards a possible generalization for $\mathsf{LCL}$ problems based
on finite model theory and automata theory; however, the use of bottom-up tree automata
forces us to restrict the results to rooted trees. In Section~\ref{sec:axioms} we propose
a generalization of the $\mathsf{PMSO}$ logic -- usually defined only on rooted trees -- to
general graphs; however, without the directional input of the rooting, we cannot easily generalize
the automata-theoretic characterization as well. We conjecture that by using a suitable
generalization of bottom-up tree automata -- possibly inspired by \emph{alternating
distributed graph automata} defined by Reiter in \cite{reiterAutomata2015} -- we could
extend the gap results to undirected trees. We leave this for future work.

Similarly, it is possible that our choice of assumptions on local decidability could be
improved, giving us a larger class of problems exhibiting the same gaps;
$\mathsf{LD}^*$ is the weakest notion of local decidability, which does not allow the verifier to
know the total size of the graph or the values of the unique IDs. The graph size
especially needs to be heavily taken into account when proving the \emph{replacement}
properties of types, which are fundamental to the proof of Theorem~\ref{thm:gap_jump},
especially the degree reduction procedures. 

We speculate that stronger notions of local
decidability might actually falsify the gap result -- that is, that there is a
$\mathsf{PMSO}$-definable problem on rooted trees which can only be locally verified by an
algorithm which is aware of the size of the graph, and which has complexity $\Theta(n^r)$
for some $r\in (0,1)$ which is not of the form $r=1/k$.

Additionally, this paper focuses on the so-called ``polynomial'' gaps -- many questions
about the full complexity landscape of this class are still open. Are there any Local
$\mathsf{PMSO}$ problems of complexity between $\omega(1)$ and $O(\log^*n)$? Is there a
difference between the randomized and deterministic complexities of Local
$\mathsf{PMSO}$ problems on unbounded-degree rooted trees? Are other common techniques
used to prove results on $\mathsf{LCL}$ problems -- such as round elimination --
applicable to Local $\mathsf{PMSO}$ problems? We speculate that many such techniques can
be proven to work in suitably large classes of problems described by logical formulas,
establishing a class of \emph{meta-theorems} for the $\mathsf{LOCAL}$ model.
\section{Definitions, notation, and assumptions}\label{sec:def}
In this section we review some preliminaries on the $\mathsf{LOCAL}$ model and finite model theory. 
\subsection{Local computation}\label{sec:localmodel}
We refer to the standard $\mathsf{LOCAL}$ model as defined in \cite{linialLocality1992}.
Specifically, we refer to two variants of it: the first is the \emph{deterministic} $\mathsf{LOCAL}$ model, where
every node of the input graph $G$ is provided with a globally unique ID from the interval $[1,\ldots,n^c]$ where
$n$ is the number of nodes of $G$ and $c\geq 1$ is a known constant; additionally, each
node is aware of the exact value of $n$.

The second variant is the \emph{randomized} $\mathsf{LOCAL}$ mode,
where nodes are \emph{not} provided unique IDs, but instead access to a perfectly random
coin toss, where each coin is independent of the others; equivalently, each node is
provided an endless string of random bits. Nodes are still aware of the exact value of $n$, and the solution of the algorithm is required to be correct with probability
at least $1-\frac{1}{n}$ over the probability space of all possible coin results.

For most of this paper, we focus on \emph{node labeling} problems: we assume our graph is
simple trees, and that we only care about whether an edge exists or not; we do not get inputs
or assign outputs to them, except for a possible orientation. This is because -- up to some
graph operations that can be simulated internally in the nodes -- problems with inputs and
outputs on edges and half-edges can be transformed into problems on nodes only. See
Appendix~\ref{app:conversion} for a more detailed description of this construction.

Additionally, we discuss the issue of \emph{solvability}: throughout this paper, we assume a
problem is \emph{solvable} if it can be solved on \emph{every} instance of the considered
graph class. We denote a problem as \emph{weakly solvable} if it can be solved on all but
finitely many graphs, or equivalently if there exists a constant $n_0$ such that it can be
solved on all graphs with more than $n_0$ vertices -- this is the definition of
solvability that is often used in the literature. Observe however that a weakly solvable
problem can be converted into a solvable problem at the cost of increasing both the
complexity and verifiability radius (i.e. the complexity of a distributed verifier for the
solution; see definition below) by $n_0$, such that an algorithm can always recognize if
the instance given is unsolvable and output some error. If a $\mathsf{LOCAL}$ problem is
unsolvable for infinitely many instances -- equivalently, for at least one instance of
infinitely many graph sizes -- we say it is \emph{unsolvable}.
\subsection{Local verifiers}
We discuss local verification following the notation from
\cite{fraignaudIdentifiers2012,fraigniaudIdentifiers2013}, with one major difference: we
do not assume our instances are connected graphs. This is not relevant in the context of
undirected or rooted trees, as all trees are assumed to be connected; but it is relevant
to defining the class of local first-order problems. This is because connectivity is not
axiomatizable by a first-order sentence; when
working with monadic second-order or higher logic, connectivity can be required in the
problem description itself.
\begin{definition}
    An \emph{instance} is a pair $(G,\mathbf{x})$ where $G=(V,E)$ is a simple graph and $\mathbf{x}$ is
    a function $V\to\{0,1\}^*$ assigning to every node a binary input
    string, called a \emph{label}.
\end{definition}
\begin{definition}
    A \emph{graph property} (or \emph{graph language}) is a \emph{decidable} class of instances.
\end{definition}
\begin{definition}
    We define a \emph{generalised graph construction problem} as a tuple
    $\Pi=(\Sigma_{in},\Sigma_{out},\mathcal{G})$ where:
    \begin{itemize}
        \item $\Sigma_{in}$ is a finite set of input labels,
        \item $\Sigma_{out}$ is a finite set of output labels, and
        \item $\mathcal{G}$ is a property of $\Sigma_{in}\times\Sigma_{out}$-labeled graphs.
    \end{itemize}
    An \emph{instance} for $\Pi$ is a $\Sigma_{in}$-labeled graph $(G,\lambda_{in})$, and a
    \emph{solution} to $\Pi$ is a $\Sigma_{out}$-labeling $\lambda_{out}$ of $G$ such that
    $(G,\lambda_{in}\times\lambda_{out})$ is in $\mathcal{G}$.
\end{definition}
Each construction problem is naturally associated with a decision problem, that is
the one that checks whether $(G,\lambda_{in}\times\lambda_{out})\in\mathcal{G}$ (decides $\mathcal{G}$) as a
language; however, multiple construction problems can be associated with the same decision
problem.
\begin{definition}\label{def:distributedverifier}
    A \emph{distributed verifier} for a graph property $\mathcal{P}$ is a distributed algorithm
    $\mathcal{A}$ with output space $\{\text{``yes'',``no''}\}$ such that,
    for all graphs $G$ we have
    $$G \in \mathcal{P} \Leftrightarrow \forall v\in V\ \mathcal{A}(G,\mathbf{x},v)=\text{``yes''}$$
    meaning, $\mathcal{A}$ outputs ``yes'' on all nodes if and only if $G \in\mathcal{L}$.
\end{definition}
There has been a large volume of discussion about the role of identifiers and knowledge of
size$(G)$ have on the class of locally decidable problems; see
\cite{fraignaudIdentifiers2012,fraigniaudIdentifiers2013}. We choose to restrict local
verifiers in a way that is analogous to $\mathsf{LCL}$ problems: the allowed configurations, and so
the local verifier, do not depend on the ID assignment or on the knowledge of size$(G)$.
However, the algorithm can use the uniqueness of IDs (and not their value) to precisely
determine the topology of the graph; this is the deterministic \emph{anonymous} $\mathsf{LOCAL}$ model defined in \cite{fraignaudIdentifiers2012}.

Let $t:\mathbb{N}\to\mathbb{N}$ be a function; we define the class of \emph{Locally
Decidable} problems in time $t$ (denoted $\mathsf{LD}^*(t)$) as the class of all graph
properties that can be decided by a distributed verifier that takes at most $t(n)$
communication rounds, where $n$ is the size of the gaph. We define
$\mathsf{LD}^*(O(t)):=\bigcup_{c\in\mathbb{N}}\mathsf{LD}^*(c\cdot t)$; finally, we define
$\mathsf{LD}^*:=\mathsf{LD}^*(O(1))$ as the class of all problems that can be decided 
by a distributed verifier in constant rounds.
\subsection{Axiomatizable graph properties}\label{sec:axioms}
To work within a logic formalism, we must first establish a vocabulary and axioms for the
theory of graphs. A (finite, simple, unlabeled) graph is commonly represented as a pair $(V,E)$
where $V$ is a finite set of vertices/nodes and $E\subseteq V\times V$ is a set of edges;
we can interpret it as a \emph{structure} on an vocabulary containing a single binary symbol $E$,
where $E(x,y)$ means there is a (directed) edge from vertex $x$ to vertex $y$ (often denoted as
$xEy$ or $\mathrm{child}(x,y)$ when discussing rooted trees).

\emph{First-order logic} ($\mathsf{FO}$) allows us to build formulas using a chosen 
vocabulary, equality $(=)$, boolean symbols $(\lor,\land,\lnot)$ and quantifiers
$(\exists,\forall)$ only over the elements of the ``universe'' $V$ (i.e. the vertices).
Conversely, \emph{second-order logic} allows us to define second-order variables, which
are sets of tuples of elements of $V$, and quantify over them.

Second-order logic is more powerful in the sense that it can describe many more properties
of graphs; however, this expressive power comes at a cost, as full second-order logic can
express enough arithmetic to fall under G\"odel's famous incompleteness theorems.
Throughout this paper we instead work in \emph{monadic second-order logic}
($\mathsf{MSO}$), which is generated by the symbols of first-order logic plus quantifiers
only on \emph{sets} (i.e. 1-tuples) of variables. Importantly, this
does not allow us to quantify over pairs of vertices, i.e. to define sets of edges or
functions between vertices.

We say a graph property is \emph{definable} by a type of logic if there exists a single
\emph{sentence} (a formula where every variable is defined by a quantifier) in that logic
such that the property is the set of all graphs that satisfy this sentence; those are
called the \emph{models} of $\varphi$ and denoted $Mod(\varphi)$. Note that as most logic formalisms are
closed under conjunction, being defined by one or finitely many sentences is equivalent.

Some examples of first-order axioms are the graph being loopless, undirected, or of
bounded constant degree; some more complex properties, like being connected or being a
tree, can be described by monadic second-order logic but not first-order logic.

Additionally, we often work with graphs labeled by a finite alphabet $\Sigma$, which are
structures on a vocabulary containing the binary symbol $E$ and unary symbols
$\{S_s\}_{s\in\Sigma}$ such that $S_s(x)$ (denoted $x\in s$) represents the vertex $x$ being labeled $s$. To
make this a proper labeling, a monadic second-order axiom\footnote{As the second-order
variables here are unary relations in the vocabulary and not really variables, $\mathrm{Part}$ can also be written as a
first-order formula.} is required:
$\mathrm{Part}(\{S_s\}_{s\in\Sigma})$, where
\begin{align*}
    \mathrm{Part}(S_1,\ldots,S_k):=\forall x:\, \bigvee_{i\in[1,\ldots,k]} x\in S_1\ \ \land\qquad&(x \text{ is part of at least one set})\\
    \bigwedge_{i,j\in[1,\ldots,k],i\neq j} (x\in S_i)\to \lnot(x\in S_j)\qquad&(x\text{ is not in two sets at once})
\end{align*}
Note that this is a valid (finite) formula only if $\Sigma$ is finite. 
\begin{definition}
    We call \emph{Presburger} formulas over a set of integer variables $\mathcal{V}$ those
    generated by $\mathcal{V}$, $\mathbb{N}$, $+$, $\leq$ and $\mathrm{div}_k(p)$ for all
    $k\in\mathbb{Z}^+$.
\end{definition}
Crucially, Presburger formulas can encode \emph{linear} multiplication (i.e.
multiplication by a constant) but do not allow \emph{general} multiplication (i.e.
multiplying two variables) -- Presburger arithmetic is not expressive enough to fall under G\"odel's incompleteness theorems.
\begin{definition}
    We call \emph{Presburger $\mathsf{MSO}$} ($\mathsf{PMSO}$) formulas on the language of
    rooted trees those generated by $\mathsf{MSO}$ syntax augmented
    by propositions of the type $x/\varphi$, where $\varphi$ is a Presburger formula on the set
    of integer variables $\mathcal{V}=\{\lvert X\rvert\,: X\text{ is a second-order
    variable}\}$. The formula $x/\varphi$ holds in some directed graph under a valuation $\rho$ if the
    valuation $\mu$ mapping each variable $\lvert X\rvert$ from $\varphi$ to the cardinality
    of $\rho(X)\cap \mathrm{children}(\rho(x))$ is a solution for $\varphi$.
\end{definition}
In other words, $\mathsf{PMSO}$ can capture complex relationship between sets (defined by Presburger
formulas), but those sets are restricted to the children of a given node. Presburger
monadic second-order formulas are usually defined only over \emph{rooted trees} -- we
propose a natural extension of their definition to general undirected graphs.
\begin{definition}
    We call \emph{undirected Presburger $\mathsf{MSO}$} ($\mathsf{uPMSO}$) formulas on the
    language of undirected graphs those generated by $\mathsf{MSO}$ syntax augmented
    by propositions of the type $x/\varphi$, where $\varphi$ is a Presburger formula on the set
    of integer variables $\mathcal{V}=\{\lvert X\rvert\,: X\text{ is a second-order
    variable}\}$. The formula $x/\varphi$ holds in some graph under a valuation $\rho$ if the
    valuation $\mu$ mapping each variable $\lvert X\rvert$ from $\varphi$ to the cardinality
    of $\rho(X)\cap \mathrm{neighbors}(\rho(x))$ is a solution for $\varphi$.
\end{definition}
We will specifically look at the intersection of those classes with the class of locally
verifiable problems defined above.
\begin{definition}
    We define the class of \emph{Local First-Order} $(\mathsf{LFO})$ problems as the set of all graph
    languages which are axiomatizable by a first-order sentence \emph{and} verifiable by a
    $\mathsf{LOCAL}$ algorithm in constant-time; formally,
    $\mathsf{LFO}=\mathsf{FO}\cap\mathsf{LD}^*$.
    Similarly, we define the classes of
    \begin{itemize}
        \item \emph{Local Monadic Second-Order} problems $(\mathsf{LMSO})$,
        \item \emph{Local Presburger-Monadic Second-Order} problems $(\mathsf{LPMSO})$, and
        \item \emph{Local undirected Presburger-Monadic Second-Order} problems $(\mathsf{LuPMSO})$.
    \end{itemize}
\end{definition}
\section{Local monadic second-order problems}\label{sec:main}
The objective of this section is to prove that the class of Local Presburger Monadic Second-Order
problems on rooted trees has a complexity landscape with polynomial gaps, similar to the bounded
degree case (see \cite{balliuRooted2022}).
\gapthm*
We will prove this in two steps: first, in Section~\ref{sec:auto_gap} we show that the related problem $\Pi_Q$ of computing \emph{both} a valid
output labeling and an automata-based certificate of correctness has one of the given
complexities; $\Pi_Q$ is trivially harder than $\Pi$, but in
Section~\ref{sec:auto_redu} we show that a constant-round reduction from $\Pi$ to $\Pi_Q$ exists,
and so they have the same asymptotic complexity. 
\subsection{Background: rational sets}
We follow the notations from \cite{bonevaAutomata2005} and
\cite{csuhajMultiset2000}; to compare their results, we use semilinear sets. We use
$\mathcal{M}(Q)$ to denote the set of multisets of elements of $Q$.
\begin{definition}
    A subset $L$ of $\mathbb{N}^k$ is said to be \emph{linear} if it is of the form
    $$L=u_0+\mathbb{N}u_1+\ldots+\mathbb{N}u_n=\left\{u_0+t_1u_1+\ldots+t_nu_n\,\middle|\,t_1,\ldots,t_n\in\mathbb{N}\right\}$$
    for some $u_0,\ldots ,u_n\in\mathbb{N}^k$. In particular, $u_0$ is called the
    \emph{base vector} of $L$. A subset of $\mathbb {N}^k$ is said
    to be \emph{semilinear} if it is a finite union of linear subsets. 
\end{definition}
We can identify a multiset $M$ over a finite alphabet $\Sigma$ (with a fixed order
$\Sigma=\{s_1,\ldots,s_k\}$) with a vector in $\mathbb{N}^k$ where the $i$-th element
represents the multiplicity of $s_i$. This is called the \emph{Parikh vector} of $M$; we
do not distinguish between $M$ and its Parikh vector, and refer to a multiset language as
semilinear if and only if its Parikh image is semilinear.
\begin{definition}
    The family $\mathrm{Rat}(\mathcal{M}(A))$ of \emph{rational multiset languages} is the smallest
    subset of $\mathcal{P}(\mathcal{M}(A))$ which:
    \begin{itemize}
        \item \textbf{contains all finite languages;} every finite subset of $\mathcal{M}(A)$ is in $\mathrm{Rat}(\mathcal{M}(A))$,
        \item \textbf{is closed under finite union;} if $L_1,L_2\in\mathrm{Rat}(\mathcal{M}(A))$ then $L_1\cup L_2\in
        \mathrm{Rat}(\mathcal{M}(A))$,
        \item \textbf{is closed under finite multiset union;} if $L_1,L_2\in\mathrm{Rat}(\mathcal{M}(A))$ then $L_1+L_2=\{m_1\cup
        m_2\,|\,m_1\in L_1,m_2\in L_2\}\in\mathrm{Rat}(\mathcal{M}(A))$, and
        \item \textbf{is closed under Kleene star;} for a multiset language $L$ denote
        $L^0=\{\emptyset\}$, $L^1=L$, and $L^{i+1}=L^i+L^1$; then if
        $L\in\mathrm{Rat}(\mathcal{M}(A))$ then
        $L^*=\bigcup_{i\in\mathbb{N}}L^i\in\mathrm{Rat}(\mathcal{M}(A))$.
    \end{itemize}
\end{definition}
\begin{definition}
    A (nondeterministic) \emph{finite multiset automaton (FMA)} is a structure $\mathcal{A}=(\Sigma,Q,q_0,f,F)$,
    where:
    \begin{itemize}
        \item $\Sigma$ is the input alphabet,
        \item $Q$ is a finite set of states,
        \item $q_0$ is the initial state,
        \item $f$ is a transition function $\Sigma\times Q\to 2^Q$,
        \item $F\subseteq Q$ is the set of final states.
    \end{itemize}
    A configuration of $\mathcal{A}$ is a pair $(q,\tau)$, where $q$ is a state (the
    current state) and $\tau$ is a multiset (the current content of the bag). A
    configuration \emph{transitions} into another (denoted $(q,\tau)\vdash (q',\tau')$) if
    there is an element $s\in\Sigma$ such that 
    \begin{itemize}
        \item $\tau=\tau'\uplus \{s\}$, and
        \item $q'\in f(s,q)$.
    \end{itemize}
    The multisets $M$ accepted by this automaton are the ones for which there exists a chain
    of transitions starting from $(q_0,M)$ and ending at $(q_f,\emptyset)$ where $q_f\in F$.
\end{definition}
Note that we only defined \emph{nondeterministic} automata; contrary to word or even tree
automata, deterministic FMA are strictly weaker than their nondeterministic counterparts.
\begin{theorem}\label{thm:semilinear}
    Let $\mathcal{L}$ be a multiset language. The following are equivalent:
    \begin{enumerate}[(1)]
        \item $\mathcal{L}$ is semilinear.
        \item $\mathcal{L}$ is rational.
        \item $\mathcal{L}$ is definable by a Presburger sentence.
        \item There exists a finite multiset automaton recognising $\mathcal{L}$.
    \end{enumerate}
\end{theorem}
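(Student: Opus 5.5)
The statement to prove is Theorem~\ref{thm:semilinear}, the equivalence of semilinearity, rationality, Presburger definability and FMA-recognisability for multiset languages. I would identify multisets over $A=\{s_1,\ldots,s_k\}$ with their Parikh vectors in $\mathbb{N}^k$ throughout, and prove the cycle $(1)\Rightarrow(2)\Rightarrow(4)\Rightarrow(1)$ together with $(1)\Leftrightarrow(3)$.

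\textbf{$(1)\Rightarrow(2)$.} Write the linear set $u_0+\mathbb{N}u_1+\ldots+\mathbb{N}u_n$ as $\{u_0\}+\{u_1\}^*+\ldots+\{u_n\}^*$. This uses only finite languages, multiset union and Kleene star. A finite union of such sets is then rational by closure under union.

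\textbf{$(2)\Rightarrow(4)$.} Use structural induction with Thompson-style constructions adapted to bags.
\begin{itemize}
\item A finite language is accepted by an automaton that reads the elements of each multiset in a fixed order along a separate branch.
\item For unions, take a disjoint copy of each automaton with a fresh initial state, which nondeterministically copies the outgoing transitions of both old initial states.
\item For multiset sums $L_1+L_2$, glue: from every final state of $\mathcal{A}_1$ add the transitions of the initial state of $\mathcal{A}_2$. Since the bag is unordered, consuming first an $L_1$-part and then an $L_2$-part accepts exactly $L_1+L_2$.
\item For the star, add a fresh initial state that is also final. From it and from the old final states, add the transitions leaving the old initial state.
\end{itemize}
The absence of $\varepsilon$-moves is handled by the usual trick of copying outgoing transitions, with separate bookkeeping for whether $\emptyset$ is accepted.

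\textbf{$(4)\Rightarrow(1)$.} An accepted multiset is exactly the Parikh image of a word accepted by the underlying word NFA, in which we read $s$ on the edge $q\to q'$ whenever $q'\in f(s,q)$. Order is irrelevant because any order of removal from the bag is allowed. So the language is the Parikh image of a regular language, and Parikh's theorem gives semilinearity. One could also argue directly: decompose accepting runs into a simple path plus simple cycles. Each accepting path and each set of simple cycles attachable to it gives a linear set, whose base vector is the path plus one copy of each cycle and whose periods are the cycles. There are finitely many such sets.

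\textbf{$(1)\Leftrightarrow(3)$.} For $(1)\Rightarrow(3)$, a linear set is defined by $\exists t_1\ldots t_n\,\bigwedge_i x_i=u_{0,i}+\sum_j t_ju_{j,i}$. Here multiplication by a constant is repeated addition, and a finite disjunction handles unions. For $(3)\Rightarrow(1)$, this is the Ginsburg--Spanier theorem. One proves that Presburger-definable sets are closed under the Boolean operations and projection, via quantifier elimination with the $\mathrm{div}_k$ predicates (Cooper's procedure). Then semilinear sets contain the atomic sets and are closed under union, intersection, complement and projection.

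I expect the main obstacle to be closure of semilinear sets under complement and intersection. The cleanest route is to cite Ginsburg--Spanier, and equivalently \cite{bonevaAutomata2005}, rather than reprove it. Since the theorem is stated as background, I would give the easy directions explicitly and defer this one to the literature.
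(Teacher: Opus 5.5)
Your proposal is correct, but it is organized differently from the paper. The paper proves nothing directly. It uses (1) as a hub and cites three equivalences: (1)$\Leftrightarrow$(2) from Eilenberg--Sch\"utzenberger, (1)$\Leftrightarrow$(3) from Ginsburg--Spanier, and (1)$\Leftrightarrow$(4) from Csuhaj-Varj\'u et al. You instead prove the cycle (1)$\Rightarrow$(2)$\Rightarrow$(4)$\Rightarrow$(1) and the direction (1)$\Rightarrow$(3) by hand, and cite only (3)$\Rightarrow$(1), exactly as the paper does.

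The paper's route buys brevity and never has to deal with the $\varepsilon$-free format of FMAs. Your route makes the easy directions self-contained. It also isolates the two genuinely nontrivial inputs: Parikh's theorem, and Ginsburg--Spanier for the Presburger direction. More usefully for this paper, your (4)$\Rightarrow$(1) step makes explicit that the language of an FMA is exactly the Parikh image of the word language of its underlying NFA. That fact is what justifies the paper's later informal claims that relabelling the transitions of a multiset automaton preserves rationality. This covers the map $h$ in the construction of $\mathcal{A}_{det}$ and the set $X_\Sigma$ in Lemma~\ref{lem:labelauto}. Both are letter substitutions on a regular language, and such substitutions commute with taking Parikh images.

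Three small remarks follow.

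\emph{Avoiding Parikh's theorem.} (2)$\Rightarrow$(1) is elementary in $\mathbb{N}^k$. A sum of linear sets is linear, and $(L_1\cup L_2)^*=L_1^*+L_2^*$ by commutativity. Also $(u_0+\mathbb{N}U)^*=\{0\}\cup\bigl(u_0+\mathbb{N}(\{u_0\}\cup U)\bigr)$. Together with Kleene's theorem for (4)$\Rightarrow$(2), this would let you avoid Parikh's theorem entirely.

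\emph{Quantifier elimination.} Your sketch of (3)$\Rightarrow$(1) misplaces its role. Closure of Presburger-definable sets under Boolean operations and projection is purely syntactic. Quantifier elimination helps because it reduces a formula to a disjunction of systems of linear inequalities and congruences. Their solution sets are semilinear by a Dickson/Hilbert-basis argument, so this route sidesteps complement closure of semilinear sets rather than relying on it.

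\emph{Cycle decomposition.} In your direct argument, the condition on the cycles must be that their union with the path is connected, since cycles may hang off other cycles. Realizing an arbitrary combination then needs an Eulerian-trail argument.
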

\begin{proof}
    $(1)\Leftrightarrow (2)$ is a common result on rational sets of commutative monoids,
    found explicitly in \cite[p.175]{eilenbergRational1969}.

    $(1)\Leftrightarrow(3)$ is found in \cite[Theorem 1.3]{ginsburgPresburger1966}.

    $(1)\Leftrightarrow (4)$ can be found in \cite{csuhajMultiset2000}.
\end{proof}
\begin{corollary}
    Rational multiset languages are closed under finite union, finite intersection and complement, corresponding to the
boolean disjunction, conjunction and negation operations on Presburger formulas.
\end{corollary}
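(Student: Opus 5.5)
The statement immediately above is the Corollary: rational multiset languages are closed under finite union, finite intersection and complement. The plan is to route everything through the equivalence (2)$\Leftrightarrow$(3) of Theorem~\ref{thm:semilinear}, so that each closure property becomes a statement about Presburger formulas, where it is syntactic.

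\textbf{Translation.} Fix a finite alphabet $A=\{a_1,\ldots,a_k\}$ and identify each multiset with its Parikh vector in $\mathbb{N}^k$. By Theorem~\ref{thm:semilinear}, a multiset language $L\subseteq\mathcal{M}(A)$ is rational if and only if there is a Presburger formula $\varphi_L(x_1,\ldots,x_k)$ whose set of solutions in $\mathbb{N}^k$ is exactly the Parikh image of $L$.

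\textbf{The three operations.} Let $L_1,L_2$ be rational, defined by $\varphi_1,\varphi_2$ over the same variables $x_1,\ldots,x_k$.
\begin{itemize}
\item \emph{Union.} The formula $\varphi_1\lor\varphi_2$ defines the Parikh image of $L_1\cup L_2$. This case also follows directly from the definition of $\mathrm{Rat}(\mathcal{M}(A))$.
\item \emph{Intersection.} The formula $\varphi_1\land\varphi_2$ defines the Parikh image of $L_1\cap L_2$.
\item \emph{Complement.} The formula $\lnot\varphi_1$ defines $\mathbb{N}^k$ minus the image of $L_1$. Since the Parikh map is a bijection $\mathcal{M}(A)\to\mathbb{N}^k$, this set is the image of $\mathcal{M}(A)\setminus L_1$.
\end{itemize}
In each case the new formula is again a Presburger formula: Presburger formulas are closed under the boolean connectives, because they are generated by first-order syntax over $\mathbb{N},+,\leq,\mathrm{div}_k$. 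Applying (3)$\Rightarrow$(2) of Theorem~\ref{thm:semilinear} then shows that each resulting language is rational. The correspondence with disjunction, conjunction and negation is exactly this translation.

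\textbf{Main obstacle.} The only delicate step is bookkeeping in the translation. Both formulas must be written over one common variable set indexed by the fixed alphabet $A$; if needed, pad a formula with constraints-free extra variables. The complement must also be taken relative to $\mathcal{M}(A)$, which matches complement relative to $\mathbb{N}^k$ precisely because the Parikh map is a bijection. There is no deeper difficulty.

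A fully constructive alternative is to work with semilinear sets and cite the Ginsburg--Spanier theorem that they form a boolean algebra. This avoids formulas but takes closure under complement as a black box, so the Presburger route above is preferable.
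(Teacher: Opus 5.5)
Your proposal is correct and matches the paper, which gives no separate proof and presents the corollary as an immediate consequence of the equivalence (2)$\Leftrightarrow$(3) in Theorem~\ref{thm:semilinear}: union, intersection and complement of rational multiset languages correspond to $\lor$, $\land$, $\lnot$ on the defining Presburger formulas. One small caveat: your closing remark overstates the advantage of this route, because the nontrivial direction (3)$\Rightarrow$(2) of Theorem~\ref{thm:semilinear} (via Ginsburg--Spanier) is exactly where closure of semilinear sets under complement is hidden, so both routes rest on the same black box.
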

\begin{lemma}\label{lem:semibase}
    Let $\mathcal{L}$ be a semilinear multiset language. Then we can compute a value $d(\mathcal{L})\in\mathbb{N}$
    such that for all $M\in\mathcal{L}$ there is $M'\in\mathcal{L}$ such that $M'\subseteq
    M$ and $\lvert M'\rvert \leq d(\mathcal{L})$.
\end{lemma}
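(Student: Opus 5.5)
By Theorem~\ref{thm:semilinear}, the semilinear language $\mathcal{L}$ can be written as a finite union of linear sets,
$$\mathcal{L}=\bigcup_{j=1}^{m} L_j,\qquad L_j=u^{(j)}_0+\mathbb{N}u^{(j)}_1+\ldots+\mathbb{N}u^{(j)}_{n_j}.$$
I fix one such decomposition and set $d(\mathcal{L}):=\max_j \lvert u^{(j)}_0\rvert$, the largest size of a base vector. Here $\lvert u\rvert$ denotes the sum of the coordinates of the Parikh vector, i.e.\ the cardinality of the corresponding multiset.

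Now let $M\in\mathcal{L}$. Then $M\in L_j$ for some $j$, so $M=u^{(j)}_0+\sum_i t_iu^{(j)}_i$ with $t_i\in\mathbb{N}$. Take $M':=u^{(j)}_0$, which lies in $L_j\subseteq\mathcal{L}$ by choosing all $t_i=0$. All vectors involved have nonnegative entries, so $u^{(j)}_0\leq M$ coordinatewise, which is exactly multiset inclusion $M'\subseteq M$. Finally, $\lvert M'\rvert\leq d(\mathcal{L})$ by the choice of $d$.

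The only substantive point is the word \emph{compute}: we need an effective semilinear representation of $\mathcal{L}$. In the paper, languages arrive as Presburger formulas, rational expressions, or finite multiset automata. The equivalences cited in Theorem~\ref{thm:semilinear} (Ginsburg--Spanier for Presburger formulas, Eilenberg--Sch\"utzenberger for rational expressions, and the FMA construction) are all constructive. Hence from any of these descriptions one can effectively produce the base vectors $u^{(j)}_0$ and read off their maximum size. I would note this constructivity explicitly, since the decomposition is not unique: $d$ depends on the representation chosen, which is harmless because the lemma only requires some computable bound.
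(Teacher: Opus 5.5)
Your proof is correct and is the paper's argument: take $d(\mathcal{L})$ to be the largest size of a base vector among the linear components, and use that every element of a linear set contains its base vector. Your remark on effectiveness, namely that a semilinear representation can be computed from the given description and that $d$ depends on the decomposition chosen, makes explicit a point the paper leaves implicit.
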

\begin{proof}
    We can pick $l_0$ as the maximum length of the base vectors of the linear components
    of $\mathcal{L}$, since every element of a linear set contains its base
    vector.
\end{proof}
\subsection{Background: bottom-up tree automata}
\begin{definition}
    A \emph{rational-multiset tree automaton (RatMTA)} is a tuple $\mathcal{A}:=(\Sigma,Q,\delta,F)$ where:
    \begin{itemize}
        \item $\Sigma$ is a finite set of labels,
        \item $Q$ is a finite set of states,
        \item $\delta:\Sigma\times Q\to\mathrm{Rat}(\mathcal{M}(Q))$ is the transition function,
        which assigns to each pair of a label and a state a rational set of multisets
        of $Q$, and
        \item $F\subseteq Q$ is the accepting set of states.
    \end{itemize}
    If for each $s\in\Sigma$ the set $\{\delta(s,q)\}_{q\in Q}$ forms a partition of
    $\mathcal{M}(Q)$, the automaton is called \emph{deterministic}.

    A \emph{valid run} of $\mathcal{A}$ on an unranked, unordered rooted tree $T=((V,E),r,\lambda)$ is a labeling $\mu:V\to Q$
    such that for each node $v\in V$ the multiset
    $\{\!\{\mu(v_c):v_c\in C_v\}\!\}$ is in $\delta(\lambda(v),\mu(v))$. A rooted tree $T$ is
    \emph{accepted} by $\mathcal{A}$ iff there is an \emph{accepting run} on $T$, meaning
    a run on $T$ such that the label state of its root is in $F$.
    
    We say that a RatMTA $\mathcal{A}:=(\Sigma,Q,\delta,F)$ is \emph{deterministic} if for
    every $s\in\Sigma,\ q_1,q_2\in Q$ we have $\delta(s,q_1)\cap \delta(s,q_2)=\emptyset$;
    we say that a RatMTA is \emph{complete} if for all $s\in\Sigma$ we have $\bigcup_{q\in
    Q}\delta(s,q) = \mathcal{M}(Q)$. In a deterministic and complete RatMTA the family
    $\{\delta(s,q)\}_{q\in Q}$ is a partition of $\mathcal{M}(Q)$ for all $s\in\Sigma$.
\end{definition}
The relationship between Presburger arithmetic and rational multisets shown in
Theorem~\ref{thm:semilinear} extends to Presburger MSO formulas on trees and rational multiset tree
automata: we restate \cite[Theorem 5]{seidlPresburger2003} below. Note that while the authors
formulation uses Presburger u-tree automata, they are simply rational multiset tree automata
defined using Presburger formulas instead of rational sets, and the two are equivalent by
Theorem~\ref{thm:semilinear} (also mentioned explicitly in Section 2 of their paper).
\begin{theorem}\label{thm:PMSO_auto}
    Let $\mathcal{L}$ be a language of unranked, unordered rooted trees. $\mathcal{L}$ is
    PMSO-definable iff there exists a RatMTA $\mathcal{A}$ that accepts $\mathcal{L}$.
\end{theorem}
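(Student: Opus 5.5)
The plan is to derive this from \cite[Theorem 5]{seidlPresburger2003}. That result is stated for Presburger u-tree automata, whose transitions are given by Presburger formulas over the counts of child states. Theorem~\ref{thm:semilinear} lets us switch freely between such formulas and rational sets of multisets, so the two automaton models coincide. For completeness I would still sketch both directions directly, since the constructions are reused later, for instance in defining $\mathcal{A}_{det}$.

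\textbf{Automaton to formula.} Given a RatMTA $\mathcal{A}=(\Sigma,Q,\delta,F)$, I would write a $\mathsf{PMSO}$ sentence that guesses a run and checks it locally. Existentially quantify sets $(X_q)_{q\in Q}$ and require $\mathrm{Part}(X_q)_{q\in Q}$. For every node $x$ with label $s$ lying in $X_q$, require $x/\varphi_{s,q}(X_{q_1},\ldots,X_{q_m})$. Here $\varphi_{s,q}$ is a Presburger formula defining $\delta(s,q)$, where the Parikh coordinate of $q_i$ is read as $\lvert X_{q_i}\rvert$; it exists by Theorem~\ref{thm:semilinear}. Finally, require that the root, which is $\mathsf{MSO}$-definable as the node with no parent, lies in some $X_q$ with $q\in F$. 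Restricted to a node's children, the sets $X_{q_i}$ count exactly the multiset of children states, so the models of this sentence are precisely the trees with an accepting run.

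\textbf{Formula to automaton.} I would argue by induction on formulas with free variables. A formula with free variables $x_1,\ldots,X_j,\ldots$ is handled by encoding the valuation into the labels, with alphabet $\Sigma\times\{0,1\}^{\#\text{vars}}$; first-order variables are treated as singleton sets, which is checkable by an automaton. The atomic cases are $x\in X$, $\mathrm{child}(x,y)$ and $x/\varphi$. The cases $x\in X$ and $\mathrm{child}(x,y)$ need constant-size automata. For $x/\varphi$, children pass up which of the sets $X_i$ they belong to, and at the node marked $x$ the transition applies the Presburger condition $\varphi$ to the counts; this set is semilinear by Theorem~\ref{thm:semilinear}. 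The connectives are then handled as follows:
\begin{itemize}
\item Disjunction is a disjoint union of automata.
\item Conjunction is a product construction. Its transition set is the set of multisets over $Q_1\times Q_2$ whose two projections lie in the respective transition sets. This is Presburger-definable: introduce one count variable per pair state and express each projection's counts as linear sums.
\item Existential quantification is projection of the label component. This needs only nondeterminism, with the same transitions and the label guessed.
\end{itemize}

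\textbf{Main obstacle: negation.} Complementation requires determinization. I would use a subset construction with state set $2^Q$. The transition for label $s$ and state $P\subseteq Q$ accepts a multiset $N$ over $2^Q$ exactly when $P=\{q : \exists$ a choice of one element from each child's subset whose multiset lies in $\delta(s,q)\}$. The key claim is that this set is semilinear. For each $q$, the condition ``some choice lands in $\delta(s,q)$'' is Presburger. To see this, introduce variables $y_{S,p}$ counting the children with subset $S$ that choose $p\in S$, subject to $\sum_p y_{S,p}=N(S)$, and require that the induced counts $\sum_S y_{S,p}$ satisfy $\varphi_{s,q}$. Then eliminate the $y$ variables existentially, which preserves Presburger-definability by quantifier elimination. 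The condition for $P$ is then a Boolean combination of these formulas. The resulting automaton is deterministic and complete, so swapping accepting and non-accepting sets yields the complement. I expect this step to be the most delicate part. Apart from it, the argument is routine bookkeeping, and the equivalence follows.
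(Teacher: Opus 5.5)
Your proposal is correct and takes the same route as the paper: the paper gives no independent proof either, but invokes \cite[Theorem 5]{seidlPresburger2003} and uses Theorem~\ref{thm:semilinear} to identify Presburger u-tree automata with RatMTA. Your supplementary direct sketch is also sound, and its complementation step (the subset construction) is exactly the paper's later $\mathcal{A}_{det}$; you justify semilinearity of $h(\delta(s,q))$ by existentially quantifying the $y_{S,p}$ counts, whereas the paper argues it by relabelling a multiset automaton.
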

\subsection{Warmup: rational automata properties}\label{sec:autypes}
Through this section, we extend well-known results for \emph{finite, ordered}
bottom-up tree automata (see \cite{treeautomataTA}) to rational multiset tree automata,
and show the connection between rooted tree types (from \cite{changHierarchy2019}) and
states of an automaton. The results below hold for \emph{deterministic, complete} RatMTA:
luckily, in \cite[Lemma 2.2-2.3]{colcombetRational2003} Colcombet proved that this is not
a restriction on the languages we can consider.
\begin{theorem}\label{thm:deterministic}
    Every labeled, rooted tree language which can be recognised by a RatMTA can be
    recognised by a deterministic, complete RatMTA.
\end{theorem}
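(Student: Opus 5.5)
The natural plan is a subset construction, in the spirit of the classical determinization of bottom-up tree automata. The one new ingredient is checking that the transition sets remain rational. Start from an arbitrary RatMTA $\mathcal{A}=(\Sigma,Q,\delta,F)$ recognising $\mathcal{L}$ and define
\[
\mathcal{A}_{det}=(\Sigma,2^Q,\delta_{det},F_{det}),\qquad F_{det}=\{P\subseteq Q : P\cap F\neq\emptyset\}.
\]
The intended invariant is that the unique state of $\mathcal{A}_{det}$ at a node $v$ is exactly the set of states $q\in Q$ such that some valid run of $\mathcal{A}$ on the subtree rooted at $v$ labels $v$ with $q$. Concretely, for a label $s$ and a set $P\subseteq Q$, let $\delta_{det}(s,P)$ be the set of multisets $\{\!\{P_1,\ldots,P_m\}\!\}$ over $2^Q$ such that, for every $q\in Q$, we have $q\in P$ if and only if there are choices $q_i\in P_i$ with $\{\!\{q_1,\ldots,q_m\}\!\}\in\delta(s,q)$. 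Because the right-hand side determines $P$ uniquely from the multiset, the sets $\delta_{det}(s,P)$, $P\subseteq Q$, partition $\mathcal{M}(2^Q)$. This gives determinism and completeness at once.

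The main obstacle is rationality of $\delta_{det}(s,P)$. My approach is the following. For each $q$, let $R_q\subseteq\mathcal{M}(2^Q)$ be the set of multisets $\{\!\{P_1,\ldots,P_m\}\!\}$ admitting a selection $q_i\in P_i$ whose image lies in $\delta(s,q)$. To see that $R_q$ is rational, consider pair multisets in $\mathcal{M}(2^Q\times Q)$ restricted to pairs $(P',q')$ with $q'\in P'$. The set of such pair multisets whose second projection lies in $\delta(s,q)$ is semilinear: it is the inverse image of a semilinear set under a linear (Parikh) map, which is Presburger-definable and hence semilinear by Theorem~\ref{thm:semilinear}. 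Its first projection is $R_q$, and this projection is again Presburger-definable, since existential quantification over the pair multiplicities is allowed. So $R_q$ is rational. Then
\[
\delta_{det}(s,P)=\bigcap_{q\in P}R_q\;\cap\;\bigcap_{q\notin P}\bigl(\mathcal{M}(2^Q)\setminus R_q\bigr),
\]
which is rational by closure under intersection and complement (the corollary to Theorem~\ref{thm:semilinear}).

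Finally, prove the invariant by induction on the height of the subtree rooted at $v$. In the inductive step, a run of $\mathcal{A}$ labelling $v$ with $q$ corresponds exactly to a choice of root states $q_i$ for the children with $q_i$ lying in the child sets $P_i$ (induction hypothesis) and $\{\!\{q_i\}\!\}\in\delta(s,q)$; this is precisely membership in $R_q$. At the root, the tree is accepted by $\mathcal{A}$ if and only if its set contains a state of $F$, that is, if and only if the root's set lies in $F_{det}$. Hence $\mathcal{A}_{det}$ is deterministic, complete, and recognises $\mathcal{L}$. This is the result of \cite[Lemma 2.2-2.3]{colcombetRational2003}.
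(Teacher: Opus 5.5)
Your proposal is correct and is essentially the paper's own subset construction. The paper's $h(\delta(s,q))$ is your $R_q$, and its $\delta_{det}(s,H)=\bigcap_{q\in H}h(\delta(s,q))\cap\bigcap_{q\notin H}h(\delta(s,q))^C$ is the same Boolean atom as yours. Your accepting set (subsets meeting $F$) is the intended one and matches the paper's $F_{lab}$; the paper's written $F_{det}=h(F)$ is garbled. The differences are small: you justify rationality of $R_q$ by a Presburger projection instead of relabeling the transitions of a finite multiset automaton, and you prove the run invariant for $\mathcal{A}_{det}$ explicitly, which the paper only does for $\mathcal{A}_{lab}$ in Lemma~\ref{lem:typeauto}. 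Both routes to rationality work.
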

\begin{definition}
    A \emph{context} in the theory of $\Sigma$-labeled rooted trees is a
    $(\Sigma\uplus\{x\})$-labeled rooted tree which has exactly one \emph{leaf} node
    labeled as $x$; this is called the variable node. 
\end{definition}
We denote contexts as $C[x]$; given a rooted tree $T$, we can \emph{substitute} $T$ into
$C[x]$ by identifying the root of $T$ with the variable node of $C[x]$ -- we denote this
as $C[T]$. We can imagine contexts having a pole in place of the variable node, and any 
rooted tree having a pole at the root; the operation of \emph{replacing} a unipolar tree $T_1$
by another $T_2$ is then expressed through contexts as going from $C[T_1]$ to $C[T_2]$,
where $C$ is the rest of the graph.
\begin{figure}[th]
\centering
\includegraphics[width=0.8\textwidth]{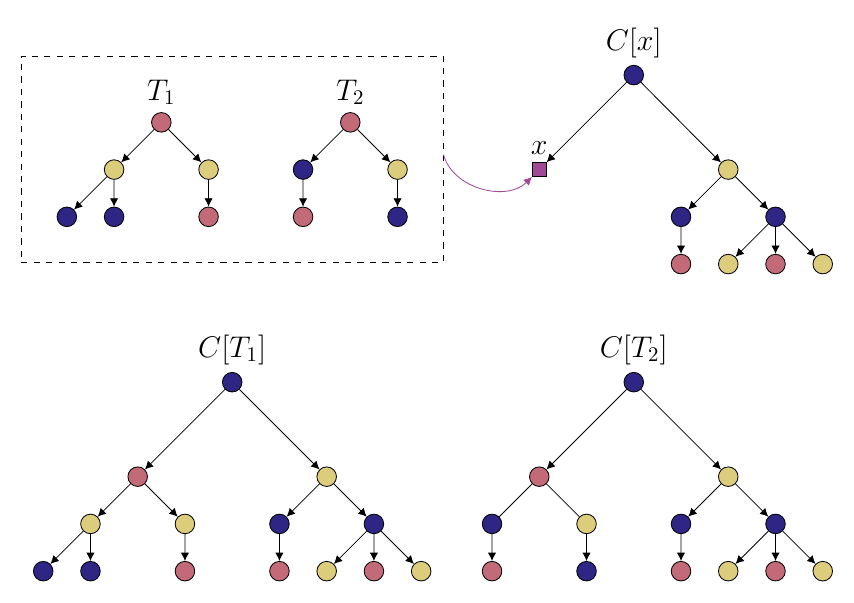}
\caption{Replacing $T_1$ by $T_2$ in the context $C[x]$.}
\label{fig:treereplace}
\end{figure}
\begin{definition}
    Given a graph language $\mathcal{G}$, we define an equivalence relation
    $\equiv_{\mathcal{G}}$ as follows: $T_1\equiv_{\mathcal{G}} T_2$ if and only if for every
    context $C$ we have $C[T_1]\in\mathcal{G}$ if and only if $C[T_2]\in\mathcal{G}$.

    In other words, $T_1\equiv_{\mathcal{G}} T_2$ if replacing $T_1$ by $T_2$ in a
    tree does not affect membership in $\mathcal{G}$.
\end{definition}
Given a \emph{deterministic, complete} rational multiset tree automaton $\mathcal{A}$, for
every rooted tree $T$ we have exactly one run of $\mathcal{A}$ (accepting or not); we
define the \emph{type} of $T$ (denoted $\mathrm{Type}(T)$) as the state of its root.
Crucially, since the transition function only looks at the states and not any input
labels, if we have $\mathrm{Type}(T_1)=\mathrm{Type}(T_2)$ we can replace $T_1$ by $T_2$
in any context $C[x]$ without changing the run of $\mathcal{A}$ on the nodes of $C[x]$.
\begin{lemma}\label{lem:treereplace}
    Let $\mathcal{A}=(\Sigma,Q,\delta,F)$ be a deterministic, complete rational multiset tree automaton, $C[x]$ a context, and
    $T_1,T_2$ $\Sigma$-labeled rooted trees: then
    $$\mathrm{Type}(T_1)=\mathrm{Type}(T_2)\ \ \Rightarrow\ \ \mathrm{Type}(C[T_1])=\mathrm{Type}(C[T_2]).$$
\end{lemma}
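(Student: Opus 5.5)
We argue by induction on the depth of the variable node $x$ in $C[x]$, i.e.\ the length of the directed path from the root of $C$ to $x$. The key observation is that in a deterministic, complete RatMTA the state assigned to a node $v$ by the (unique) run is determined by $\lambda(v)$ and the multiset of states of its children. Indeed, since $\{\delta(\lambda(v),q)\}_{q\in Q}$ partitions $\mathcal{M}(Q)$, there is exactly one $q$ with $\{\!\{\mu(v_c):v_c\in C_v\}\!\}\in\delta(\lambda(v),q)$. So we first record this fact: the run is computed bottom-up, and $\mu(v)$ is a function of $\lambda(v)$ and the multiset of children's states.

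For the base case (depth $0$), $C[x]=x$, hence $C[T_i]=T_i$ and the claim is immediate. For the inductive step, let $u$ be the parent of the variable node $x$. Let $D[x]$ be the context obtained by taking the subtree of $C$ rooted at $u$, which is a direct context whose variable node is a child of the root. Let $C'[y]$ be $C$ with the subtree at $u$ replaced by a new variable $y$. Then $C[T_i]=C'[D[T_i]]$, and $x$'s depth in $C'$ (namely the depth of $y$) is one less. It therefore suffices to show $\mathrm{Type}(D[T_1])=\mathrm{Type}(D[T_2])$ and then apply the induction hypothesis to $C'$.

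For the direct context $D$, the root $u$ of $D[T_i]$ has children $w_1,\ldots,w_m$ (the non-variable children), whose rooted subtrees are identical in $D[T_1]$ and $D[T_2]$. The runs on those subtrees coincide, because the run on a subtree depends only on that subtree. This follows by an easy induction on height, using the bottom-up determinism fact. The remaining child is the root of $T_i$, whose state is $\mathrm{Type}(T_i)$; by hypothesis these are equal. Hence the multiset of children's states of $u$ is the same in both trees, the label $\lambda(u)$ is the same, and by the partition property the state of $u$ is the same, giving $\mathrm{Type}(D[T_1])=\mathrm{Type}(D[T_2])$.

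The only delicate point is making precise that the run of a deterministic complete automaton on a tree restricts to the run on each rooted subtree, which is what licenses the phrase ``the state of the root of the substituted subtree is $\mathrm{Type}(T_i)$''. This holds because the restriction of a valid run to a subtree is a valid run of that subtree, together with uniqueness of runs. No real obstacle is expected; the argument is a routine structural induction. A slicker alternative would be to do the induction directly on the nodes of $C$ along the path from $x$ to the root, showing that every node of $C$ receives the same state in both runs.
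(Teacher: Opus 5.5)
Your proof is correct, and it uses the same ingredients as the paper: determinism plus completeness make the run unique and computable bottom-up, and $T_1$ affects the rest of the tree only through the state at its root. The difference is in how the argument is organized.

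The paper argues in one shot. It takes the run on $C[T_1]$ restricted to $C[T_1]\smallsetminus T_1$, unions it with the run on $T_2$, and observes that the result is the run on $C[T_2]$. It can do this because the transition function sees only the children's states, and the root of $T_2$ carries the same state as the root of $T_1$. Hence both roots receive the same state. This is essentially the ``slicker alternative'' you mention at the end.

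Your induction on the depth of $x$ instead peels off one direct context at a time and compares the state at a single node per step. It spreads the same locality argument along the path from $x$ to the root.

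Your version is a little longer, but it has two merits:
\begin{itemize}
\item It makes explicit the restriction-of-runs fact (the run on $C[T_1]$ restricts to the run on $T_1$) that the paper uses without comment.
\item It already exhibits the decomposition of a context into direct contexts $C=C_1\circ\cdots\circ C_k$, which the paper introduces separately later.
\end{itemize}
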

\begin{proof}
    Let $r_1$ be the run of $\mathcal{A}$ on $T_1$, $r_2$ be the run of $\mathcal{A}$ on
    $T_2$, and $r$ be the run of $\mathcal{A}$ on $C[T_1]$ restricted to the nodes of
    $C[T_1]\smallsetminus T_1$; since $\mathrm{Type}(T_1)=\mathrm{Type}(T_2)$ and the
    transition function depends only on the states, $r\cup r_2$ is the run of
    $\mathcal{A}$ on $C[T_2]$. If $r$ is empty ($C$ is the trivial context where the root
    is the variable node) then the types of $T_1,T_2,C[T_1],C[T_2]$ are all the same;
    else, $r$ assigns the same state to both roots and $\mathrm{Type}(C[T_1])=\mathrm{Type}(C[T_2])$.
\end{proof}
\begin{corollary}\label{cor:typereplace}
        Let $\mathcal{A}=(\Sigma,Q,\delta,F)$ be a rational multiset tree automaton
        recognising a graph language $\mathcal{G}$, and let $T_1,T_2$ be $\Sigma$-labeled rooted trees: then
    $$\mathrm{Type}(T_1)=\mathrm{Type}(T_2)\ \ \Rightarrow\ \ T_1\equiv_{\mathcal{G}} T_2$$
\end{corollary}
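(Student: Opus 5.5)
The plan is to derive the corollary directly from Lemma~\ref{lem:treereplace}, reading $\mathcal{A}$ as deterministic and complete. This is implicit, since $\mathrm{Type}$ is only defined for such automata, and by Theorem~\ref{thm:deterministic} we lose no generality. We must show that $\mathrm{Type}(T_1)=\mathrm{Type}(T_2)$ implies $C[T_1]\in\mathcal{G}\Leftrightarrow C[T_2]\in\mathcal{G}$ for every context $C[x]$. Fix an arbitrary context $C[x]$ and assume $\mathrm{Type}(T_1)=\mathrm{Type}(T_2)$.

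\textbf{Step 1.} Apply Lemma~\ref{lem:treereplace} to obtain $\mathrm{Type}(C[T_1])=\mathrm{Type}(C[T_2])$. So the roots of $C[T_1]$ and $C[T_2]$ receive the same state $q$ in their unique runs.

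\textbf{Step 2.} Use determinism and completeness: each tree has exactly one run. Hence a tree is accepted by $\mathcal{A}$ if and only if the root state of that unique run lies in $F$. Indeed, completeness guarantees a run exists, and determinism guarantees no other run could have a different root state.

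\textbf{Step 3.} Since $\mathcal{A}$ recognises $\mathcal{G}$, we get
$$C[T_1]\in\mathcal{G}\iff \mathrm{Type}(C[T_1])\in F\iff \mathrm{Type}(C[T_2])\in F\iff C[T_2]\in\mathcal{G}.$$
As $C$ was arbitrary, this gives $T_1\equiv_{\mathcal{G}}T_2$.

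The only real subtlety is Step 2. We must justify that "accepted" (existence of an accepting run) coincides with "the unique run is accepting". This follows because in a deterministic complete RatMTA the sets $\{\delta(s,q)\}_{q\in Q}$ partition $\mathcal{M}(Q)$. An induction from the leaves up then shows that every node's state is forced by the states of its children. A secondary point is the trivial context, where the variable node is the root. Here $C[T_i]=T_i$, and the claim is immediate from the hypothesis, matching the case distinction already handled in the proof of Lemma~\ref{lem:treereplace}.
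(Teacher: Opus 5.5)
Your proof is correct and is exactly the argument the paper leaves implicit when it states this as a corollary. Lemma~\ref{lem:treereplace} gives $\mathrm{Type}(C[T_1])=\mathrm{Type}(C[T_2])$, and in a deterministic complete RatMTA acceptance is equivalent to the unique root state lying in $F$, so membership in $\mathcal{G}$ agrees for every context; you are also right that $\mathrm{Type}$ presupposes $\mathcal{A}$ deterministic and complete, a hypothesis the statement leaves tacit.
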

We prove something stronger, a generalization of the Myhill-Nerode theorem (see
\cite[p36]{treeautomataTA}) for rational multiset tree automata: if a language is
recognizable by any deterministic, complete RatMTA then $\equiv_{\mathcal{G}}$ has
finitely many equivalence classes, and those classes form the states of a deterministic,
complete RatMTA which is \emph{minimal} (by number of states) and \emph{unique} up to
relabeling the states.
\begin{restatable}{theorem}{minratmtathm}\label{thm:minimal-ratmta}
    \textbf{\emph{(Myhill-Nerode)}} Let $\mathcal{G}$ be a tree language which can be recognised by a deterministic,
    complete RatMTA $\mathcal{A}$. Then:
    \begin{enumerate}[(1)]
        \item $\equiv_{\mathcal{G}}$ has finite index, upper bounded by the number of states
        of $\mathcal{A}$, and
        \item there is a deterministic, complete RatMTA which recognises $\mathcal{G}$
        whose states are in bijection with the equivalence classes of $\equiv_\mathcal{G}$.
    \end{enumerate}
\end{restatable}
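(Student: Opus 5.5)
Part (1) follows from Corollary~\ref{cor:typereplace}. The map $T\mapsto\mathrm{Type}(T)$ sends every $\Sigma$-labeled rooted tree to a state of $\mathcal{A}$, and trees with the same type are $\equiv_{\mathcal{G}}$-equivalent. Hence every class of $\equiv_{\mathcal{G}}$ is a union of fibres of this map. The index is therefore at most the number of states actually reached, and so at most $|Q|$.

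For (2) I would build the quotient automaton $\mathcal{A}_{\min}=(\Sigma,Q_{\min},\delta_{\min},F_{\min})$. Let $Q_{\min}$ be the set of $\equiv_{\mathcal{G}}$-classes, and put $F_{\min}=\{[T]: T\in\mathcal{G}\}$. This is well defined, since taking the trivial context shows that each class lies entirely inside or entirely outside $\mathcal{G}$. The key preliminary fact is a congruence property. Let $T$ be a tree with root label $s$ and child subtrees $T_1,\dots,T_m$, and replace each $T_i$ by some $T_i'\equiv_{\mathcal{G}}T_i$. Then the new tree is $\equiv_{\mathcal{G}}$-equivalent to $T$. To prove it, change one child at a time: any context $C$ around $T$, together with the unchanged siblings, forms a context for $T_i$. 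So the class of a tree depends only on $s$ and on the multiset of classes of its children. I would then define $\delta_{\min}(s,c)$ as the set of multisets $M\in\mathcal{M}(Q_{\min})$ such that some (equivalently, every) tree with root label $s$ and multiset of child classes $M$ lies in $c$. Multisets of classes not realised by any tree (there are none, since every class is nonempty) cause no trouble. These sets partition $\mathcal{M}(Q_{\min})$ for each $s$, so $\mathcal{A}_{\min}$ is deterministic and complete. A bottom-up induction shows that its unique run assigns $[T_v]$ to each node $v$, so it accepts exactly $\mathcal{G}$.

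The main obstacle is showing that each $\delta_{\min}(s,c)$ is rational, because $\mathcal{M}(Q_{\min})$ is infinite. My plan is to pull everything back to $\mathcal{A}$. First I would fix the surjection $\pi: Q_{\mathrm{reach}}\to Q_{\min}$ sending a reachable state $q$ to the class of any tree of type $q$. The reachable states are computable, and their rationality is not needed. Then $\delta_{\min}(s,c)$ is the image under $\pi$ of the set
\[
\bigcup_{q:\ \pi(q)=c}\ \bigcup_{p:\ \text{trees with root label } s \text{ and type } p \text{ lie in } c}\ \delta(s,p)\cap\mathcal{M}(Q_{\mathrm{reach}}) .
\]
By the congruence property it is enough to look at types of the children, so this description is correct. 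Intersecting with $\mathcal{M}(Q_{\mathrm{reach}})$ preserves semilinearity. A multiset morphism $\pi$ maps linear sets to linear sets, since it acts componentwise on the base and period vectors. Hence $\delta_{\min}(s,c)$ is semilinear, and by Theorem~\ref{thm:semilinear} it is rational. The subtle point is that $\delta_{\min}(s,c)$ must really equal this image; this needs the congruence property, so that different preimages of the same multiset of classes cannot land in different classes. I would check that step carefully.
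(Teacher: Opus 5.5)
Your proposal is correct and follows essentially the paper's route. You quotient $Q$ by $\equiv_{\mathcal{G}}$, prove well-definedness and determinism by replacing the root's children one at a time through contexts, and obtain rationality of $\delta_{min}$ as the image of the semilinear sets $\delta(s,p)$ under the induced multiset morphism. Your restriction to reachable states and your union over all states $p$ in a class make explicit two points that the paper's definition $\delta_{min}(s,h(q_1))=\ddot{h}(\delta(s,q_1))$ leaves implicit, and your bottom-up induction for recognition is a cleaner substitute for the paper's top-down relabeling argument.
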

\begin{proof}
    Found in Appendix~\ref{app:myhill}.
\end{proof}
But what about bipolar trees? Contexts are themselves
rooted trees, so we can imagine them as having one pole at the variable node (going out)
and one pole at the root (going in); specifically, we are interested in \emph{direct
contexts}, which have the variable node as a direct child of the root node -- in other
words, bipolar trees which have both poles at the root. After accounting for the
direction, these function similarly to \emph{rooted tree types} by the definition given by
Chang and Pettie in \cite{changHierarchy2019}; we can encode any context (bipolar
tree) as a string of direct contexts, each corresponding to one node on the path
between the root and variable node -- we define a notion of \emph{type} for them
corresponding to \emph{bipolar tree types} from \cite{changHierarchy2019}.

We can interpret contexts, and more specifically direct contexts, as functions
between types; by the previous lemma if $\mathrm{Type}(T_1)=\mathrm{Type}(T_2)$ then
$\mathrm{Type}(C[T_1])=\mathrm{Type}(C[T_2])$. As we have finite types, we have finitely
many functions between types; these functions form our \emph{context types}.
\begin{definition}
    The \emph{type} of a context $C$ with respect to a deterministic, complete automaton
    $\mathcal{A}$ is the function
    $$\mathrm{Type}(C):\mathrm{Type}(T)\mapsto \mathrm{Type}(C[T]).$$
\end{definition}
Context types have the same replacement properties as types: we can replace a context $C_1$
(directed bipolar tree) with one of the same type $C_2$ such that the run of $\mathcal{A}$ is
the same on all nodes outside of $C_1,C_2$.
\begin{lemma}\label{lem:contextreplace}
    Let $\mathcal{A}=(\Sigma,Q,\delta,F)$ be a deterministic, complete rational multiset
    tree automaton, let $C[x],C_1[x],C_2[x]$ be contexts, and let $T$ be a
    $\Sigma$-labeled rooted tree: then
    $$\mathrm{Type}(C_1)=\mathrm{Type}(C_2)\ \ \Rightarrow\ \ \mathrm{Type}(C[C_1[T]])=\mathrm{Type}(C[C_2[T]]).$$
\end{lemma}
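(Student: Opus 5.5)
The idea is to reduce to Lemma~\ref{lem:treereplace} by treating $C[x]$ with a tree plugged in as the new "rest of the graph". We set $T_1:=C_1[T]$ and $T_2:=C_2[T]$, both $\Sigma$-labeled rooted trees. By the definition of context type, $\mathrm{Type}(C_1)$ is the function $\mathrm{Type}(S)\mapsto\mathrm{Type}(C_1[S])$. This is well defined precisely because of Lemma~\ref{lem:treereplace}: trees of equal type give equal types after substitution into a fixed context. Evaluating both functions at the argument $\mathrm{Type}(T)$ and using $\mathrm{Type}(C_1)=\mathrm{Type}(C_2)$ gives
$$\mathrm{Type}(C_1[T])=\mathrm{Type}(C_1)(\mathrm{Type}(T))=\mathrm{Type}(C_2)(\mathrm{Type}(T))=\mathrm{Type}(C_2[T]).$$
Hence $\mathrm{Type}(T_1)=\mathrm{Type}(T_2)$.

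We then apply Lemma~\ref{lem:treereplace} with the outer context $C[x]$ and the trees $T_1,T_2$. This yields $\mathrm{Type}(C[T_1])=\mathrm{Type}(C[T_2])$, which is exactly $\mathrm{Type}(C[C_1[T]])=\mathrm{Type}(C[C_2[T]])$. Since $\mathcal{A}$ is deterministic and complete, each tree has a unique run, so all the types involved are well defined.

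The only point needing care is a bookkeeping issue: we must check that substitution composes. Concretely, $C[C_1[T]]$ should be the same tree whether we read it as $C$ applied to the tree $C_1[T]$, or as $(C\circ C_1)[T]$. Here $C\circ C_1$ is the context obtained by identifying the variable node of $C$ with the root of $C_1$, and it is again a context with the single variable leaf of $C_1$. The check itself is immediate from the definition of substitution as identification of a node with a root.

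The argument goes through unchanged when $C$ or $C_i$ is the trivial context, whose root is its variable node. In that case the corresponding substitution is the identity, as in the degenerate case handled in the proof of Lemma~\ref{lem:treereplace}. I expect no real obstacle; the mild subtlety is just that the definition of $\mathrm{Type}(C)$ as a function on types presupposes Lemma~\ref{lem:treereplace}.
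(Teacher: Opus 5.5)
Your proof is correct and is essentially the paper's proof: evaluate $\mathrm{Type}(C_1)=\mathrm{Type}(C_2)$ at $\mathrm{Type}(T)$ to get $\mathrm{Type}(C_1[T])=\mathrm{Type}(C_2[T])$, then apply Lemma~\ref{lem:treereplace} with outer context $C$. Your remark on composing substitutions is harmless but unnecessary, since $C[C_1[T]]$ is read directly as $C$ applied to the tree $C_1[T]$.
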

\begin{proof}
    Trivially $\mathrm{Type}(T)=\mathrm{Type}(T)$; since
    $\mathrm{Type}(C_1)=\mathrm{Type}(C_2)$, we have that 
    $$\mathrm{Type}(C_1[T])=\mathrm{Type}(C_1)(\mathrm{Type}(T))=\mathrm{Type}(C_2)(\mathrm{Type}(T))=\mathrm{Type}(C_2[T]),$$
    and by Lemma~\ref{lem:treereplace} $\mathrm{Type}(C[C_1[T]])=\mathrm{Type}(C[C_2[T]])$.
\end{proof}
\subsubsection{Finitely representing types}\label{sec:fintypes}
Each tree is associated to a type by a complete, deterministic rational multiset tree automaton $\mathcal{A}=(\Sigma,Q,\delta,F)$;
this type is the state assigned to the root by the automaton. If we focus on the
transition function, we see that there are two factors determining the possible states of
children of the root: the root's \emph{label} and its \emph{state} -- to each pair of a
label and state, we naturally associate a semilinear set $\delta(s,q)$. By
Lemma~\ref{lem:semibase} we know there is a value $d(\delta(s,q))$, which we simply denote
$d(s,q)$, such that every element of $\delta(s,q)$ has a subset of size $\leq d(s,q)$ that
is also in $\delta(s,q)$. We define $\Delta_R$ to be the maximum value of $d(s,q)$ over
all choices of $s\in\Sigma$ and $q\in Q$; we will use this property during the proof to
\emph{reduce} the number of children a node has while maintaining its state. Formally:
\begin{lemma}\label{lem:redtree}
    Let $T$ be a labeled rooted tree, let $s\in\Sigma$ be the label of its root, and let $\mathcal{C}(T)$ be the set of subtrees rooted at the
    children of the root of $T$. Then there is a labeled rooted tree $T'$ such that:
    \begin{enumerate}[(1)]
        \item the root of $T'$ is labeled $s$,
        \item $\mathrm{Type}(T')=\mathrm{Type}(T)$,
        \item $\mathcal{C}(T')\subseteq\mathcal{C}(T)$, and
        \item $\lvert\mathcal{C}(T')\rvert\leq\Delta_R$.
    \end{enumerate}
\end{lemma}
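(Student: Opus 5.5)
Fix the deterministic, complete RatMTA $\mathcal{A}=(\Sigma,Q,\delta,F)$, and let $q=\mathrm{Type}(T)$ be the state its unique run assigns to the root of $T$. Write $\mathcal{C}(T)=\{T_1,\ldots,T_m\}$ and $q_i=\mathrm{Type}(T_i)$. Since the run on $T$ restricts to the run on each $T_i$, the multiset $M=\{\!\{q_1,\ldots,q_m\}\!\}$ lies in $\delta(s,q)$.

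The plan is to apply Lemma~\ref{lem:semibase} to the semilinear set $\delta(s,q)$; it is semilinear by Theorem~\ref{thm:semilinear}, because it is rational. This gives a sub-multiset $M'\subseteq M$ with $M'\in\delta(s,q)$ and $\lvert M'\rvert\le d(s,q)\le\Delta_R$. Since $M'\subseteq M$, we can choose an index set $I\subseteq\{1,\ldots,m\}$ with $\{\!\{q_i:i\in I\}\!\}=M'$ by picking, for each state, as many children of that type as its multiplicity in $M'$. Define $T'$ as the tree whose root is labeled $s$ and whose children subtrees are exactly $T_i$ for $i\in I$.

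Properties (1), (3) and (4) then hold by construction: $\lvert\mathcal{C}(T')\rvert=\lvert M'\rvert\le\Delta_R$. For (2), the run of $\mathcal{A}$ on $T'$ agrees with the run on $T$ inside each retained subtree $T_i$, because runs are computed bottom-up and determinism makes them unique. So the root's children carry the multiset $M'\in\delta(s,q)$. By determinism and completeness, $\{\delta(s,q'')\}_{q''}$ partitions $\mathcal{M}(Q)$, hence the root of $T'$ receives state $q$.

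The only subtle point is this last step: we need $q$ to be forced as the root state of $T'$, not merely allowed. That is exactly what the partition property of deterministic, complete automata gives. A secondary point is that $\mathcal{C}(T)$ must be read as a multiset of subtrees, so that isomorphic children are kept as separate copies. I would note this convention explicitly, so that the inclusion $\mathcal{C}(T')\subseteq\mathcal{C}(T)$ means the children of $T'$ form a subset of the actual children of $T$.
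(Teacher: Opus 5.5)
Your proposal is correct and follows essentially the same route as the paper: you apply Lemma~\ref{lem:semibase} to the multiset $M\in\delta(s,\mathrm{Type}(T))$ of the children's states, keep a sub-multiset of actual children realizing $M'$, and rebuild $T'$ with an $s$-labeled root. Your explicit appeal to the partition property of deterministic, complete automata for item (2), and your reading of $\mathcal{C}(T)$ as a multiset of children, spell out points the paper's proof leaves implicit.
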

\begin{proof}
Let $M(\mathcal{C}(T))$ be the multiset of states of the children of the root of $T$. Apply
Lemma~\ref{lem:semibase} to $M(\mathcal{C}(T))\in\delta(s,\mathrm{Type}(T))$; by definition of
$\Delta_R$ there is a $M'\subseteq M$ such that $M'\in\delta(s,\mathrm{Type}(T))$ and $\lvert M'\rvert\leq
d(s,\mathrm{Type}(T))\leq\Delta_R$. Observe that $M$ is a bijection between
$\mathcal{C}(T)$ and $M(\mathcal{C}(T))$, and define $C:=M^{-1}(M')$.
Construct a tree $T'$ with a $s$-labeled root and children the roots of elements of $C$;
then (1) and (3) hold by definition, (2) holds since $M'\in\delta(s,\mathrm{Type}(T))$ and
(4) holds since $\lvert C\rvert=\lvert M'\rvert\leq\Delta_R$.
\end{proof}
We would like to define a similar procedure for context types: however, they cannot be
easily represented by a single semilinear set. Instead, we restrict to contexts where the
variable node is an immediate child of the root, which we call \emph{direct contexts}. We can identify a
direct context with a pair of the label of its root and the multiset of types of the other children of the
root; then a direct context \emph{type} can be represented by the set of all pairs
representing its elements or as a function of types (see previous section).
\begin{lemma}\label{lem:contypes}
    Let $\mathcal{A}=(\Sigma,Q,\delta,F)$ be a deterministic, complete RatMTA. Then there
    is a value $\Delta_C\in\mathbb{N}$, computable from the description of $\mathcal{A}$,
    such that for every pair $(s,M)$ representing a direct context type $f$ we can compute
    a multiset $M'$ such that: 
    \begin{itemize}
        \item $(s,M')$ represents $f$,
        \item $\lvert M'\rvert\leq \Delta_C$, and
        \item $M'\subseteq M$.
    \end{itemize}
\end{lemma}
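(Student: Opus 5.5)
The plan is to show that, for a fixed root label $s$ and a fixed direct context type $f$, the set
$$S_{s,f}=\{M\in\mathcal{M}(Q)\,:\,(s,M)\text{ represents }f\}$$
is semilinear. Once that is established, Lemma~\ref{lem:semibase} gives a computable bound $d(S_{s,f})$ and a submultiset $M'\subseteq M$ in $S_{s,f}$ of size at most $d(S_{s,f})$. We then take
$$\Delta_C=\max_{s\in\Sigma,\,f}d(S_{s,f}),$$
where the maximum runs over all $s$ and all functions $f:Q\to Q$ that are realized. There are only finitely many such pairs, so $\Delta_C$ is a well-defined finite maximum.

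First we characterize $S_{s,f}$. Since $\mathcal{A}$ is deterministic and complete, the direct context $(s,M)$ sends a type $q$ to the unique state $p$ with $M+\{q\}\in\delta(s,p)$. Hence $(s,M)$ has type $f$ exactly when $M+\{q\}\in\delta(s,f(q))$ holds for every $q\in Q$. This gives
$$S_{s,f}=\bigcap_{q\in Q}\bigl\{M\,:\,M+\{q\}\in\delta(s,f(q))\bigr\}.$$
It therefore suffices to show that each set $L_{q,p}=\{M: M+\{q\}\in\delta(s,p)\}$ is semilinear. The corollary to Theorem~\ref{thm:semilinear} then says intersections of semilinear sets are semilinear. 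Semilinearity of $L_{q,p}$ is the one genuinely new step, and it is the main obstacle. I would handle it through Presburger definability, using Theorem~\ref{thm:semilinear}(3). Let $\varphi_{p}(x_1,\ldots,x_{|Q|})$ be a Presburger formula defining $\delta(s,p)$. Then $L_{q,p}$ is defined by $\varphi_p$ with the coordinate $x_q$ replaced by $x_q+1$, which is again a Presburger formula. A more hands-on alternative is to shift each linear component $u_0+\sum\mathbb{N}u_i$: keep the generators $u_i$, and where necessary subtract $\{q\}$ from the base or from one generator, casing on which term contributes the extra copy of $q$. The Presburger route avoids this bookkeeping.

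Computability follows from effectivity at each stage. From the description of $\mathcal{A}$ we can write down each $\varphi_p$ and form the substituted formulas. We enumerate the finitely many pairs $(s,f)$ and build the conjunction defining $S_{s,f}$. Converting a Presburger formula to an explicit semilinear representation is effective (quantifier elimination, as in \cite{ginsburgPresburger1966}), and emptiness can be decided, which lets us discard unrealized $f$. We then read off the maximal base-vector length as in the proof of Lemma~\ref{lem:semibase}. Finally, given a specific $(s,M)$ we first compute its type $f$ by evaluating membership in the sets $\delta(s,p)$, which is decidable. We then search the finitely many submultisets of $M$ of size at most $\Delta_C$ for one lying in $S_{s,f}$; such an $M'$ exists by Lemma~\ref{lem:semibase}. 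This $M'$ satisfies all three bullet points of the statement.
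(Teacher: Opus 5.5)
Your proof is correct and follows essentially the same route as the paper. Both proofs write the set of multisets realizing $f$ with root label $s$ as $\bigcap_{q\in Q}\{M : M+\{q\}\in\delta(s,f(q))\}$, show each shifted set is semilinear, intersect, and take $\Delta_C$ as the maximum of the bounds from Lemma~\ref{lem:semibase}. The only difference is that you get semilinearity of the shifted set from the Presburger substitution $x_q\mapsto x_q+1$, whereas the paper does the explicit linear-component case split you list as the hands-on alternative: subtract $q$ from the base, or from the base plus a generator containing $q$.
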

\begin{proof}
Consider a pair of states $q,q_f\in Q$; we want to show that the set $\mathrm{contexts}(s,q,q_f)$
of all direct contexts $C$ (represented as multisets of states) such that $\mathrm{Type}(t)=
q\Rightarrow \mathrm{Type}(C[t])=q_f$ and the root of $C$ is labeled $s$ is a semilinear set.
The set $\delta_{det}(s,q_f)$ is semilinear: we iterate over its linear components
$L_{s,q_f,1}\cup\ldots\cup L_{s,q_f,m}$ and remove one instance of $q$ where possible.

Formally, let 
$$L=\mathbf{b}+\mathbb{N}\mathbf{a_1}+\ldots+\mathbb{N}\mathbf{a_w}$$
where $\mathbf{b},\mathbf{a_1},\ldots,\mathbf{a_w}$ are $Q$-indexed vectors of
$\mathbb{N}$; let $\pi_q$ be the projection to the index $q$. Let $\mathbf{q}$ be the
$q$-th base vector, which is $0$ everywhere except at $q$ where it is $1$. Then 
$$\mathrm{remove}_q(L)=\begin{dcases}
    (\mathbf{b}-\mathbf{q})+\mathbb{N}\mathbf{a_1}+\ldots+\mathbb{N}\mathbf{a_w} & \text{if
        }\pi_q(\mathbf{b})\neq 0\\
    \emptyset & \text{if }\pi_q(\mathbf{b})=0,\,\forall i\leq w\,\pi_q(\mathbf{a_i})=0\\
    \bigcup_{\substack{1\leq i\leq l\\\pi_q(\mathbf{a_i})\neq 0}}(\mathbf{b}+\mathbf{a_i}-\mathbf{q})+\mathbb{N}\mathbf{a_1}+\ldots+\mathbb{N}\mathbf{a_w} &\text{otherwise}
\end{dcases} $$
Then we can define $\mathrm{contexts}(s,q,q_f)$ as the union of all the remove$_q(L_{s,q_f,i})$; this
is a finite union of semilinear sets, so it is semilinear.

Given a context type as a function of types $f:Q\to Q$, we can explicitly represent it as
a set of pairs $\{(q,f(q))\}_{q\in Q}$; then the set of all direct contexts of type $f$
which have their root labeled $s$ is the set
$$\mathrm{contexts}(f,s):=\bigcap_{q\in Q}\mathrm{contexts}(s,q,f(q))$$
which is a finite intersection of semilinear sets, so it is semilinear. Then by
Lemma~\ref{lem:semibase} there is a value $D(f):=D(\mathrm{contexts}(f,s))$ such that for
all $M\in\mathrm{contexts}(f,s)$ we can find a $M'\subseteq M$ such that
$M'\in\mathrm{contexts}(f,s)$ and $\lvert M'\rvert \leq D(f)$. Finally, we
have finitely many possible functions $f$ and labels $s$, so we explicitly
define $\Delta_C$ as the maximum value of $D(f)$ over all $\mathrm{contexts}(f,s)$.
\end{proof}
\begin{corollary}\label{cor:replacecontext}
    Let $C$ be a labeled direct context, let $s\in\Sigma$ be the label of its root, and
    let $\mathcal{C}(C)$ be the set of subtrees rooted at the non-variable 
    children of the root of $C$. Then there is a labeled direct context $C'$ such that:
    \begin{enumerate}[(1)]
        \item the root of $C'$ is labeled $s$,
        \item $\mathrm{Type}(C')=\mathrm{Type}(C)$,
        \item $\mathcal{C}(C')\subseteq\mathcal{C}(C)$, and
        \item $\lvert\mathcal{C}(C')\rvert\leq\Delta_C$.
    \end{enumerate}
\end{corollary}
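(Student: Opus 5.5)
The plan is to derive Corollary~\ref{cor:replacecontext} from Lemma~\ref{lem:contypes}, in the same way Lemma~\ref{lem:redtree} was derived from Lemma~\ref{lem:semibase}. The link is that a direct context is identified with the pair formed by its root label and the multiset of types of its non-variable children.

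Let $C$ be a labeled direct context with root label $s$. Let $M := M(\mathcal{C}(C))$ be the multiset of types (root states under the deterministic, complete $\mathcal{A}$) of the subtrees in $\mathcal{C}(C)$. As in the proof of Lemma~\ref{lem:redtree}, read $M$ as a bijection between $\mathcal{C}(C)$ and the elements of the multiset, with multiplicities kept distinct.

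The first step is to check that $\mathrm{Type}(C)$ depends only on $(s,M)$. For any tree $T$, the root of $C[T]$ is labeled $s$, and its children have the multiset of types $M+\{\mathrm{Type}(T)\}$. Since $\mathcal{A}$ is deterministic and complete, $\mathrm{Type}(C[T])$ is the unique $q_f$ with $M+\{\mathrm{Type}(T)\}\in\delta(s,q_f)$. This is exactly the statement that $(s,M)$ lies in $\mathrm{contexts}(f,s)$ for $f=\mathrm{Type}(C)$, i.e.\ that $(s,M)$ represents $f$.

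The second step applies Lemma~\ref{lem:contypes} to get $M'\subseteq M$ with $(s,M')$ representing $f$ and $\lvert M'\rvert\le\Delta_C$. Set $\mathcal{C}' := M^{-1}(M')\subseteq\mathcal{C}(C)$. Build $C'$ from a root labeled $s$, whose children are a fresh variable leaf $x$ together with the roots of the subtrees in $\mathcal{C}'$.

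The third step checks the four properties. Properties (1), (3) and (4) hold by construction. For (2), the multiset of child types of the root of $C'$ (excluding $x$) is $M'$. By the first step applied to $C'$, the type of $C'$ is the function $q\mapsto$ the unique $q_f$ with $M'+\{q\}\in\delta(s,q_f)$. Because $(s,M')$ represents $f$, this function agrees with $f$ on all $q$, so $\mathrm{Type}(C')=f=\mathrm{Type}(C)$.

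The only delicate point is the first step: the set-of-pairs representation in Lemma~\ref{lem:contypes} must coincide with the function-of-types definition of $\mathrm{Type}(C)$. In particular, the $\mathrm{remove}_q$ construction must capture exactly the multisets $M$ with $M+\{q\}\in\delta(s,q_f)$. I would verify this directly: a multiset $N$ containing $q$ lies in a linear component $L$ if and only if $N-\{q\}\in\mathrm{remove}_q(L)$, by a case split on whether the base vector or some period vector supplies the copy of $q$. This also handles the case where $x$ is the only child, since then $M=M'=\emptyset$.
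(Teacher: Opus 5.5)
Your proposal is correct and follows the paper's intended route: the paper gives no separate proof, treating the corollary as the direct-context analogue of Lemma~\ref{lem:redtree}, obtained by applying Lemma~\ref{lem:contypes} to the pair $(s,M)$ and keeping only the children indexed by $M'$. Your explicit checks, that $\mathrm{Type}(C)$ is determined by $(s,M)$ and that $\mathrm{remove}_q(L)=\{N : N+\{q\}\in L\}$, fill in steps the paper only asserts.
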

\subsection{Constructing automata certificates}\label{sec:auto_gap}
Let $\mathcal{A}=(\Sigma,Q,\delta,F)$ be a rational multiset tree automaton, and
let $\mathcal{G}_\mathcal{A}$ be the class of all rooted trees which are labeled with an \emph{accepting} run
for $\mathcal{A}$. We define the \emph{automaton certification} problem associated with $\mathcal{A}$ as
$\Pi_\mathcal{A}:=(\Sigma,Q,\mathcal{G}_\mathcal{A})$, and denote the class of all
problems associated to RatMTA as \emph{automata certification problems}.
\autogapthm*
The proof of Theorem~\ref{thm:autogap} takes up all of Section~\ref{sec:auto_gap}; a high
level overview of it, including references to the specific sections, can be found in Section~\ref{sec:results}. 
\subsubsection{Preprocessing: partially labeled tree types}\label{par:types}
We denote the \emph{type} of an unlabeled rooted tree as the set of all labelings of the root that can
appear in a valid labeling of the rooted tree. Our choice of notation seems to conflict
with the definition of \emph{type} given in Section~\ref{sec:autypes}; however, by
constructing a suitable deterministic and complete automaton $\mathcal{A}_{det}$ which
uses sets of states of $\mathcal{A}$ as states, the two notations coincide. This
construction can also be used as a proof of Theorem~\ref{thm:deterministic}, as the two
automata recognize the same language of rooted trees.

Let $\mathcal{A}=(\Sigma,Q,\delta,F)$; we define $Q_{det}=2^Q$, and define an operation
$h:\mathcal{M}(Q)\to 2^{\mathcal{M}(2^Q)}$ which
sends a multiset $M$ of states $Q$ to the set of all multisets $M'$ of sets of states $2^Q$ for
which there is a \emph{choice function}, a bijection $c:M'\to M$ which sends a set to
one of its element. Then $F_{det}:=h(F)=\bigcup_{Q\in F}h(Q)$ and for all $s\in\Sigma$ we define
$$\delta_{det}(s,H)=\bigcap_{q\in H}h(\delta(s,q))\cap \bigcap_{q\in Q\smallsetminus H}h(\delta(s,q))^C$$
where the complement is taken with respect to the set of all multisets of $2^{Q}$. In simpler words, instead of labeling nodes with the states of a run, we label them
with exactly the set of states they might receive in \emph{any} run. 

Observe that $h$ sends semilinear sets to semilinear sets; the simplest way to observe that is
taking the multiset automaton representation of $M$ (see Theorem~\ref{thm:semilinear}) and
replacing every transition corresponding to $q$ with transitions corresponding to all sets
of states containing $q$ -- since additionally semilinear sets are closed under finite
intersection and complement, $\mathcal{A}_{det}=(\Sigma,2^Q,\delta_{det},F_{det})$ is a
rational multiset tree automaton.Formally, observe that for any $s\in\Sigma$ the family $\{\delta_{det}(s,H)\}_{H\subseteq Q}$ are the atoms of the
$\sigma$-algebra generated by the family $\{h(\delta(s,q))\}_{q\in Q}$ over all multisets
of $2^Q$; then they form a partition of $\mathcal{M}(2^Q)$ and $\mathcal{A}_{det}$ is
deterministic and complete.

We call a tree which is $\Sigma$-labeled \emph{unlabeled}, in the sense that it has not
been assigned an output label from $Q$. Then the \emph{type} of an \emph{unlabeled} rooted tree $T$ by the definition above corresponds to the
\emph{type} of $T$ according to $\mathcal{A}_{det}$. Recall
Corollary~\ref{cor:typereplace}: we can replace a rooted subtree by one of the same type
without affecting the overall labeling. During the run of our algorithm, however, we want
to compute the type of \emph{partially labeled} rooted trees, for which some states have
already been assigned: to do so, we further modify $\mathcal{A}_{det}$.

Define an automaton
$\mathcal{A}_{lab}=(\Sigma\times(Q\uplus\{*\}),2^Q,\delta_{lab},F_{lab})$ over
$\Sigma\times(Q\uplus\{*\})$-labeled rooted trees,
where $*$ represents unlabeled nodes. Explicitly, we define
\begin{gather*}
    F_{lab}=\left\{Q_F\in 2^Q\,\middle|\,Q_F\cap F\neq\emptyset\right\}\\
    \forall s\in \Sigma:\,\delta_{lab}((s,*),Q_0)=\delta_{det}(s,Q_0)\\
    \forall s\in \Sigma:\,\delta_{lab}((s,q),Q_0)=\begin{cases}
        \bigcup_{Q_1\ni q}\delta_{det}(s,Q_1)&\text{ if }Q_0=\{q\}\\
        \bigcup_{Q_1\not\ni q}\delta_{det}(s,Q_1)&\text{ if }Q_0=\emptyset\\
        \emptyset&\text{ otherwise}
    \end{cases}
\end{gather*}
Since semilinear sets are closed under finite union, this is a rational multiset tree
automaton; additionally, for any $(s,q)\in \Sigma\times Q$ the sets $\delta_{lab}((s,q),\{q\})$ and
$\delta_{lab}((s,q),\emptyset)$ form a partition of $\mathcal{M}(2^Q)$ while the others are
empty -- then this is still a deterministic, complete automaton. We will define the
\emph{type} of a partially $Q$-labeled rooted tree $T$ as the
type of $T$ according to the automaton $\mathcal{A}_{lab}$ -- we could equivalently define
it as the set of all output labels of its root that can appear in a full labeling which is
coherent with the given partial labeling, but we will leave the proof of this equivalence
for Lemma~\ref{lem:typeauto}. Similarly, we define context
types and direct context types according to $\mathcal{A}_{lab}$ as in Section~\ref{sec:autypes}.
\subsubsection{Degree reduction by forgetting some children}\label{par:degred}
We make a few remarks about the similarities and differences between the types described
in the previous section and concepts used in the bounded-degree versions of this proof.
\begin{remark}
    We can roughly equate \emph{direct context types} with \emph{rooted tree classes} from
    \cite{treesChang2020} and \emph{context types} with \emph{bipolar tree types} by
    replacing the variable node with a pole and adding another pole at the root. In the
    same way as we can represent a bipolar tree as a string of rooted trees $H=(T_1,\ldots,T_c)$,
    we can represent a context as a composition of direct contexts $C=C_1\circ\ldots\circ C_c$,
    and both of those operation distribute over types.
\end{remark}
In fact, the equivalence relation induced by two contexts having the same type is a
\emph{coarsening} of the one obtained by two bipolar trees having the same type; this is
necessary, as without a bound on the degree there are infinitely many bipolar tree type
equivalence classes. However, this coarsening still satisfies one important property: we
can replace a bipolar tree with another of the same context type without changing the
types of the rest of the graph (see Corollary~\ref{cor:typereplace} and
Lemma~\ref{lem:contextreplace}).

One result of the coarsening of the equivalence relation is that two trees can be
equivalent even if they have a different topology (including input labels), specifically
different maximum degrees.
In fact, Lemmas~\ref{lem:redtree} and~\ref{lem:contypes} show that we can use the
properties of semilinear sets to replace high-degree partially labeled rooted trees and direct
contexts with low-degree rooted trees and direct contexts of the same types. We make this
explicit by defining two versions of a procedure called $\mathrm{ForgetChildren}$.
\begin{itemize}
    \item $\mathrm{ForgetChildren}_R(v)$: consider a node $v$ which is aware of the types of all of its
children, and let $T_v$ be the subtree rooted at $v$. By Lemma~\ref{lem:redtree} we can
find a tree $T'_v$ of the same type, sharing the same input and partial output labels for the root, and whose children are a subset of the
children of $v$; note that because we are already aware of the types of all children, we
can find this subset of children without any extra communication. Procedure $\mathrm{ForgetChildren}_R(v)$ causes the node to ignore the
edges corresponding to the children which are \emph{not} in this subset, virtually
replacing itself with $T'_v$, until procedure $\mathrm{Remember}$ is called.
    \item $\mathrm{ForgetChildren}_C(v)$: consider a node $v$ which is aware of the types of all but one of its
children. Treat the unknown child as a variable node and $v$ as the root of a direct
context $C_v[x]$: by Corollary~\ref{cor:replacecontext} there is a direct context $C'_v[x]$ with the
same type and sharing the same input and partial output labels for the root, such that the nonvariable children of $C'_v$ are a subset of
those of $C_v$. Procedure $\mathrm{ForgetChildren}_C(v)$ causes the node to ignore the
edges corresponding to the children which are \emph{not} in this subset, virtually
replacing itself with $C'_v$, until procedure $\mathrm{Remember}$ is called.
\end{itemize}
These procedures reduce the degree of a node $v$ to either $\leq\Delta_R+1$ (if
$\mathrm{ForgetChildren}_R(v)$ is called -- the possible incoming edge of $v$ is not
considered) or $\leq\Delta_C+2$ (if
$\mathrm{ForgetChildren}_C(v)$ is called -- the edges going into $v$ and to the variable
node are not considered). However, we need to compute the types of almost all children of a
node to apply them, which is not trivial for trees of high height; to help, we will show
that we can \emph{precompute} certain labels to ``split'' our tree into pieces of more
manageable size.
\subsubsection{Feasible functions}\label{par:feasfunc}
We define $\Pi_{\mathcal{A},\Delta}$ as the natural restriction of $\Pi_\mathcal{A}$ to the
family of rooted trees of maximum degree $\leq \Delta := \max\{\Delta_R+1,\Delta_C+2\}$
(which we call the \emph{reduction degree} of $\Pi_{\mathcal{A}}$). This is a $\mathsf{LCL}$ with checkability
radius $1$ on rooted trees, which we can convert into a problem on unrooted trees with
inputs by encoding the direction in the input labels\footnote{To make sure the input
labels are on nodes, the conversion splits every edge in four parts; so
$\Pi_{\mathcal{A}}$ has checkability radius 1 on rooted trees, but 4 on unrooted trees.
Note, however, that a feasible function labeling the $3$-hop neighborhood of an edge in
the unrooted tree version will label the $0$-hop neighborhood of the related edge in the
rooted tree version.}; see Appendix~\ref{app:conversion} for details.

Ideally, we would like to apply the $\mathrm{ForgetChildren}$ procedures we just described
to the whole tree instance, so that we can apply the algorithm for
$\Pi_{\mathcal{A},\Delta}$ directly. However, as observed above, computing the types of
every node in the graph is not efficient, as in general we cannot find the type of a
parent node before its child.

Instead of treating the algorithm for $\Pi_{\mathcal{A},\Delta}$ as a black-box, we use to
extract a \emph{feasible function}, which allows us to precompute some labels for nodes that know the types of all
their children but one -- by the way the partially labeled type automaton
$\mathcal{A}_{lab}$ is defined, assigning a state (output label) $q$ to a node fully determines its
type as $\{q\}$ as long as the label was chosen in a way that can be extended to a valid labeling; otherwise its type
will be $\emptyset$. Computing the type of a parent node before its child in this way is
necessary to break up long directed paths, where we cannot afford to wait to label the 
start of the path until all the other nodes have chosen a label one by one. 
\begin{definition}
Let $H$ be a partially labeled bipolar tree with core path $(v_1,\ldots,v_x)$
of length $x \geq \ell$ (where $\ell$ is a constant value depending only on the description of
$\Pi_{\mathcal{A},\Delta}$). Let $e$ be the middle edge between $v_{\lfloor x/2\rfloor}$
and $v_{\lfloor x/2\rfloor+1}$, and assume that $N^{r-1}(e)$ is unlabeled; let $f$ be a
function assigning labels to the elements of $N^{r-1}(e)$. Then $f$ is a
\emph{feasible} function for $H$ if for every partially labeled graph $(G,l)$ which
contains $H$ as a subgraph and for $l'$ the labeling obtained by extending $l$ with the
labels of $f$, if $l$ can be extended to a full solution then $l'$ can be extended to a
full solution.
\end{definition}
A function is $\infty$-feasible if it is feasible for every partially labeled bipolar tree $H$.
\begin{lemma}\label{lem:decidelog}
    For a $\mathsf{LCL}$ problem $\Pi$ on undirected trees, the following are equivalent:
    \begin{enumerate}[(1)]
        \item $\Pi$ is solvable in $n^{o(1)}$ rounds in the randomised $\mathsf{LOCAL}$ model,
        \item $\Pi$ is solvable in $O(\log n)$ rounds in the deterministic $\mathsf{LOCAL}$ model, and
        \item there is a $\infty$-feasible function for $\Pi$.
    \end{enumerate}
    Additionally, whether $(3)$ holds is decidable.
\end{lemma}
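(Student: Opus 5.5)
The plan is not to reprove this from scratch but to assemble it from the Chang--Pettie machinery \cite{changHierarchy2019,treesChang2020}, checking that our definition of feasible function matches theirs closely enough. I will prove the cycle $(2)\Rightarrow(1)\Rightarrow(3)\Rightarrow(2)$, and treat decidability separately.

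The implication $(2)\Rightarrow(1)$ is immediate. A deterministic algorithm can be run in the randomized model by letting each node draw an $O(\log n)$-bit ID at random. With polynomially large ID space these IDs are distinct with probability at least $1-1/n$, and $O(\log n)=n^{o(1)}$.

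For $(3)\Rightarrow(2)$ I follow the rake-and-compress algorithm.
\begin{itemize}
\item First compute, in $O(\log n)$ deterministic rounds, a decomposition of the tree into $O(\log n)$ layers. Each layer is either a ``rake'' layer (nodes of bounded height hanging off higher layers) or a ``compress'' layer (paths whose length lies in $[\ell,2\ell]$). This is the same decomposition later used in Appendix~\ref{app:decomposition}.
\item Process the layers bottom-up. Each compress path, together with everything already processed hanging off it, is a partially labeled bipolar tree $H$ of length at least $\ell$.
\item Apply the $\infty$-feasible function $f$ to label $N^{r-1}(e)$ for the middle edge $e$ of $H$. Since $f$ is feasible for every partially labeled $H$, each application preserves extendability.
\item By induction over the $O(\log n)$ layers, the final partial labeling is extendable.
\item Complete the labeling top-down. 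Every remaining unlabeled region has diameter $O(\log n)$ and is bounded by prelabeled nodes, so the completion takes $O(\log n)$ rounds.
\end{itemize}

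The main obstacle is $(1)\Rightarrow(3)$. I would follow Chang--Pettie's argument closely.
\begin{itemize}
\item Take a randomized $n^{o(1)}$ algorithm $\mathcal{B}$.
\item Fix a constant $\ell$ exceeding the pumping constant, which is the number of bipolar tree types with respect to $\Pi$. This number is finite because degree and radius are bounded.
\item Define $f$ on each bipolar type $H$ by pumping $H$ into a bipolar tree whose core path is longer than the round complexity of $\mathcal{B}$ at a chosen large virtual size $N$.
\item Embed the pumped tree into a canonical completion whose size is at most $N$.
\item Read off, via a probabilistic existence argument, a labeling of $N^{r-1}(e)$ produced by $\mathcal{B}$ that is consistent with every way of closing off the two poles.
\end{itemize}
The key step is showing that the middle edge's view is independent of the outside. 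The pumped path is much longer than $T(N)=N^{o(1)}$ while the number of nodes stays at most $N$, so $\mathcal{B}$ cannot see the poles from the middle edge. A union bound over the finitely many type contexts then shows that some single choice of the middle labels is extendable in every context. This is exactly where the $n^{o(1)}$ bound is needed: pumping inflates the size only polynomially in the path length. Feasibility for \emph{partially} labeled $H$, which is what $\infty$ demands, follows because the types already record partial labels and the same argument applies verbatim.

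For decidability, condition (3) is a statement about a finite object. A candidate $f$ is a finite table from bipolar types (of length in $[\ell,2\ell]$, partially labeled) to labelings of $N^{r-1}(e)$. Feasibility of $f$ on one type is a finite check: compare the set of extendable boundary labelings before and after applying $f$, computed by dynamic programming over types. Closing the set of partially labeled types under repeated application of $f$ reaches a fixed point in finitely many steps, since the type space is finite. So we can enumerate all candidate $f$ and test each, exactly as in \cite{changHierarchy2019}.
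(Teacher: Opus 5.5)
Your overall strategy, assembling the equivalence from the Chang--Pettie machinery, is exactly the paper's: its proof is a one-line citation of Lemmas~14--16 of \cite{changHierarchy2019}. Your sketches of $(2)\Rightarrow(1)$, $(3)\Rightarrow(2)$ and decidability follow Chang--Pettie faithfully. The genuine gap is in the step you single out yourself: $\infty$-feasibility for \emph{partially labeled} $H$ does not follow ``verbatim''. $\mathcal{B}$ runs on unlabeled instances and computes its output from scratch. If $H$ already carries labels from earlier applications of $f$, nothing makes $\mathcal{B}$'s output on a pumped copy of $H$ agree with them, so $\mathcal{B}$'s success on the virtual tree certifies nothing about $H$'s existing labels together with the new middle labels. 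Your own size accounting exposes this. A single pumped copy of $H$ (core length $w$), closed off by constant-size class representatives, has $O(w)$ nodes. So your argument would only need $T(N)=o(N)$, and would turn every $o(n)$-round LCL on trees into an $O(\log n)$ one, contradicting the known existence of $\Theta(n^{1/k})$ problems for every $k$.

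The missing idea is the recursive virtual tree. The labels of $f$ at successive levels are read off a single good run of $\mathcal{B}$ on one nested virtual tree, so that every previously fixed label inside $H$ is itself $\mathcal{B}$'s output at that spot. The number of levels is bounded by a constant $c=c(\Pi)$ because the finitely many classes generated by the process stabilize; this is the same fixed-point observation you invoke for decidability. The virtual tree then has $w^{O(c)}$ nodes, and \emph{that} is where $n^{o(1)}$, rather than $o(n)$, is needed: $T(w^{O(c)})=o(w)$.

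Second, you promise to check that the paper's notion of feasibility matches Chang--Pettie's but never do, and as literally worded it does not match. The definition quantifies over \emph{all} partially labeled $(G,l)$ containing $H$. Your union bound only covers closings of the poles by unlabeled trees, since those are the only instances $\mathcal{B}$ can be run on. The literal notion is in fact too strong. Take output labels $(A,0),(A,1),B$ with the rule that every neighbour of an $(A,c)$-node is an $(A,1-c)$-node. This $\Pi$ is solvable in $0$ rounds (label everything $B$). Yet attaching one node labeled $(A,0)$, respectively $(A,1)$, to a pole of an unlabeled $H$ gives two extendable partial labelings that force opposite labels on $N^{r-1}(e)$, so no $f$ is feasible for $H$. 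Also, the literal notion is vacuous when $\Pi$ has no solutions at all, so your base case in $(3)\Rightarrow(2)$ needs solvability as a separate hypothesis. You need to state and prove the lemma for Chang--Pettie's process-relative notion, in which the only partial labels ever present are those produced by $f$ itself.
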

\begin{proof}
    This statement is a summary of Lemmas 14,15,16 from \cite{changHierarchy2019}.
\end{proof}
For problems of higher complexities, we require a weaker concept of feasibility. We say
that a function $f$ is $1$-feasible if it is feasible for any \emph{unlabeled} bipolar
tree; observe that applying $f$ to a bipolar tree can (and often will) change its type,
and even create trees of types that are never realized by unlabeled trees; but those types can be
accounted for and computed by the partially labeled tree automaton $\mathcal{A}_{lab}$.

We say a partially labeled tree has \emph{labeling layer} $k$ if it can be obtained by
applying $f$ in parallel to arbitrarily many bipolar subtrees of labeling layer $k-1$,
where unlabeled trees have labeling layer $1$. Then a function is $k$-feasible if it is
feasible on all trees of labeling layer $k$. Note that up to replacing subtrees with some
of the same type, trees of labeling layer $k$ correspond to trees in $\mathscr{H}_k$ from
\cite{treesChang2020}.
\begin{lemma}\label{lem:decidek}
    For a LCL problem $\Pi$ on undirected trees, the following are equivalent:
    \begin{enumerate}[(1)]
        \item $\Pi$ is solvable in $o(n^{1/(k-1)})$ rounds in the randomised $\mathsf{LOCAL}$ model,
        \item $\Pi$ is solvable in $O(n^{1/k})$ rounds in the deterministic $\mathsf{LOCAL}$ model, and
        \item there is a $k$-feasible function for $\Pi$.
    \end{enumerate}
    Additionally, whether $(3)$ holds is decidable.
\end{lemma}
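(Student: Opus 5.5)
This lemma, like Lemma~\ref{lem:decidelog}, should be a repackaging of the Chang (and Chang--Pettie) classification for $\mathsf{LCL}$s on bounded-degree trees, and I would prove it by assembling results from \cite{treesChang2020} rather than from scratch. The plan is to prove the cycle of implications $(2)\Rightarrow(1)\Rightarrow(3)\Rightarrow(2)$ and then address decidability.

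$(2)\Rightarrow(1)$ is immediate. A deterministic algorithm can be simulated in the randomized model by having nodes draw $O(\log n)$ random bits as IDs, which are unique with probability $1-1/n$. Also $O(n^{1/k})=o(n^{1/(k-1)})$ for $k\geq 2$; for $k=1$ the statement reads $o(n^{1/0})$, which I interpret as ``solvable at all'', and then it is trivial because diameter-many rounds suffice.

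$(1)\Rightarrow(3)$ is the lower-bound direction of Chang's hierarchy theorem. I would invoke the result that a randomized $o(n^{1/(k-1)})$-round algorithm can be used to construct a good sequence of labeling functions. Concretely, for every tree in the hierarchy class $\mathscr{H}_k$ of \cite{treesChang2020} one simulates the algorithm on a carefully built instance in which the bipolar tree is embedded between long paths, and reads off labels for the middle edge neighborhood. These labels are extendible with positive probability, hence extendible for some choice of randomness. I then translate the resulting function into our notion: as remarked before the lemma, labeling-layer-$k$ trees coincide with $\mathscr{H}_k$ up to replacing subtrees by others of the same type, and feasibility is preserved under such replacements by Lemma~\ref{lem:contextreplace} and Corollary~\ref{cor:typereplace}. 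So the function obtained is $k$-feasible in our sense.

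$(3)\Rightarrow(2)$ is the upper bound. I would compute a rake-and-compress style decomposition with $k$ compress layers of paths of length $\in[\ell,2\ell]$ and rake layers of height $O(n^{1/k})$, in $O(n^{1/k})$ rounds as in \cite{treesChang2020}. The labeling then proceeds layer by layer. First the types of rake layers are computed bottom-up. Then the $k$-feasible function is applied to the middles of compress paths; by induction, the bipolar trees encountered at step $i$ have labeling layer $\leq i$, so they are within the scope of the function. Finally the partial labeling is completed top-down, and feasibility guarantees extendibility at every step.

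Decidability follows because there are finitely many types and finitely many candidate functions on the $(r-1)$-neighborhood of the middle edge. One enumerates the finitely many types realized by layer-$i$ trees for $i\leq k$ via a fixed-point computation on the type-composition monoid, and checks feasibility of each candidate on every type. The main obstacle is the $(1)\Rightarrow(3)$ step: precisely matching our labeling-layer definition to $\mathscr{H}_k$ and verifying that Chang's randomized lower-bound argument (which needs $n$ large relative to $\ell$ and handles failure probability) yields a function that is feasible in exactly our sense. I would handle this by citing Chang's Theorem for the existence of a good sequence of $k$ functions, and then checking that type-equivalence preserves feasibility.
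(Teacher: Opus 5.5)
Your route is the paper's. Its proof is only a pointer to Lemmas 15, 17 and 18 of \cite{treesChang2020}, and your cycle $(2)\Rightarrow(1)\Rightarrow(3)\Rightarrow(2)$ plus the type-enumeration argument for decidability unpacks exactly those three citations. That is fine for $k\geq 2$.

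The step that fails is your treatment of $k=1$. Reading $o(n^{1/0})$ as ``solvable at all'' turns the lemma into the claim that every solvable $\mathsf{LCL}$ has a $1$-feasible function, and that is false.

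\emph{Counterexample.} Take $2$-colouring a rooted tree with the root forced to colour $0$. This is the automaton $\mathcal{A}_T$ from Section~\ref{sec:auto_redu} with $F$ replaced by $\{0\}$, or its undirected encoding from Appendix~\ref{app:conversion}. It is solvable on every tree in $O(n)$ rounds. However, whatever colours $f$ assigns in the middle of a long unlabelled directed path $H$, they fix the parity of the depth there. So of two host trees differing by one extra node above the top pole, $l$ is extendible in both but $l'$ in only one.

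\emph{Why your citation does not cover it.} Chang's lower-bound lemma, which you invoke for $(1)\Rightarrow(3)$, is a statement about sublinear algorithms: his simulation needs the running time to be small compared with the bipolar trees it builds, which is why his gaps are stated for $k\geq 2$. Nothing covers $k=1$, so the statement should be read for $k\geq 2$ only. Note that the paper's unsolvability corollary applies Theorem~\ref{thm:polyred} at $k=1$, so it leans on exactly this case.

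Two smaller points:
\begin{enumerate}
\item $(3)\Rightarrow(2)$ needs solvability of $\Pi$ as a standing hypothesis. Feasibility is only the implication ``extendible before $\Rightarrow$ extendible after'', so it holds vacuously on instances with no solution. On those instances your top-down completion has nothing to start from.
\item Lemma~\ref{lem:contextreplace} and Corollary~\ref{cor:typereplace} concern types of a deterministic complete RatMTA on rooted trees. For an arbitrary $\mathsf{LCL}$ on undirected trees, the replacement property you need is Chang's own one for bounded-degree partially labelled trees.
\end{enumerate}
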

\begin{proof}
    This statement is a summary of Lemmas 15, 17, and 18 from \cite{treesChang2020}.
\end{proof}
\subsubsection{Algorithm}\label{par:algorithm}
We state two theorems formalising the relation between the complexities of
$\Pi_{\mathcal{A}}$ and $\Pi_{\mathcal{A},\Delta}$; note that the theorems treat
$\Pi_{\mathcal{A},\Delta}$ as both a problem on rooted trees and on undirected trees
interchangeably -- this is justified in Appendix~\ref{app:conversion}. 
\begin{theorem}\label{thm:logred}
    Let $\Pi_{\mathcal{A}}$ be an automaton certification problem, and let $\Pi_{\mathcal{A},\Delta}$ be its $\mathsf{LCL}$
    restriction to the reduction degree. Then the following are equivalent:
    \begin{enumerate}[(1)]
        \item $\Pi_{\mathcal{A},\Delta}$ is solvable in $O(\log n)$ rounds in the
        deterministic or randomised $\mathsf{LOCAL}$ model,
        \item there is a $\infty$-feasible function for $\Pi_{\mathcal{A},\Delta}$;
        \item $\Pi_{\mathcal{A}}$ is solvable in $O(\log n)$ rounds in the
        deterministic and randomised $\mathsf{LOCAL}$ model.
    \end{enumerate}
\end{theorem}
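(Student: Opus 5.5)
We prove the cycle of implications $(3)\Rightarrow(1)\Rightarrow(2)\Rightarrow(3)$.

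\textbf{$(3)\Rightarrow(1)$.} This is the trivial direction. Every instance of $\Pi_{\mathcal{A},\Delta}$ is an instance of $\Pi_{\mathcal{A}}$, and a run of $\mathcal{A}$ on a tree of maximum degree $\le\Delta$ is exactly a solution of $\Pi_{\mathcal{A},\Delta}$. So an $O(\log n)$-round algorithm for $\Pi_{\mathcal{A}}$, deterministic or randomized, solves $\Pi_{\mathcal{A},\Delta}$ within the same bound.

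\textbf{$(1)\Rightarrow(2)$.} By Appendix~\ref{app:conversion}, $\Pi_{\mathcal{A},\Delta}$ is an $\mathsf{LCL}$ on undirected trees of the same complexity. An $O(\log n)$ algorithm in either model gives a randomized $n^{o(1)}$ algorithm, and Lemma~\ref{lem:decidelog} $(1)\Rightarrow(3)$ then yields an $\infty$-feasible function $f$ with an associated constant $\ell$.

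\textbf{$(2)\Rightarrow(3)$.} This is the real content. The plan is to give an explicit deterministic algorithm, which in turn gives the randomized bound by simulating IDs with random bits, w.h.p.

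\emph{Decomposition.} Compute a $(1,\ell,O(\log n))$-decomposition
\[
R_1<C_1<\dots<C_L<R_{L+1}
\]
in $O(\log n)$ rounds, using Appendix~\ref{app:decomposition}. This is a rake-and-compress layering adapted to rooted trees.

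\emph{Upward phase.} Process the layers in order. For nodes of an $R_i$ layer, all children lie in earlier layers, so their $\mathcal{A}_{lab}$-types are known. Each such node computes its own type and calls $\mathrm{ForgetChildren}_R$. For a path in $C_i$, each path node knows the types of all children except its path-child. It calls $\mathrm{ForgetChildren}_C$, so the path together with its remaining hanging subtrees becomes a bipolar tree of degree $\le\Delta$. By Lemma~\ref{lem:contextreplace} and Corollary~\ref{cor:typereplace}, this bipolar tree has the same context type as the original, and the hanging subtrees can be replaced by bounded-degree representatives of the same types. We then apply $f$ to the middle edge of the path. This fixes labels there, so the nodes above the middle know their types and can propagate upward in $O(\ell)$ rounds. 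The nodes below the middle are deferred. Each layer costs $O(1+\ell)$ rounds, for $O(\log n)$ in total.

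\emph{Downward phase.} At the root, pick an accepting state from its type, which we must show is nonempty and meets $F$. Then every labeled node, including those labeled by $f$, assigns states to all its children top-down. This includes the forgotten children (the $\mathrm{Remember}$ step), since the original multiset of child types is consistent with the chosen state by the semantics of $\mathcal{A}_{lab}$ (Lemma~\ref{lem:typeauto}). Every node lies within $O(\log n)$ hops below a labeled node, so this phase also takes $O(\log n)$ rounds.

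\textbf{Main obstacle.} The crux is correctness: the partial labelings produced by $f$ never make the instance unextendable. Feasibility of $f$ is guaranteed only for bounded-degree graphs containing the bipolar tree, whereas our instance has unbounded degree. We address this by an inductive replacement argument. At each stage, replace every subtree and direct context by its degree-reduced representative of the same $\mathcal{A}_{lab}$-type. By the replacement lemmas, extendability is the same for the real instance and for a degree-$\le\Delta$ instance. On the degree-$\le\Delta$ instance, $\infty$-feasibility applies, since partially labeled bipolar trees of every labeling layer are allowed. One also needs that $\Pi_{\mathcal{A}}$ is solvable, hence the initial unlabeled instance is extendable. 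Inducting over layers, the root's final type is a nonempty set meeting $F$. Care is needed to check that the radius translation from Appendix~\ref{app:conversion} makes $f$ label exactly single path nodes in the rooted setting.
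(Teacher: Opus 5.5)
Your proposal is correct and follows essentially the same route as the paper. The implications line up with the paper's: $(1)\Leftrightarrow(2)$ via Lemma~\ref{lem:decidelog} and Appendix~\ref{app:conversion}; $(3)\Rightarrow(1)$ by restriction; and $(2)\Rightarrow(3)$ via the $(1,\ell,O(\log n))$-decomposition, $\mathcal{A}_{lab}$-types with $\mathrm{ForgetChildren}_R/\mathrm{ForgetChildren}_C$, prelabeling by $f$, and top-down completion, with correctness as in Lemmas~\ref{lem:typeauto} and~\ref{lem:correct}. Your extra replacement argument, which also degree-reduces the surrounding tree so that feasibility of $f$ (guaranteed only on degree-$\le\Delta$ instances) transfers to the unbounded-degree instance, makes precise a step that Lemma~\ref{lem:correct} asserts only briefly.
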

\begin{theorem}\label{thm:polyred}
    Let $\Pi_{\mathcal{A}}$ be an automaton certification problem, and let $\Pi_{\mathcal{A},\Delta}$ be its $\mathsf{LCL}$
    restriction to the reduction degree. Then the following are equivalent:
    \begin{enumerate}[(1)]
        \item $\Pi_{\mathcal{A},\Delta}$ is solvable in $O(n^{1/k})$ rounds in the
        deterministic or randomised $\mathsf{LOCAL}$ model,
        \item there is a $k$-feasible function for $\Pi_{\mathcal{A},\Delta}$;
        \item $\Pi_{\mathcal{A}}$ is solvable in $O(n^{1/k})$ rounds in the
        deterministic and randomised $\mathsf{LOCAL}$ model.
    \end{enumerate}
\end{theorem}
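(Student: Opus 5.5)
The plan is to prove the cycle (1)$\Rightarrow$(2)$\Rightarrow$(3)$\Rightarrow$(1), treating $\Pi_{\mathcal{A},\Delta}$ interchangeably as a problem on rooted and on undirected trees (Appendix~\ref{app:conversion}).

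\textbf{Step (1)$\Rightarrow$(2).} This is Lemma~\ref{lem:decidek} applied to the undirected-tree $\mathsf{LCL}$ version of $\Pi_{\mathcal{A},\Delta}$. If the problem is solvable in $O(n^{1/k})$ rounds, either deterministically or randomized, then it is solvable in $o(n^{1/(k-1)})$ randomized rounds, since the deterministic model simulates the randomized one. The lemma then yields a $k$-feasible function $f$, together with its constant $\ell$.

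\textbf{Step (3)$\Rightarrow$(1).} This is the trivial direction. Every instance of $\Pi_{\mathcal{A},\Delta}$ is an instance of $\Pi_{\mathcal{A}}$, so an algorithm for $\Pi_{\mathcal{A}}$ solves $\Pi_{\mathcal{A},\Delta}$ within the same bound, in the same model.

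\textbf{Step (2)$\Rightarrow$(3).} This direction carries all the content and follows the algorithm sketched in Section~\ref{sec:results}.

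\emph{Decomposition.} First compute a $(\gamma,\ell,k)$-decomposition $R_1<C_1<\dots<C_k<R_{k+1}$ with $\gamma=O(n^{1/k})$ in $O(n^{1/k})$ rounds, using the construction of Appendix~\ref{app:decomposition}. Deterministically, the unique IDs break symmetry; randomized, this is done with high probability.

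\emph{Bottom-up type computation.} Process the layers in order.
\begin{itemize}
\item On an $R_i$ layer, each node collects its subtree within $R_i$, of height at most $\gamma$. All hanging children lie in earlier layers and already know their $\mathcal{A}_{lab}$-types, so the node computes its own type and then calls $\mathrm{ForgetChildren}_R$.
\item On a $C_i$ layer, each path node knows the types of all children except its path child. It computes its direct context type and calls $\mathrm{ForgetChildren}_C$.
\item Each component of $C_i$ is then, virtually, a partially labeled bipolar tree of maximum degree $\le\Delta$ and length in $[\ell,2\ell]$. Apply $f$ to label the middle edge neighbourhood. This fixes the type of the labeled nodes, and their ancestors on the path can then compute their types in $O(\ell)$ rounds.
\end{itemize}
The layers cost $O((\gamma+\ell)k)=O(n^{1/k})$ rounds in total.

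\emph{Top-down labeling.} Choose an accepting state at the root, then propagate states downward. Each node chooses states for all of its children, including forgotten ones, consistent with its transition multiset. Every node lies within $O((\gamma+\ell)k)$ hops below a labeled node, so this phase also takes $O(n^{1/k})$ rounds.

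\textbf{Correctness and main obstacle.} The delicate point is that $f$ is only guaranteed feasible for degree-$\le\Delta$ trees of labeling layer $\le k$. The argument I would make is inductive over the layers:
\begin{itemize}
\item By Lemma~\ref{lem:redtree}, Corollary~\ref{cor:replacecontext}, and the replacement lemmas (Corollary~\ref{cor:typereplace}, Lemma~\ref{lem:contextreplace}), the reduced tree has the same $\mathcal{A}_{lab}$-types as the original.
\item The reduced tree lies in $\Pi_{\mathcal{A},\Delta}$'s instance class, and after the first $i-1$ applications of $f$ it has labeling layer $\le i$. Hence $f$ is feasible at step $i$, and extendability of the partial labeling is preserved in the reduced tree.
\item Transporting back via type equality, extendability is also preserved in the original tree.
\end{itemize}
Thus the final root type meets $F$, which is the content of Lemma~\ref{lem:correct}. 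The downward pass can always succeed, because a node whose type set contains its chosen state has a witnessing child multiset by the definition of $\delta_{det}$ and Lemma~\ref{lem:typeauto}.

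I expect the main obstacle to be verifying that virtual forgetting keeps each intermediate tree within the labeling-layer hierarchy of $\Pi_{\mathcal{A},\Delta}$. In particular, this requires that forgotten subtrees, which are later labeled arbitrarily consistently, never invalidate feasibility.
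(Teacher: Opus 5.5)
Your proposal follows the paper's proof essentially step for step: $(1)\Leftrightarrow(2)$ via Lemma~\ref{lem:decidek}, $(3)\Rightarrow(1)$ by restriction, and $(2)\Rightarrow(3)$ via the same algorithm. That algorithm computes a $(\gamma,\ell,k)$-decomposition, computes $\mathcal{A}_{lab}$-types layer by layer with $\mathrm{ForgetChildren}_R/\mathrm{ForgetChildren}_C$, prelabels the path middles with $f$, and then chooses states top-down, with correctness as in Lemmas~\ref{lem:typeauto} and~\ref{lem:correct}. A few small points to tidy: it is the randomized model that simulates deterministic algorithms, not the reverse. You should also state explicitly, as the paper does, that the path nodes below the prelabeled middle are deferred to $R_{i+1}$, since their types are still needed in the top-down phase. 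Finally, the base case of your induction is that the fully reduced unlabeled tree, as a $\Pi_{\mathcal{A},\Delta}$ instance, is solvable.
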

We prove Theorems~\ref{thm:logred} and~\ref{thm:polyred} at the same time; in both cases,
$(1)\Leftrightarrow(2)$ by the previous section and $(3)\Rightarrow(1)$ since any algorithm
for $\Pi_{\mathcal{A}}$ also solves $\Pi_{\mathcal{A},\Delta}$. We need to prove
$(2)\Rightarrow(3)$: the algorithms we present are identical save for the choice of graph
decomposition.

We work on $(\gamma,\ell,L)$-decompositions; their definition for unbounded degree undirected trees
are provided in \cite{schmidLFL2026}, but we give a slightly stronger definition which
requires the orientation of the layers to be consistent with the direction of the rooted
tree.
\begin{definition}
    A $(\gamma,\ell,L)$-decomposition of a rooted tree is a partition of the nodes into
    $2k+1$ ordered layers $R_1<C_1<R_2<C_2<\ldots<R_k<C_k<R_{k+1}$ which satisfies:
    \begin{enumerate}
        \item every $R_i$ induces a subgraph where every connected component is a rooted
        tree of height $\leq\gamma$, and only the root can have a neighbor in a
        greater layer (with respect to $<$),
        \item every $C_i$ induces a subgraph where every connected component is a
        \emph{consistently directed} path of length $\in[\ell,2\ell]$, and only the
        endpoints can have a neighbor in a greater layer.
    \end{enumerate}
\end{definition}
The important parameter here is the number of layers $L$, as each new layer will require a
stronger feasible function. If we have a $\infty$-feasible function this is not a problem:
we can compute a $(1,\ell,L)$ decomposition with $L\in O(\log n)$ for any $n$ (where
$\ell$ is the same parameter that appears in the feasibility definition).
\begin{restatable}{lemma}{logdec}\label{lem:logdec}
    For any positive integer $\ell$ there is a $L(n)\in O(\log n)$ such that we
    can compute a $(1,\ell,L)$-decomposition in $O(\log n)$ rounds.
\end{restatable}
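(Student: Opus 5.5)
We build the decomposition with a rake-and-compress procedure in the style of Chang--Pettie, adapted to rooted trees of unbounded degree (compare the decompositions for unbounded-degree trees in \cite{schmidLFL2026}). Layers are peeled off one after another, alternating two operations on the current forest $F$ of not-yet-assigned nodes, which starts as the whole tree.
\begin{itemize}
    \item \emph{Rake}: every current leaf of $F$ goes into $R_i$. These are single nodes, so the components of $R_i$ have height $0\leq\gamma=1$.
    \item \emph{Compress}: consider every maximal consistently directed path of $F$ whose internal nodes have exactly one child in $F$ (nodes of in-$F$-degree $2$ in the undirected sense). If such a path has length at least $\ell$, we split it into subpaths of length in $[\ell,2\ell]$ and place their interiors in $C_i$.
\end{itemize}
Leftover pieces of length $<\ell$, including the short gaps between the chosen subpaths and the endpoints attached to branching nodes, are simply left in $F$ for the next round. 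They become single-node components of a later rake layer, or parts of later compress paths. Splitting a path into segments of length in $[\ell,2\ell]$ needs only $O(\ell)$ rounds: segments are cut by distance from the top or bottom of the path when the path is short, and by an $O(\log^* n)$ ruling-set computation on long paths. In the deterministic model the ruling set uses the IDs; in the randomized model we first generate IDs of $O(\log n)$ bits, which are unique with high probability.

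We then verify the layer conditions. A node removed in rake layer $R_i$ is a leaf of $F$, so all its children are already in earlier layers, and only its parent can lie in a greater layer, as required. For a compress path, the interior nodes have exactly one child in $F$, which is also on the path, and every other child was raked earlier. Thus only the top endpoint's parent and the bottom endpoint's child can lie in greater layers, since the bottom child stays in $F$. To satisfy the definition with ``only the endpoints'' having such neighbours, we take the segments strictly inside the maximal paths, leaving at least one node buffer at each end that stays in $F$. Consistent direction holds because every parent--child chain in a rooted tree is directed.

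The main obstacle is showing that $O(\log n)$ alternating rounds empty $F$. We do this with a potential argument on the tree $F'$ obtained by contracting every maximal unary chain into a single edge. Take the nodes of $F$ before a round, with leaf set $\Lambda$ and branching set $B$ (nodes with at least $2$ children in $F$); then $\lvert B\rvert<\lvert\Lambda\rvert$. Each unary chain either has length $\geq\ell$, in which case compress removes all but $O(\ell)$ of its nodes, or it has length $<\ell$, in which case it is charged to its lower endpoint. After a rake, a constant fraction of the \emph{compressed} size disappears. This is the standard counting argument: leaves make up at least half of the non-unary nodes, and unary chains shorter than $\ell$ are charged to a leaf or branching node below them, which costs a factor $O(\ell)$. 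Hence after $O(\ell)=O(1)$ rake/compress rounds the number of non-unary nodes plus the number of unary chains drops by a constant factor. Unbounded degree does not matter here, because we count leaves against branching nodes, not edges.

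It follows that $L=O(\log n)$ layers suffice. Each layer costs $O(\ell+\log^* n)$ rounds, and the ruling-set step can be merged so that the total is $O(\log n)$. To make $L$ a fixed function of $n$, as the statement asks, we pad with empty layers up to an explicit bound $c\,\ell\log n$.
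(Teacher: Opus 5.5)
Your outline is the paper's: peel rake/compress layers for $O(\log n)$ iterations, then cut the compress paths into pieces of length $[\ell,2\ell]$ with ruling sets. However, the step you call the main obstacle is not established, because two of its sub-claims fail for your own construction.

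First, consecutive segments must not touch, or they merge into one component of $C_i$ longer than $2\ell$. So compress leaves roughly one cut node per segment in $F$: that is $\Theta(m/\ell)$ survivors on a chain of $m$ nodes, not $O(\ell)$. Each survivor is isolated afterwards, hence a new leaf. Take a root with $k$ pendant directed paths of length $m\gg\ell$: your potential (non-unary nodes plus unary chains) jumps from $2k+1$ to $\Theta(km/\ell)$ in one round.

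Second, raking the leaf at the bottom of a unary chain with at least two nodes only shortens the chain and leaves the compressed size unchanged. So ``after a rake, a constant fraction of the compressed size disappears'' is false. A potential that can grow by an unbounded factor in one round cannot support ``after $O(\ell)$ rounds it drops by a constant factor'' without an amortization you never give.

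Your ingredients ($\lvert B\rvert<\lvert\Lambda\rvert$, and charging each short chain to its lower endpoint) are the right ones, but they bound the number of \emph{vertices} removed, which is what the paper counts. Let $n_1,n_2,n_3$ be the numbers of leaves, unary and branching nodes. Summing out-degrees gives $n_1\ge n_3$. Unary nodes surviving compress lie in at most $n_1+n_3$ chains of length $<\ell$, so there are at most $(\ell-1)(n_1+n_3)$ of them. Hence one simultaneous rake$+$compress removes at least $n/(4\ell)$ vertices, and $O(\ell\log n)$ iterations suffice. In your variant you must also check that cut and buffer nodes are at most a constant fraction of every chain declared long; calling a chain long only if it has at least $\ell+3$ nodes does this.

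The running time also needs more than ``the ruling-set step can be merged''. An $O(\log^* n)$ ruling-set computation in each of $\Theta(\log n)$ layers costs $O(\log n\log^* n)$. As written it cannot simply be deferred, because the leftover end pieces attached to branching nodes depend on where the cuts fall, so later layers depend on the ruling-set outcome.

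The paper avoids this as follows:
\begin{itemize}
\item During peeling, whole long chains go into $C_i$ unsplit.
\item Afterwards, all $C_i$ components are split once, in parallel, with Lemma~\ref{lem:postprocess}, costing $O(\log^* n)$ in total.
\item The ruling-set nodes are promoted to $R_{i+1}$. There they are isolated, with all neighbours in $C_i$ or lower layers.
\end{itemize}
This also shows your buffer nodes are unnecessary, since the definition already allows the endpoints of a $C_i$ path to have neighbours in greater layers.
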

If instead we have a $k$-feasible function, we compute a
$(\gamma,\ell,k)$-decomposition where $\gamma\in O(n^{1/k})$.
\begin{restatable}{lemma}{polydec}\label{lem:polydec}
    For any positive integers $k,\ell$ and $\gamma=n^{1/k}(2\ell)^{1-1/k}$ we can
    compute a $(\gamma,\ell,k)$-decomposition in $O(n^{1/k})$ rounds.
\end{restatable}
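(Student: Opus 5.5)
The plan is to adapt the rake-and-compress layering of Chang and Pettie, and of Chang, to rooted trees of unbounded degree. I would give a direct peeling procedure that runs for $k$ phases, each of $O(\gamma)$ rounds, with $\gamma=n^{1/k}(2\ell)^{1-1/k}$. In phase $i$ the current (remaining) forest $F_i$ is processed as follows. Set $F_1=T$. First, rake for $\gamma$ steps: repeatedly remove every node whose remaining subtree (the subtree rooted at it in $F_i$) has height less than $\gamma$. Equivalently, in $O(\gamma)$ rounds each node decides whether its rooted subtree in $F_i$ has height $<\gamma$. These nodes form $R_i$. Each component of $R_i$ is a rooted subtree of height $\le\gamma$, and only its root has a neighbour, namely its parent, outside $R_i$, and that neighbour lies in a later layer. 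This gives property 1 with the correct orientation, since rooted subtrees are downward-closed.

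Second, compress. After removing $R_i$, consider the nodes of $F_i\setminus R_i$ that have exactly one child remaining. Maximal chains of such nodes form consistently directed paths, because in a rooted tree a one-child chain is automatically oriented toward the root. Every path of length at least $\ell$ is cut into consecutive segments of length in $[\ell,2\ell]$. This is done deterministically by position along the path, measured from its bottom endpoint, using only $O(\ell)$ local look-ahead plus the path's own structure. However, paths can be long, so I would instead use the standard trick: split long paths using distances to the endpoints when the path has length at most $O(\gamma)$, and otherwise use a ruling set with $O(\log^* n)$ overhead. Alternatively, and more simply, I would only add to $C_i$ the paths whose length is at most $O(\gamma)$, which are resolvable in $O(\gamma)$ rounds, and argue that long paths cannot occur. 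The segments form $C_i$. Only their endpoints touch higher layers: the top node's parent and the bottom node's single child. The child is in $F_i\setminus R_i$, and I would ensure it lies in a later layer by leaving paths' bottom endpoints in the residual. The residual forest is $F_{i+1}$.

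The main obstacle is the counting argument that $F_{k+1}$ is empty, i.e.\ that after $k$ phases nothing remains, so that $R_{k+1}$ absorbs the rest. I would prove by induction that every node surviving phase $i$ has at least $(\gamma/(2\ell))^{i}$ descendants, roughly. A node surviving the rake step has a descendant path of length $\ge\gamma$. After compressing, every leaf of $F_{i+1}$ had a subtree that contained a path of at least $\gamma$ nodes from $F_i$, and every node of $F_{i+1}$ with one child lies on a chain shorter than $\ell$. So in $F_{i+1}$, branching nodes occur at least every $2\ell$ steps along the height. Hence a surviving node of $F_{i+1}$ whose height exceeds $\gamma$ has at least $\gamma/(2\ell)$ disjoint $F_i$-survivors below it, each carrying its own mass $(\gamma/(2\ell))^{i-1}\gamma$. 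With $\gamma^k/(2\ell)^{k-1}=n$ the total mass exceeds $n$ in phase $k$, so every remaining node is raked in the final $R_{k+1}$ step (height $\le\gamma$).

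Making this weighting precise is the delicate part. It mirrors Chang's $\mathscr{H}_k$ analysis, and the exact constant in $\gamma$ is chosen precisely so that the induction closes. Each phase costs $O(\gamma+\ell)=O(n^{1/k})$ rounds, giving the claimed total.
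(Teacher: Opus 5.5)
Your algorithm (in each phase, rake $\gamma$ steps into $R_i$, then move long one-child chains into $C_i$) is the paper's. A per-node mass induction is a legitimate substitute for the paper's global bound $|S_{i+1}|\le\frac{2\ell}{\gamma}|S_i|$. But two steps fail as written.

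\textbf{Counting.} You claim that every one-child node of $F_{i+1}$ lies on a chain shorter than $\ell$, so that $F_{i+1}$ branches every $2\ell$ levels. This is false, because moving a chain into $C_i$ deletes a child of the node above it. Suppose $b_1,\dots,b_m\in F_i\setminus R_i$, where $b_j$ has children $b_{j+1}$ and $c_j$, and each $c_j$ heads a one-child chain of length $\ge\ell$. Then $b_1,\dots,b_m$ is a one-child chain of $F_{i+1}$ of arbitrary length. Branching must therefore be measured in $F_i\setminus R_i$.

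Here is the repair. Let $M_i$ be the minimum number of nodes in the subtree of $T$ rooted at a node that survives the rake of phase $i$. Then $M_1>\gamma$, since such a node has height at least $\gamma$. If $v$ survives the rake of phase $i+1$, take a downward path $v=u_0,\dots,u_\gamma$ in $F_{i+1}$. Its interior nodes lie in no chain of length $\ge\ell$ of $F_i\setminus R_i$, since those were removed. Hence every $\ell$ consecutive interior nodes contain some $u_j$ with a second child $w_j$ in $F_i\setminus R_i$. The nodes $w_j$ and $u_\gamma$ survived the rake of phase $i$ and have pairwise disjoint subtrees. This gives $M_{i+1}\ge(\lfloor(\gamma-1)/\ell\rfloor+1)M_i>\frac{\gamma}{2\ell}M_i$ for $\gamma\ge2$, which holds once $n\ge2\ell$. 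Hence $M_k>\gamma^k/(2\ell)^{k-1}=n$, so the $k$-th rake already empties the tree.

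\textbf{Compress.} This step needs symmetry breaking, and only your ruling-set option works.
\begin{itemize}
\item Long chains do occur: for a directed path, the residual after the first rake is one chain of $n-\gamma$ nodes. So you cannot restrict $C_i$ to chains of length $O(\gamma)$; for $k\ge2$ a path would then lose only $k\gamma=o(n)$ nodes.
\item You cannot cut by position from the bottom endpoint, since that distance is unbounded.
\item Adjacent segments placed in the same layer $C_i$ merge into one component, so you need separators in a higher layer.
\end{itemize}
Do it as the paper does. On each chain, compute in $O(\log^* n)$ rounds a ruling set of interior nodes such that all pieces between them, including the two end pieces, have length in $[\ell,2\ell]$ (Lemma~\ref{lem:postprocess}). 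Then promote the ruling-set nodes to $R_{i+1}$, where each is an isolated component. The bottom child of a chain needs no special handling: it has $0$ or $\ge2$ children in $F_i\setminus R_i$, so it is not a chain node and stays in the residual automatically.
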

The proofs of Lemmas~\ref{lem:logdec} and~\ref{lem:polydec} are simple adaptations of the
existing ones to the unbounded degree rooted setting and can be found in Appendix~\ref{app:decomposition}.

From here, the steps are the same; we iterate through the layers, starting from $R_1$, 
compute the type of all nodes according to $\mathcal{A}_{lab}$, and apply $\mathrm{ForgetChildren}$$_R$ to all vertices of
$R_1$. Then we reach $C_1$: each connected component is a directed path, and each node
knows the types of all its children but one (which may be in $C_1$ itself or, for the last
vertex, in a higher layer). We apply $\mathrm{ForgetChildren}$$_C$ to all of its vertices;
now $C_1\cup R_1$ is a bipolar tree of degree $\leq\Delta$ and core path length
$\in[\ell,2\ell]$ -- we apply the feasible function $f$ for $\Pi_{\mathcal{A},\Delta}$ to assign
labels to two of the central nodes. Because $f$ is $1$-feasible (and equivalently,
$\Pi_{\mathcal{A},\Delta}$ is solvable) then the labeling computed can be extended to a
valid labeling; then the type of nodes that were assigned a label state $q\in Q$ are of
type $\{q\}$.

Nodes that are \emph{ancestors} of the prelabeled nodes can now
compute their own types one after the other (this takes at most $\ell$ rounds) while nodes
which are \emph{descendants} of the prelabeled nodes join $R_2$, delaying 
their type computation to the next step. We repeat the steps above for each layer: compute
all types and reduce degree for $R_i$; compute context types, reduce degree, precommit
some labels, compute types for $C_i$, until we have computed types for all nodes of the
input graph.

To complete the algorithm, every node applies $\mathrm{Remember}$ to return to the full
tree, then the root chooses an accepting state from its type and labels
itself with it; then every node that already has an assigned state (root and prelabeled nodes)
chooses labels for its children. We repeat this until the tree is fully labeled, which we
show also takes $O((\gamma+\ell)L)$ rounds in Lemma~\ref{lem:complex}. We prove that this choice is possible and
leads to a correct labeling in Lemmas~\ref{lem:typeauto} and~\ref{lem:correct}. 
\begin{figure}[th]
\centering
\includegraphics[width=0.8\textwidth]{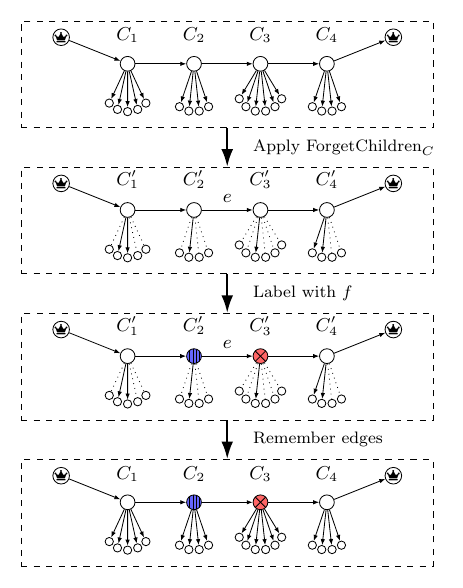}
\caption{Using $\mathsf{ForgetChildren}$ and the feasible function to label bipolar trees.}
\label{fig:forgetchildren}
\end{figure}
\begin{lemma}
    Let $\Pi_{\mathcal{A}}$ be the problem associated to the rational multiset tree
    automaton $\mathcal{A}=(\Sigma,Q,\delta,F)$. Let
    $\mathcal{A}_{lab}=(\Sigma\times(Q\uplus\{*\}),2^Q,\delta_{lab},F_{lab})$ be the partially labeled
    type automaton of $\mathcal{A}$. Given a run $r$ of $\mathcal{A}_{lab}$ on a
    $\Sigma$-labeled and partially
    $Q$-labeled rooted tree $(G,l)$ \emph{which does not label the root with $\emptyset$} we can find a run $r'$\footnote{Note that those runs need to be valid, ie respect the transition
    function, but not necessarily accepting, ie have an accepting state at the root.} of $\mathcal{A}$ on the underlying
    unlabeled rooted tree $G$ such that for all nodes $v$ of $G$, $r'(v)\in
    r(v)$ and $r(v)$ extends the partial $Q$-labeling of $G$.
\end{lemma}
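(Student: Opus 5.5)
The plan is a top-down construction. The run $r'$ is built from the root downward, maintaining the invariant that every node $v$ which has received a state satisfies $r'(v)\in r(v)$, and, if $v$ carries a partial label $q$, also $r'(v)=q$. First the semantics of $r$ is pinned down. Since $\mathcal{A}_{lab}$ is deterministic and complete, $r$ is its unique run. The transitions have the following meaning:
\begin{itemize}
\item for an unlabeled node $v$ with input $s$, $r(v)=H$ means that the multiset $M$ of children's sets lies in $\delta_{det}(s,H)$, i.e.\ $q\in H$ if and only if $M\in h(\delta(s,q))$;
\item for a node prelabeled $q$, $r(v)=\{q\}$ means $M\in\bigcup_{Q_1\ni q}\delta_{det}(s,Q_1)$, i.e.\ $M\in h(\delta(s,q))$; the only alternative is $r(v)=\emptyset$.
\end{itemize}
In both cases, $q\in r(v)$ (with $q$ the prelabel, if any) is equivalent to $M\in h(\delta(s,q))$. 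By definition of $h$, this gives a choice function: a bijection from the children $c$ of $v$ to states $q_c\in r(c)$ such that $\{\!\{q_c\}\!\}\in\delta(s,q)$.

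The construction then goes as follows. At the root, pick any $q\in r(\mathrm{root})$, which is possible because the root is not labeled $\emptyset$. If the root is prelabeled, $r(\mathrm{root})=\{q_0\}$, so the choice is forced and agrees with the label. Then proceed by induction on depth. Given $v$ with $r'(v)=q\in r(v)$, the characterization above yields states $q_c\in r(c)$ for the children with the multiset in $\delta(\lambda(v),q)$; set $r'(c)=q_c$. Each $r(c)$ is nonempty, so the invariant propagates. Validity of $r'$ at $v$ holds by construction, and since every node is reached, $r'$ is a full valid run.

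Two points need checking along the way. The first is that a prelabeled child $c$ with label $p$ automatically gets $r'(c)=p$: we have $q_c\in r(c)$, and $r(c)$ is either $\{p\}$ or $\emptyset$, and it cannot be $\emptyset$ because it contains $q_c$. So prelabels are respected for free. The second, and the main obstacle, is verifying that membership in $h(\delta(s,q))$ really unfolds to a choice function $c\mapsto q_c\in r(c)$ with the right multiset. This is a careful unpacking of the definition of $h$ through the atoms $\delta_{det}$ and the unions in $\delta_{lab}$, and it includes checking that a child with $r(c)=\emptyset$ can never be in the support of such a choice. That last point follows because a choice function must pick an element of each set, so any $M$ containing $\emptyset$ lies in no $h(\cdot)$.
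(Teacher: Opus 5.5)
Your proposal is correct and follows the paper's top-down argument: you pick any state in the root's nonempty set, and at each node $q\in r(v)$ forces the children's multiset of $\mathcal{A}_{lab}$-states into $h(\delta(s,q))$, so the choice function gives the children's states, with prelabels respected because a node prelabeled $p$ has $r$-value $\{p\}$ or $\emptyset$. The only cosmetic difference is that the paper first argues globally that state $\emptyset$ propagates up to the root, whereas you get nonemptiness of each $r(c)$ locally from the choice function itself.
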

\begin{proof}
    First of all, observe that by definition of
    $\mathcal{A}_{lab}=(\Sigma\times(Q\uplus\{*\}),2^Q,\delta_{lab},F_{lab})$ if any node receives state
    $\emptyset$, so does its parent (since no set $h(\delta(s,q))$ can contain $\emptyset$); by transitivity, if any node receives state
    $\emptyset$, so does the root -- we can assume no node receives state $\emptyset$, and
    in particular every node that has an assigned label $q$ is of state $\{q\}$.

    Arbitrarily choose an element from the state assigned to the root. We show that for
    every node that has chosen a state of $\mathcal{A}$ from its assigned state of
    $\mathcal{A}_{lab}$, there is a way to choose an element for each of its children from
    their assigned state of $\mathcal{A}_{lab}$ such that the resulting transition is
    valid for $\mathcal{A}$.

    Let $v$ be a node with $\Sigma$-label $s$, let $H$ be its assigned state of $\mathcal{A}_{lab}$ and $q\in H$
    be its assigned state of $\mathcal{A}$. We distinguish two cases:
    \begin{itemize}
        \item $v$ was labeled in $G$ with a state $q'$: then $H=\{q'\}$ and since $q\in
        H$, $q=q'$. Then the multiset $M$ of states of $\mathcal{A}_{lab}$ of its children is in
        $\delta_{lab}((s,q),\{q\})$; observe that it is the union of $\delta_{det}(s,Q_i)$
        where the $Q_i$s are all states containing $q$, so in particular
        $\delta_{det}(s,Q_i)\subseteq h(\delta(s,q))$ and then
        $M\in\delta_{lab}((s,q),\{q\})\subseteq h(\delta(s,q))$.
        \item $v$ is unlabeled in $G$: then the multiset $M$ of types (according to $\mathcal{A}_{lab}$) of its children is
        in $\delta_{lab}((s,*),H)=\delta_{det}(s,H)\subseteq h(\delta(s,q))$.
    \end{itemize}
    In either case, $M\in h(\delta(s,q))$; by definition there is a choice function $c$
    from $M$ to an element $M'\in \delta(s,q)$ such that for every state $Q_0\in 2^Q$
    in $M$ we have $c(Q_0)\in Q_0$: we assign states of $\mathcal{A}$ to the children of
    $v$ according to $c$.

    We can then use top-down tree induction -- the base case is the root, where we can
    choose arbitrarily -- to prove that every node $v$ can choose the labels of its children
    such that the transition function of $\mathcal{A}$ at $v$ is respected; then 
    the result is a valid run of $\mathcal{A}$. Additionally, since every node labeled $q$
    is of type $\{q\}$, the run we obtain is an extension of the $Q$-labeling of $G$.
\end{proof}
\begin{lemma}\label{lem:typeauto}
    Let $\Pi_{\mathcal{A}}$ be the problem associated to the rational multiset tree
    automaton $\mathcal{A}=(\Sigma,Q,\delta,F)$. Let
    $\mathcal{A}_{lab}=(\Sigma\times(Q\uplus\{*\}),2^Q,\delta_{lab},F_{lab})$ be the partially labeled
    type automaton of $\mathcal{A}$. Then for any $\Sigma$-labeled and partially $Q$-labeled rooted tree $G$ with
    root $r$ the following statements are equivalent:
    \begin{enumerate}[(1)]
        \item the type assigned by $\mathcal{A}_{lab}$ to $r$ contains an accepting state
        of $\mathcal{A}$,
        \item $G$ is accepted by $\mathcal{A}_{lab}$,
        \item the partial $Q$-labeling of $G$ can be extended to an accepting run of
        $\mathcal{A}$ on the underlying unlabeled graph.
    \end{enumerate}
\end{lemma}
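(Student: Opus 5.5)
We prove (1)$\Leftrightarrow$(2) directly from the definition of $F_{lab}$, and (2)$\Leftrightarrow$(3) by going through a stronger invariant: for every node $v$, the state $r(v)$ assigned by $\mathcal{A}_{lab}$ equals the set $S(v)$ of states $q\in Q$ such that the partial labeling of the subtree $G_v$ rooted at $v$ extends to a valid run of $\mathcal{A}$ on $G_v$ with $v$ in state $q$.

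(1)$\Leftrightarrow$(2) is immediate. Since $\mathcal{A}_{lab}$ is deterministic and complete, $G$ has exactly one run $r$. $G$ is accepted iff $r(\text{root})\in F_{lab}$, and by definition this holds iff $r(\text{root})\cap F\neq\emptyset$.

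For the invariant $r(v)=S(v)$ we use bottom-up induction on the height of $v$. Let $v$ have $\Sigma$-label $s$ and children $c_1,\dots,c_m$, and let $M$ be the multiset $\{\!\{r(c_i)\}\!\}$, so that by induction $M=\{\!\{S(c_i)\}\!\}$. The key identity is
\[
q\in S(v)\iff M\in h(\delta(s,q)).
\]
For ($\Rightarrow$), take a run extending the partial labeling on $G_v$. The states it gives the children lie in the respective $S(c_i)$, and their multiset lies in $\delta(s,q)$, which supplies the choice function. For ($\Leftarrow$), a choice function picks $q_i\in S(c_i)$ with $\{\!\{q_i\}\!\}\in\delta(s,q)$. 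We then glue the witnessing runs of the subtrees $G_{c_i}$, which are disjoint, and set $v\mapsto q$. If $v$ is unlabeled, $\delta_{lab}((s,*),\cdot)=\delta_{det}(s,\cdot)$, and the atoms of the partition identify $r(v)$ as exactly $\{q: M\in h(\delta(s,q))\}=S(v)$. If $v$ is prelabeled with $q$, then $S(v)\subseteq\{q\}$, with equality iff $M\in h(\delta(s,q))$. By determinism, $M$ lies in $\delta_{det}(s,Q_1)$ for the unique $Q_1=\{q': M\in h(\delta(s,q'))\}$. Hence $r(v)=\{q\}$ iff $q\in Q_1$, and otherwise $r(v)=\emptyset$. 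In both cases $r(v)=S(v)$.

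Applying the invariant at the root, (2) holds iff $S(\text{root})\cap F\neq\emptyset$, which is exactly (3).

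The main obstacle is the ($\Leftarrow$) direction of the key identity, together with the detail that $h$ really is defined through a bijective choice function matching multiplicities. This must be checked carefully, since $M$ may contain repeated sets that have to be mapped to possibly different states. Alternatively, the direction (2)$\Rightarrow$(3) can be shortcut with the preceding lemma, choosing an accepting state at the root, so the induction is needed essentially only for (3)$\Rightarrow$(2).
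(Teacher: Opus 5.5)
Your proof is correct, but it is organized differently from the paper's. The paper handles (1)$\Leftrightarrow$(2) exactly as you do. It then proves (1)$\Rightarrow$(3) by invoking the preceding lemma: pick an accepting state in the root's type, then choose the children's states top-down via choice functions. For (3)$\Rightarrow$(1) it uses a bottom-up induction showing only a one-sided membership: for the given accepting run $\rho$, $\rho(v)$ lies in the $\mathcal{A}_{lab}$-state of $v$ at every node.

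You instead prove the exact node-wise invariant $r(v)=S(v)$ in a single bottom-up induction. The ($\Rightarrow$) half of your key identity is essentially the paper's (3)$\Rightarrow$(1) step. The ($\Leftarrow$) half, which glues witness runs on the disjoint child subtrees, replaces the top-down construction of the preceding lemma.

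Your route gives a self-contained argument that literally proves the informal claim stated at the start of the paper's proof: at every node, $\mathcal{A}_{lab}$ computes the set of states extendable in that subtree. The paper earlier defers exactly this node-wise equivalence ("type $=$ set of root labels extendable consistently with the partial labeling") to this lemma, but its proof only establishes the root-level statement. The paper's route, in turn, reuses the top-down lemma, which is precisely the procedure the algorithm executes in its final labeling phase.

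One small imprecision: as literally stated, $q\in S(v)\iff M\in h(\delta(s,q))$ fails for a prelabeled $v$ and a $q$ different from its prelabel. In that case, setting $v\mapsto q$ in the gluing would violate the partial labeling. The identity should be stated for unlabeled $v$, or for $q$ equal to the prelabel. That is exactly how you use it in your subsequent case split, so the argument is unaffected.
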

\begin{proof}
    This statement and proof is just a formalization of the statement ``$\mathcal{A}_{lab}$ assigns to each
    node a state corresponding to the set of states of $\mathcal{A}$ that can appear in a
    run of $\mathcal{A}$ which extends the partial labeling''.

    $(1)\Leftrightarrow(2)$ since $F_{lab}$ is defined as any set of states $2^Q$ that
    contains at least one state of $F$.

    $(1)\Rightarrow(3)$: by the previous lemma, we can select an accepting state $q$ of $\mathcal{A}$ to the
    root that is contained in the type assigned by $\mathcal{A}_{lab}$ and then compute a
    valid run of $\mathcal{A}$ that assigns $q$ to the root and extends the $Q$-labeling
    of $G$.

    $(3)\Rightarrow(1)$: let $r$ be the accepting run of $\mathcal{A}$ that extends the
    $Q$-labeling of $G$, and let $r'$ be the run of $\mathcal{A}_{lab}$ on $G$; we show
    that for every vertex $v$ we have $r(v)\in r'(v)$. We work by bottom-up tree
    induction: for the base case, let $l$ be a leaf, and observe that
    $\emptyset\in\delta_{lab}((s,*),r'(l))=\delta_{det}(s,r'(l))$ ($r'$ is valid in $l$) implies
    $\emptyset\in\delta(s,r(l))\Leftrightarrow r(l)\in r'(l)$; then since $r(l)$ is a valid label of
    $l$ according to $\mathcal{A}$, $r(l)\in r'(l)$. If $l$ was labeled by $r(l)$ in $G$,
    this still holds; let $H$ be the state that $\mathcal{A}_{lab}$ would have assigned it
    if it was unlabeled, then $r(l)\in H$ by the same reasoning and
    $\emptyset\in\delta_{lab}((s,*),H)\subseteq\delta_{lab}(r(l),\{r(l)\})$. Then
    $\mathcal{A}_{lab}$ assigns $\{r(l)\}$ as a state to $l$, which obviously contains
    $r(l)$.
    
    For the inductive step, let $M$ be the set of states of children of $v$ according to
    $r$ and let $M'$ be the set of states of children of $v$ according to $r'$; by
    induction we have a bijective choice function $c:M'\to M$ which sends a set of states to one of
    its elements. Then $M'\in h(M)$ and since $M\in\delta(s,r(v))$ we have $M'\in
    h(\delta(s,r(v)))$; assume $v$ is unlabeled, then $M'\in\delta_{lab}((s,*),r'(v))$; since
    $\delta_{lab}((s,*),r'(v))=\delta_{det}(s,r'(v))$ has nonempty intersection with
    $h(\delta(s,r(v)))$ we have $r(v)\in r'(v)$. Assume $v$ is labeled with $r(v)$, and
    let $H$ be the state that $\mathcal{A}_{lab}$ would have assigned it
    if it was unlabeled, then $r(v)\in H$ and $M'\in
    \delta_{lab}((s,*),H)\subseteq\delta_{lab}((s,r(v)),\{r(v)\})$; then $\mathcal{A}_{lab}$ assigns
    $\{r(v)\}$ as a state to $v$, which obviously contains $r(v)$.
\end{proof}
\begin{lemma}\label{lem:correct}
    Let $G$ be a $\Sigma$-labeled (not labeled with $Q$) rooted tree with root $r$. Then the
    type that the above algorithm assigns to the root $r$ contains at least
    one accepting state.
\end{lemma}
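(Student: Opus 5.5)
The plan is to prove the claim by induction over the layers of the decomposition. The invariant is that, after processing each layer, the current partial labeling can still be extended to an accepting run of $\mathcal{A}$ on $G$. Once this holds for the final partial labeling, Lemma~\ref{lem:typeauto} ($(3)\Rightarrow(1)$) says that the type computed by $\mathcal{A}_{lab}$ at the root contains an accepting state. We take as given that $\Pi_{\mathcal{A}}$ is solvable, so the empty partial labeling of $G$ is extendable; this is the base case. If $\Pi_{\mathcal{A}}$ is unsolvable on $G$, there is nothing to compute.

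Two facts must be established first. The first is that the types the algorithm computes are the true $\mathcal{A}_{lab}$-types of the current partially labeled tree, even though each node computed its type on a virtually degree-reduced tree. I would argue this bottom-up. $\mathrm{ForgetChildren}_R$ preserves the tree type by Lemma~\ref{lem:redtree}. $\mathrm{ForgetChildren}_C$ preserves the direct context type by Corollary~\ref{cor:replacecontext}. By Lemma~\ref{lem:treereplace} and Lemma~\ref{lem:contextreplace}, replacing a subtree or direct context by one of the same type does not change the type of any ancestor. Hence the type at every computed node, and in particular at the root, agrees with the $\mathcal{A}_{lab}$-run on the full tree $G$ carrying the same partial labels.

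The second fact is that feasibility can be transferred from the reduced tree to $G$. Consider layer $C_i$. Each component of $C_i$, together with the hanging forgotten-reduced subtrees from earlier layers, forms a bipolar tree $H'$ of maximum degree $\le\Delta$. Its labeling layer is at most $i$, because earlier applications of $f$ are exactly what produced the current partial labels. We apply $f$ to the middle of $H'$. Now suppose the current labeling $l$ on $G$ is extendable. We then need the labeling $l'$, which extends $l$ by the labels of $f$, to be extendable as well.

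This transfer is the step I expect to be the main obstacle. Feasibility of $f$ is only guaranteed inside $\Pi_{\mathcal{A},\Delta}$, that is, for host graphs of degree $\le\Delta$, whereas $G$ has unbounded degree. My first attempt is a replacement argument. Replace every high-degree subtree and direct context of $G$ outside $H'$ by a bounded-degree representative of the same $\mathcal{A}_{lab}$-type, obtaining a host $G^\ast$ of degree $\le\Delta$ that contains $H'$. Lemmas~\ref{lem:redtree} and~\ref{lem:contypes} supply these representatives; on the path from $H'$ to the root we use context types. Since types are preserved, Lemma~\ref{lem:typeauto} together with the replacement lemmas shows that $l$ is extendable on $G$ if and only if it is extendable on $G^\ast$, and likewise for $l'$. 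Feasibility of $f$ on $G^\ast$ then transfers back to $G$. Note that the representatives may consist of arbitrary trees with the same type and need not be subtrees of $G$, so only the type-level replacement lemmas are needed.

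A further point needs checking. The labeling-layer bookkeeping must match $k$-feasibility when we use a $(\gamma,\ell,k)$-decomposition, and must match $\infty$-feasibility in the logarithmic case. The trees fed to $f$ at layer $C_i$ have labeling layer at most $i\le k$ up to type-equivalent replacement, which is exactly the correspondence with $\mathscr{H}_k$ noted earlier. With both facts in hand, induction over $i$ gives the invariant, and the claim follows.
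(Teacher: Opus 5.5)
Your inductive step is sound. The replacement argument that moves feasibility from degree-$\le\Delta$ hosts to $G$ is actually more careful than the paper, which just says the feasible function preserves extendability ``by definition''.

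The gap is the base case. You take as given that $\Pi_{\mathcal{A}}$ is solvable, and say that if it is unsolvable on $G$ there is nothing to compute. But the lemma is stated for every $\Sigma$-labeled rooted tree $G$. By Lemma~\ref{lem:typeauto}, its conclusion is equivalent to $G$ admitting an accepting run. So the lemma itself asserts that every such $G$ is solvable. An unsolvable $G$ would contradict the statement, not fall outside it.

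The only hypothesis available is the existence of a $k$- or $\infty$-feasible function for $\Pi_{\mathcal{A},\Delta}$, that is, solvability of the bounded-degree problem. Assuming solvability of $\Pi_{\mathcal{A}}$ is not harmless here. Lemma~\ref{lem:correct} is where the direction $(2)\Rightarrow(3)$ of Theorems~\ref{thm:logred} and~\ref{thm:polyred} gets solvability of $\Pi_{\mathcal{A}}$. The corollary that $\Pi_{\mathcal{A}}$ is unsolvable iff $\Pi_{\mathcal{A},\Delta}$ is unsolvable rests on this, and that corollary is what makes the unsolvable case of Theorem~\ref{thm:autogap} decidable. Assuming solvability makes this chain circular.

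The missing step is the one the paper proves first:
\begin{enumerate}
\item Compute the $\mathcal{A}_{lab}$-types of the unlabeled $G$.
\item Apply $\mathrm{ForgetChildren}_R$ at every node, delete the forgotten edges, and keep the component $G'$ containing the root.
\item Argue bottom-up that every retained node keeps its type. Hence the root of $G'$ has the same type as the root of $G$, while $G'$ has maximum degree at most $\Delta$.
\item The feasible function makes $\Pi_{\mathcal{A},\Delta}$ solvable (Lemmas~\ref{lem:decidelog} and~\ref{lem:decidek}), so $G'$ has an accepting run.
\item By Lemma~\ref{lem:typeauto}, the common root type contains an accepting state, so the empty labeling of $G$ is extendable.
\end{enumerate}
Your own replacement argument, run with the empty labeling and no bipolar tree $H'$, gives the same conclusion. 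You need to invoke it for the base case instead of assuming its result.
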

\begin{proof}
    Recall that our assumption is that there is a feasible function (equivalently, that
    $\Pi_{\mathcal{A},\Delta}$ is solvable) -- we first need to show that $G$
    is solvable by $\Pi_{\mathcal{A}}$, or equivalently (by Lemma~\ref{lem:typeauto}) that
    $G$ is accepted by $\mathcal{A}_{lab}$.
    
    Assume we had computed the types of all nodes of
    $G$ according to $\mathcal{A}_{lab}$ without ever prelabeling nodes through a feasible
    function; then applied the procedure $\mathrm{ForgetChildren}_R$ to every node and
    completely removed the ``forgotten'' edges. This
    procedure leaves the types of all nodes unchanged by definition, in particular it will
    leave the type of the root unchanged -- it also disconnects the graph, so we only
    consider the connected component $G'$ containing the root, which is a rooted tree of
    maximum degree $\leq\Delta$. Because $\Pi_{\mathcal{A},\Delta}$ is solvable, $G'$ is
    solvable as an instance of $\Pi_{\mathcal{A},\Delta}$; which means $G$ is accepted by
    $\mathcal{A}_{lab}$, as the two problems are described by the same automaton
    $\mathcal{A}$. Then the type of the root of $G$ and $G'$ (without any prelabeling) contains an
    accepting state.

    Let us now look at the algorithm; we show that every step of the algorithm preserves
    at least one of the equivalent definitions in Lemma~\ref{lem:typeauto}. We only need
    to focus on the steps that change the topology of the graph: the virtual removal of
    edges in the $\mathrm{ForgetChildren}$ procedures and the partial labeling by the
    feasible function. As we have already observed, either $\mathrm{ForgetChildren}$
    preserves the types of all nodes; then it preserves (1). Additionally, by definition
    the feasible function always extends the labeling $l$ of $G$ to $l'$ such that if
    $l$ could be extended to a valid labeling then so can $l'$; then it preserves (3). 
\end{proof}
\begin{lemma}\label{lem:height}
    Every node of $G$ is the $\gamma\cdot (L+1)+\ell$-hop descendant of either the root or a node
    that was preassigned a label.
\end{lemma}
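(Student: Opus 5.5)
The plan is to measure, for every node $v$, the length of the directed path from $v$ up to its nearest \emph{anchor}, meaning either the root or a node prelabeled by the feasible function. We bound this length layer by layer, using the decomposition $R_1<C_1<\ldots<R_L<C_L<R_{L+1}$. The key observation is how the algorithm moves nodes: whenever a component of $C_i$ receives labels on two central nodes, the descendants of those nodes inside the path are moved into $R_{i+1}$, while the ancestors compute their types and stay in $C_i$. So after the algorithm, each $C_i$-component that still carries no anchor is a directed path segment of length at most $\ell$ sitting directly above a prelabeled node. The remaining $C_i$-nodes are exactly the descendant halves that have joined $R_{i+1}$.

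We prove by induction on $j$, downward from the top layer $R_{L+1}$ to $R_1$, that every node in $R_j$, or in the part of $C_j$ that stays there, reaches an anchor within the stated number of hops.

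Base case. The top layer $R_{L+1}$ consists of trees of height $\le\gamma$. Only roots may have neighbours in greater layers, and there are none, so the unique component is the one containing the tree's root. Hence every node of $R_{L+1}$ is at most $\gamma$ hops below the root.

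Walking up from a node $v$. Take $v$ in $R_j$ with $j\le L$. Inside its $R_j$-component, $v$ is at most $\gamma$ hops below the component's root $u$. By the decomposition property, the parent of $u$ lies in a greater layer.
\begin{itemize}
\item If the parent of $u$ is the \emph{endpoint} of a $C_{j'}$-path, we climb along that path, which is consistently directed because the decomposition is compatible with the rooting. Within at most $\ell$ further steps we hit one of the two central prelabeled nodes: the path has length in $[\ell,2\ell]$ and its labels were placed in the middle.
\item If the parent of $u$ is in some $R_{j'}$ with $j'>j$, we continue climbing there.
\end{itemize}

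Counting the cost. Each $R$ layer crossed costs at most $\gamma$ hops. Encountering any $C$-path ends the walk, since the walk enters it from below through its bottom endpoint and then meets a prelabeled node within $\ell$ steps. A walk therefore crosses at most $L+1$ $R$-layers and at most one $C$-segment. A node that started inside the retained ancestor part of a $C_i$-path also needs at most $\ell$ hops to reach its prelabeled node. The nodes that moved to $R_{i+1}$ sit directly below a prelabeled node, so they are covered by the same accounting, with the moved segment charged as part of the $R$ traversal or directly as $\le\ell$. This gives the bound $\gamma(L+1)+\ell$.

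The main obstacle is making the case analysis of the walk rigorous. Two things must be checked. First, the parent of an $R$-component root always lies in a greater layer and can be the \emph{bottom} endpoint of a $C$-path. Second, the nodes moved from $C_i$ into $R_{i+1}$ must not increase the height count beyond the claimed bound. For the second point, I would treat the moved descendant segment, of at most $\ell$ nodes, separately: from any node there, the prelabeled node is within $\ell$ hops upward. Any node below such a segment reaches that segment after its $R$-climbing. Since reaching a $C$-node stops the walk immediately, the $\ell$ term is paid at most once and the $\gamma$ terms at most $L+1$ times.
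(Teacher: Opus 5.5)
Your route is the paper's: climb from $v$, pay at most $\gamma$ per $R$-layer, and pay $\ell$ once when the climb meets a $C$-path. The step that makes $\ell$ appear only once is not valid.

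\textbf{Where the argument fails.} The decomposition only forbids \emph{interior} path nodes from having neighbours in \emph{greater} layers. Subtrees from lower layers may hang from \emph{any} node of a $C_j$-path. This happens in practice: in rake-and-compress on a directed path with a pendant leaf at every node, the leaves land in $R_1$ while the path is compressed, so interior path nodes have $R_1$-children.

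Suppose the climb enters a component $v_1\to\cdots\to v_x$ of $C_j$ at some $v_t$ with $t<\lfloor x/2\rfloor$. Then the prelabeled pair $v_{\lfloor x/2\rfloor},v_{\lfloor x/2\rfloor+1}$ lies \emph{below} the entry point. Climbing reaches $v_1$ after up to $\ell-1$ hops without meeting an anchor, and the walk continues into a greater layer. Likewise, your sentence that a node of the retained ancestor part ``needs at most $\ell$ hops to reach its prelabeled node'' has the direction reversed: that node is an ancestor, not a descendant, of the prelabeled pair. So $\ell$ can be paid once per $C$-layer crossed, and ``at most one $C$-segment'' is false.

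A smaller slip: the parent of an $R$-component root need not lie in a greater layer. The unique surviving child of a path's bottom endpoint lies in a later $R$-layer, so $R_{L+1}$ need not be a single component containing the root. This case is harmless, because the climb then enters the lower half of that path and meets its prelabeled pair within $\ell$ hops.

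\textbf{Comparison with the paper, and what can be proved.} The paper's one-line proof (``if this layer is $C_j$ \ldots\ then the middle node of it has been prelabeled and $v$ is the $\gamma+\ell$-hop descendant of it'') makes the same implicit assumption, so it does not supply the missing step.

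What the argument does give is this. Along the upward walk the layer index strictly increases, except possibly for a last step into the lower half of a lower-layer path, after which a prelabeled node is reached within $\ell$ hops. Hence the walk has at most $L+1$ $R$-segments of at most $\gamma+1$ hops each. It has at most $L$ upper $C$-segments of at most $\ell$ hops each, plus one final segment of at most $\ell$ hops. The total is at most $(L+1)(\gamma+\ell+1)$. That is still $O((\gamma+\ell)L)$, which is all Lemma~\ref{lem:complex} uses.

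However, the exact bound $\gamma(L+1)+\ell$ does not follow, and it can fail for a $(\gamma,\ell,L)$-decomposition as defined. Take $L=2$ and $\gamma=1$, with the root $r$ as $R_3$ and $R_2=\emptyset$. Put a $C_2$-path $a_1\to\cdots\to a_{2\ell}$ below $r$, a $C_1$-path $b_1\to\cdots\to b_{2\ell}$ hanging from $a_{\ell-1}$, and a leaf $c\in R_1$ hanging from $b_{\ell-1}$. The nearest anchor above $c$ is $r$, at distance $2\ell-1>\ell+3$ once $\ell\geq 5$. You should therefore charge $\ell$ per $C$-layer and prove the weaker bound.
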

\begin{proof}
    Assume this node $v$ is in layer $R_i$. Then it is at most distance $\gamma$ from the root
    of its layer. The root is the child of a node in a higher layer; if this layer is
    $C_j$ for some $j\geq i$, then the middle node of it has been prelabeled and $v$ is
    the $\gamma+\ell$-hop descendant of it. If this layer is $R_j$ for some $j>i$, then
    $v$ is the $2\gamma$-hop descendant of its root; we can at most repeat this once for every layer
    $R_1,\ldots,R_{L+1}$ until we reach the root of $G$.
\end{proof}
\begin{lemma}\label{lem:complex}
    Applying the above algorithm using a $(\gamma,\ell,L)$-decomposition takes
    $O((\gamma+\ell)L)$ rounds to terminate.
\end{lemma}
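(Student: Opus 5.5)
We account for the running time by splitting the algorithm into its two sweeps, the bottom-up type computation and the top-down labeling, plus the preprocessing. The preprocessing computes $\mathcal{A}_{lab}$, $\Delta$, $f$ and $\ell$ sequentially without communication, so it costs $0$ rounds. The decomposition costs $O(\log n)$ rounds by Lemma~\ref{lem:logdec} or $O(n^{1/k})$ rounds by Lemma~\ref{lem:polydec}, which is $O((\gamma+\ell)L)$ for the respective parameter choices.

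\textbf{Upward sweep.} We bound it by induction on the layers $R_1<C_1<\ldots<R_{L+1}$. The claim is that after $i\cdot O(\gamma+\ell)$ rounds every node of $R_1\cup C_1\cup\ldots\cup R_i\cup C_i$ has done two things: it has either computed its $\mathcal{A}_{lab}$-type or been deferred to a later $R$-layer, and it has applied the relevant $\mathrm{ForgetChildren}$ procedure.

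For a layer $R_i$, every child of a node in $R_i$ lies either in $R_i$ itself or in a lower layer, whose types are known by the inductive hypothesis. Each component of $R_i$ has height $\leq\gamma$, so a leaf-to-root sweep computes all types in $\gamma$ rounds. Each node needs only one round to learn its children's types, and $\mathrm{ForgetChildren}_R$ is a local computation.

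For a layer $C_i$, each node on a path knows the types of all its children except its path child. It applies $\mathrm{ForgetChildren}_C$ locally in $O(1)$ rounds. We then apply $f$ to the resulting bipolar tree with core path of length $\leq 2\ell$. By the feasible-function definition, $f$ reads only a constant-radius neighborhood of the middle edge in the degree-$\leq\Delta$ tree, which costs $O(\ell)$ rounds at most, since gathering the whole path suffices. The ancestors on the path then compute their types one by one in $\leq 2\ell$ rounds, and the descendants are moved into $R_{i+1}$. This move adds at most $2\ell$ to the height of the affected components of $R_{i+1}$, so the height of $R_{i+1}$ becomes at most $\gamma+2\ell$. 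This is still $O(\gamma+\ell)$ per layer, and it is consistent with the $\gamma(L+1)+\ell$ distance bound of Lemma~\ref{lem:height}. Summing over the $L+1$ layers gives $O((\gamma+\ell)L)$ rounds for the upward sweep.

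\textbf{Downward sweep.} After $\mathrm{Remember}$, a node that holds a chosen state picks states for all its children, including the forgotten ones, using only its own state and its children's $\mathcal{A}_{lab}$-types, in one round. This is the choice-function step of the lemma preceding Lemma~\ref{lem:typeauto}. Labels therefore propagate one level per round from the root and from the prelabeled nodes. Prelabeled nodes already hold states, so propagation through the descendants of a prelabeled node starts at round $0$ of this phase. By Lemma~\ref{lem:height}, every node is at most $\gamma(L+1)+\ell$ hops below such a node. The sweep thus ends within $\gamma(L+1)+\ell=O((\gamma+\ell)L)$ rounds.

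\textbf{Main obstacle.} The delicate point is the deferral of descendants of the prelabeled nodes into $R_{i+1}$. We must check that this does not blow up the height of the $R$-layers in a way that compounds across layers. The plan handles this by bounding the extra height contributed per $C$-layer by $2\ell$, independently of earlier layers, because each component of $C_i$ is a single path. Adding the three phases gives the claimed $O((\gamma+\ell)L)$ bound.
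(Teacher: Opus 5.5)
Your proof is correct and uses the same two-phase accounting as the paper. The bottom-up $\mathcal{A}_{lab}$-type computation costs $O(\gamma+\ell)$ per layer, with the deferred path segments adding only $O(\ell)$ height to the $R$-layers, and the top-down propagation is bounded via Lemma~\ref{lem:height}. One small inaccuracy does not affect the bound: $f$ depends on the type of the whole bipolar tree, not only on a constant-radius neighborhood of the middle edge, but as you note, gathering the whole path of length at most $2\ell$ suffices, so the $O(\ell)$ cost stands.
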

\begin{proof}
    For the $\mathcal{A}_{lab}$ type labeling step: each $i\in[1,\ldots,L]$, $R_i$ is composed of rooted trees of depth at most
    $\gamma+\ell$ (as at most $\ell$ nodes can be promoted from $C_{i-1}$ to $R_i$ during
    the algorithm) so computing all types by brute-force takes $O(\gamma+\ell)$ rounds. 
    Similarly, $C_i$ is composed of paths of length at most $2\ell$, and labeling them takes $O(\ell)$ time; 
    at the end, labeling $R_{L+1}$ takes a further $O(\gamma+\ell)$ rounds. In total, this
    takes $O((\gamma+\ell)L)$ rounds.

    For the $\mathcal{A}$ state labeling step: at each round, each node that has already
    decided on a label chooses labels for its children and communicates them. Since by
    Lemma~\ref{lem:height} every node is the $\gamma\cdot (L+1)+\ell$-hop descendant of a
    node that has chosen its label during the first part, this takes
    $O((\gamma)L+\ell)\subseteq O((\gamma+\ell)L)$ rounds.
\end{proof}

Then the overall algorithm takes $O((\gamma+\ell)L)$ rounds. By applying it with a
$\infty$-feasible function and a $(1,\ell,O(\log n))$-decomposition (which we can compute
in $O(\log n)$ rounds by Lemma~\ref{lem:logdec}), we prove
Theorem~\ref{thm:logred}; by applying it with a $k$-feasible function and a
$(O(n^{1/k}),\ell,k)$-decomposition (which we can compute
in $O(n^{1/k})$ rounds by Lemma~\ref{lem:polydec}), we prove Theorem~\ref{thm:polyred}. A
simple corollary to Theorem~\ref{thm:polyred} covers the case where both problems are
unsolvable.
\begin{corollary}
    Let $\Pi_{\mathcal{A}}$ be an automaton certification problem, and let $\Pi_{\mathcal{A},\Delta}$ be its $\mathsf{LCL}$
    restriction to the reduction degree. Then $\Pi_{\mathcal{A},\Delta}$ is unsolvable if
    and only if $\Pi_{\mathcal{A}}$ is unsolvable.
\end{corollary}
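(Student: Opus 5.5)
The two directions need different arguments. Unsolvability of $\Pi_{\mathcal{A}}$ implies unsolvability of $\Pi_{\mathcal{A},\Delta}$ only if we can produce unsolvable bounded-degree instances. The converse holds because bounded-degree instances are instances of $\Pi_{\mathcal{A}}$. I will work with the paper's notion of solvability (solvable on every instance) and then note the weak-solvability variant.

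\emph{Easy direction.} Suppose $\Pi_{\mathcal{A},\Delta}$ is unsolvable. Then for infinitely many sizes there is a rooted tree of maximum degree $\leq\Delta$ with no accepting run of $\mathcal{A}$. Each such tree is also an instance of $\Pi_{\mathcal{A}}$, since the family of rooted trees contains the bounded-degree ones and the automaton is the same. So $\Pi_{\mathcal{A}}$ is unsolvable on the same instances.

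\emph{Main direction.} Suppose $\Pi_{\mathcal{A}}$ is unsolvable, witnessed by a $\Sigma$-labeled rooted tree $G$ with no accepting run. I reuse the first paragraph of the proof of Lemma~\ref{lem:correct}.
\begin{enumerate}
\item Compute the run of $\mathcal{A}_{lab}$ (equivalently $\mathcal{A}_{det}$, since there is no prelabeling) on $G$.
\item Apply $\mathrm{ForgetChildren}_R$ at every node, bottom-up, and genuinely delete the forgotten subtrees. Let $G'$ be the component containing the root.
\end{enumerate}
By Lemma~\ref{lem:redtree}, applied inductively from the leaves, every surviving node keeps its type. Each surviving node has at most $\Delta_R$ children, so $G'$ is a rooted tree of maximum degree $\leq\Delta_R+1\leq\Delta$. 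The type of the root of $G'$ equals that of $G$. By Lemma~\ref{lem:typeauto}, this type contains no accepting state, since $G$ has no accepting run. Hence $G'$ is not accepted either, and $G'$ is an unsolvable instance of $\Pi_{\mathcal{A},\Delta}$.

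\emph{Main obstacle: sizes.} Lemma~\ref{lem:redtree} can shrink $G$ drastically, so infinitely many unsolvable $G$ might produce only finitely many distinct $G'$. To handle this, I will pad $G'$ while keeping it bad and of bounded degree. Since $Q_{det}$ is finite, pumping along a long root-to-leaf path gives the following. If the set of bounded-degree unsolvable trees is nonempty, then whenever some unsolvable $G'$ has height above $|2^Q|$, two nodes on a path share a type. The context between them can then be iterated via Lemma~\ref{lem:contextreplace}, yielding unsolvable bounded-degree trees of unboundedly many sizes. Under the paper's strict notion, a single unsolvable instance already makes a problem unsolvable, so this pumping matters only for the weak-solvability variant. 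The case where all reduced trees have small height is simply finitely many instances, which the $n_0$-padding convention absorbs. Finally, the corollary can also be read off as a consequence of Theorem~\ref{thm:polyred}: if $\Pi_{\mathcal{A},\Delta}$ is solvable at all, it has complexity $O(n)$ and thus a $1$-feasible function, and Theorem~\ref{thm:polyred} with $k=1$ then makes $\Pi_{\mathcal{A}}$ solvable.
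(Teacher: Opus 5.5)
Your core argument is correct, but it takes a different route from the paper's proof of this corollary. The paper proves the nontrivial direction by contraposition through Theorem~\ref{thm:polyred} with $k=1$. A solvable $\Pi_{\mathcal{A},\Delta}$ has an $O(n)$ algorithm, hence a $1$-feasible function, hence $\Pi_{\mathcal{A}}$ gets an $O(n)$ algorithm. You instead prune a bad instance directly with $\mathrm{ForgetChildren}_R$ and delete the forgotten subtrees. Lemma~\ref{lem:redtree}, applied inductively, together with Lemma~\ref{lem:typeauto} then shows that the resulting tree of degree at most $\Delta_R+1$ is still bad. Your route is self-contained and shows that the corollary is purely a statement about the existence of accepting runs. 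It needs neither the Chang--Pettie classification and feasible functions nor the decomposition machinery. The paper's route is a one-liner once Theorem~\ref{thm:polyred} is available, but it is circuitous: the correctness proof of that theorem (Lemma~\ref{lem:correct}) opens with exactly your pruning argument. Your closing sentence is the paper's proof.

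Your treatment of the weak-solvability reading, however, is wrong, not merely incomplete. Your pruning argument establishes the statement with ``unsolvable'' meaning ``some instance has no accepting run''. So does the paper's proof, since Lemma~\ref{lem:correct} needs the possibly tiny tree $G'$ to be solvable. Under the literal definition of Section~\ref{sec:localmodel} (infinitely many bad instances) the corollary is false, so the leftover case is not ``absorbed'' by $n_0$-padding. Here is a counterexample:
\begin{itemize}
\item take a single label $s$ and states $L,B,G$;
\item set $\delta(s,L)=\{\emptyset\}$, let $\delta(s,B)$ be the multisets consisting of at least ten copies of $L$ and nothing else, and let $\delta(s,G)$ be the complement of these two sets;
\item set $F=\{L,G\}$.
\end{itemize}
The bad instances of $\Pi_{\mathcal{A}}$ are exactly the height-one trees whose root has $m\ge 10$ leaf children, so there are infinitely many. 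Those of $\Pi_{\mathcal{A},\Delta}$ are only the ones with $10\le m\le\Delta$, finitely many. Every bad tree prunes to a height-one tree, so your pumping case never applies.

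Drop the pumping and padding discussion and state the strict reading explicitly. Also phrase your easy direction with a single witness. It currently speaks of ``infinitely many sizes'', which is inconsistent with your main direction, though it works verbatim either way.
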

\begin{proof}
    Because $\Pi_{\mathcal{A},\Delta}$ is a restriction of $\Pi_{\mathcal{A}}$, if
    $\Pi_{\mathcal{A},\Delta}$ is unsolvable so is $\Pi_{\mathcal{A}}$.

    For the converse, assume $\Pi_{\mathcal{A},\Delta}$ is solvable: then there is a
    $O(n)$ algorithm to solve it; by Theorem~\ref{thm:polyred}, there is a $O(n)$
    algorithm to solve $\Pi_{\mathcal{A}}$, which is not unsolvable.
\end{proof}
Together with the
classical classification results for $\mathsf{LCL}$ problems from
\cite{changHierarchy2019,treesChang2020} this proves Theorem~\ref{thm:autogap}.
\subsection{Interlude: labels and certificates}\label{sec:labelauto}
Throughout the proof of Theorem~\ref{thm:autogap}, we worked under the assumption that our
graph was $\Sigma$-labeled and computed a certificate (a $Q$-labeling) of the correctness
of the $\Sigma$-labeling. When talking about a Local $\mathsf{PMSO}$ labeling problem
$\Pi=(\Sigma_{in},\Sigma_{out},\mathcal{G})$ we could use the algorithms from
Theorem~\ref{thm:autogap} to \emph{certify} a solution ($\Sigma_{out}$-labeling) we have already computed: we instead
want to use it to compute \emph{both} a solution and a certificate of its correctness.

We can do this with a little trick: instead of using the automaton which verifies the
correctness of a solution directly -- by checking whether a
$\Sigma_{in}\times\Sigma_{out}$-labeled rooted tree has the property $\mathcal{G}$ -- we
modify it into an automaton that verifies if a $\Sigma_{in}$-labeled rooted tree \emph{can} be $\Sigma_{out}$-labeled
according to $\mathcal{G}$; additionally, this automaton encodes valid $\Sigma_{out}$
labeling in its accepting runs.
\labelautolem*
\begin{proof}
    The only thing that needs to be defined in $\mathcal{A}'$ is $\delta'$; 
    given a multiset $M$ of states we define as $X_\Sigma(M)$
    the set of multisets of $\Sigma\times Q$ such that projecting all their elements to
    the second component gives $M$. Essentially, we allow every possible combination of
    labels as long as the states respect $M$, since the transition function does not
    depend on the labels of the children nodes; then for every state $q\in Q$ and pair $(s_i,s_o)\in\Sigma_{in}\times\Sigma_{out}$ we define
    $$\delta'(s_i,(s_o,q)):= X_\Sigma(\delta((s_i,s_o),q)).$$
    Its correctness is trivial, as the transition function depends only on the state, but
    we need to prove that it is a rational multiset tree automata;
    the easiest way to see that $\delta'(s_i,(s_o,q))$ is a rational set is using the
    finite multiset automaton characterization. Given a multiset automaton $\mathcal{A}_M:=(Q,Q_A,q_0,f,F_A)$ for $\delta((s_i,s_o),q)$
    (note that its alphabet is the states of $\mathcal{A}$) we can
    replace the alphabet with $\Sigma_{out}\times Q$ and the transition function by
    $$f_{\Sigma_{out}}:(\Sigma_{out}\times Q)\times Q_A\to 2^{Q_A}\quad f_{\Sigma_{out}}((s_o,q),q_A):= f(q,q_A)
    $$
    ignoring $s\in\Sigma$ completely; visually, we replaced every transition arrow
    associated to a symbol $q$ with multiple arrows corresponding to all elements of $\Sigma\times\{q\}$.
\end{proof}
Note that even if we start from a deterministic, complete automaton $\mathcal{A}$ (as we do in the next section), the 
resulting automaton $\mathcal{A}'$ might be nondeterministic; this does not matter for applying Theorem~\ref{sec:auto_gap}.

Conceptually, this step is equivalent to taking a $\mathsf{PMSO}$ sentence $\varphi$ on $\Sigma_{in}\times\Sigma_{out}$-labeled graphs with out-label sets 
$\{S_{s_o}\}_{s_o\in\Sigma_{out}}$ to a $\mathsf{PMSO}$ sentence of the form
$$\exists_{s_o\in \Sigma_{out}} S_{s_o}\, \mathsf{Part}(\{S_{s_o}\}_{s_o\in\Sigma_{out}})\land \varphi $$
where $\mathsf{Part}(\{S_s\}_{s\in\Sigma})$ is the first order sentence expressing ``$\{S_s\}_{s\in\Sigma}$ is a
partition'' (see Section~\ref{sec:axioms} for details) and $\varphi$ uses the second-order variables
$S_{s_o}$ in place of the unary relations representing the $\Sigma_{out}$-labeling -- so it is defined on $\Sigma_{in}$-labeled graphs.

In other words, we accept every $\Sigma_{in}$-labeled tree for which a valid $\Sigma_{out}$-labeling \emph{exists} instead of every graph
which is already correctly labeled. This is only a valid formula if $\Sigma_{out}$ is a finite
set, for the same reason as to why $\mathcal{A}'$ is only a valid finite state automaton
if $\Sigma_{out}$ is a finite set.
\subsection{Bridging the verification gap}\label{sec:auto_redu}
To conclude, we show that \emph{any $\mathsf{LPMSO}$ problem is asymptotically as hard as
an automaton certification problem}. Theorem~\ref{thm:PMSO_auto} guarantees that for any
$\mathsf{PMSO}$ property $\mathcal{G}$ describing a $\mathsf{LPMSO}$ problem $\Pi$, there is \emph{some}
rational multiset tree automaton certifying $\mathcal{G}$; however, not \emph{every} such
RatMTA will allow us to prove the desired result. We give an example: let $\Pi=(\{*\},\{T\},\mathcal{G}_T)$ be the
trivial problem of labeling an unlabeled rooted tree with $T$ everywhere. Clearly,
$\mathcal{G}_T$ is locally verifiable in $0$ rounds; additionally it can be described by
the simple $\mathsf{PMSO}$ formula $\forall x\, x\in S_T$ (``every node is labeled $T$''). We define a
rational multiset tree automaton that recognises the language $\mathcal{G}_T$ of $T$-labeled
rooted trees: $\mathcal{A}_T=(\{T\},\{0,1\},\delta_T,\{0,1\})$ where
$\delta(T,0)=\mathcal{M}(\{1\})$ (all multisets composed only of $1$s) and
$\delta(T,1)=\mathcal{M}(\{0\})$ (all multisets composed only of $0$s). A valid run of
$\mathcal{A}$ is a $2$-coloring of the rooted tree with $\{0,1\}$; since all states are
accepting, every valid run is an accepting run. Solving the automaton certification
problem associated with $\mathcal{A}_T$ is equivalent to $2$-coloring a rooted tree, which
is known to take $\Omega(n)$ rounds \cite{balliuRooted2022}.

We need to choose our automata carefully so that we are not
forced to do extra work: for this purpose we choose the deterministic and complete RatMTA which
is \emph{minimal} with respect to the number of states, which is given to us by
Theorem~\ref{thm:minimal-ratmta} (restated below).
\minratmtathm*
Recall that $\equiv_{\mathcal{G}}$ is the coarsest possible ``replacement'' relation: we
can replace a rooted subtree with an equivalent one without impacting the overall
correctness of the labeling. In particular, the equivalence relationship
between multipolar trees based on isomorphisms of the $k$-hop neighborhood of the poles (where $k$ is the
local checkability radius of our problem) which is defined in \cite{changHierarchy2019}
has the same property, and so is a refinement of $\equiv_{\mathcal{G}}$; we will use this
fact to compute a certificate for a valid solution in constant rounds.
\gapjumpthm*
We say that a $(\Sigma_{in}\times\Sigma_{out})$-labeled rooted tree is \emph{$k$-away from happy} with respect to a
locally verifiable (in $k$ rounds) problem $\Pi=(\Sigma_{in},\Sigma_{out},\mathcal{G})$ if, by running the $k$-round
verification algorithm for $\Pi$, every node at distance $\geq k+1$ from the root (which
is too far away to communicate with the root) returns ``yes''.
\begin{definition}
    Let $\Pi=(\Sigma_{in},\Sigma_{out},\mathcal{G})$ be a problem whose solution can be locally
    checked in $k$ rounds. We define an equivalence relation on $(\Sigma_{in}\times\Sigma_{out})$-labeled rooted
    trees $\equiv_{\Pi}$ as follows: $t_1\equiv_{\Pi}t_2$ if and only if 
    \begin{enumerate}
    \item $N^{2k+2}(r(t_1))\cong N^{2k+2}(r(t_2))$ (the $2k+2$-radius neighborhoods of their
    roots are isomorphic), and
    \item $t_1$ is $k$-away from happy iff $t_2$ is $k$-away from happy (with respect to $\Pi$).
\end{enumerate}
\end{definition}
\begin{lemma}\label{lem:refine}
    For any $\Pi=(\Sigma_{in},\Sigma_{out},\mathcal{G})$ we have $t_1\equiv_{\Pi} t_2\Rightarrow
    t_1\equiv_{\mathcal{G}} t_2$. 
\end{lemma}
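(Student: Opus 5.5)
Fix an arbitrary context $C[x]$ and show $C[t_1]\in\mathcal{G}\Rightarrow C[t_2]\in\mathcal{G}$; the converse is symmetric. Since $\mathcal{G}$ is decided by a $k$-round distributed verifier in the anonymous model (the verifier depends neither on ID values nor on $n$), $C[t_2]\in\mathcal{G}$ holds iff every node $v$ of $C[t_2]$ outputs ``yes''. The output of $v$ is a function of the isomorphism class of its labeled $k$-hop neighborhood $N^k(v)$ (together with the position of $v$ in it). So it suffices to give, for every node $v$ of $C[t_2]$, a node $v'$ of $C[t_1]$ with $N^k(v)\cong N^k(v')$ rooted at $v,v'$; then $v$ outputs ``yes'' because $v'$ does.

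Split the nodes of $C[t_2]$ into three groups.
\begin{enumerate}
\item Nodes $v$ of $t_2$ at distance $\geq k+1$ from the root $r(t_2)$. Their $k$-neighborhood lies inside $t_2$, as a subtree rooted at $r(t_2)$ reached only through the root. Their verifier output is therefore the same as in $t_2$ alone. Because $t_1$ is $k$-away from happy (every node of $t_1$ at distance $\geq k+1$ from its root is a node of $C[t_1]\in\mathcal{G}$ with the same view, hence says ``yes''), condition 2 of $\equiv_\Pi$ gives that $t_2$ is $k$-away from happy, so these nodes say ``yes''.
\item Nodes of $t_2$ within distance $k$ of $r(t_2)$. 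Their $k$-neighborhood is contained in $N^{2k}(r(t_2))$ inside $t_2$, together with the part of $C$ within distance $k$ of the variable node. Fix an isomorphism $\phi:N^{2k+2}(r(t_2))\to N^{2k+2}(r(t_1))$ fixing the roots. Map $v$ to $\phi(v)$, and glue $\phi$ with the identity on the context part. This gives an isomorphism of the $k$-balls, since the ball around $v$ stays within radius $2k$ of the root and the rooted structure (parent edge of the root) is preserved.
\item Nodes of $C$ itself. A node at distance $>k$ from the variable node has an identical view in both trees. A node at distance $\leq k$ from the variable node sees at most depth $k$ into $t_2$, and the same gluing of $\phi$ with the identity works.
\end{enumerate}

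The main obstacle is making the gluing in groups 2 and 3 rigorous. The isomorphism must preserve input/output labels and the child/parent orientation, and it must respect that the root of $t_i$ is identified with the variable node. I would handle this by taking $\phi$ as an isomorphism of labeled rooted structures and extending it by the identity on $C$. This yields a well-defined isomorphism between the radius-$k$ balls, because the only edges between the $t_i$ part and the $C$ part pass through the root, which $\phi$ fixes. The margin $2k+2$ exceeds the $2k$ needed, so it absorbs off-by-one issues from the root's parent edge and from the edge subdivision of Appendix~\ref{app:conversion}.

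A subtle point is that condition 2 is only needed as a transfer of ``happiness'' deep inside $t_i$, and the $\Rightarrow$ direction from $C[t_1]\in\mathcal{G}$ indeed supplies it. I would also note that the verifier's independence from $n$ matters: $C[t_1]$ and $C[t_2]$ can have different sizes, and this is exactly why the $\mathsf{LD}^*$ assumption is used.
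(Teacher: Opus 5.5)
Your proposal is correct and follows essentially the same route as the paper. It uses the same three-way split: context nodes; nodes of $t_i$ within distance $k$ of the root, handled via the root-preserving isomorphism of the $(2k+2)$-neighborhoods; and deeper nodes of $t_i$, handled by transferring the $k$-away-from-happy property. The only differences are that you argue directly rather than via the paper's contrapositive, and that your explicit gluing of $\phi$ with the identity on $C$ covers context nodes whose $k$-ball reaches into $t_i$, which the paper's first case glosses over. Your remark about the edge subdivision of Appendix~\ref{app:conversion} is unnecessary, since the verifier acts directly on rooted trees.
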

\begin{proof}
Let $t_1\equiv_{\Pi} t_2$ and let $C$ be any context; we want to show that
$C[t_1]\in\mathcal{G}\Leftrightarrow C[t_2]\in\mathcal{G}$, since the order is arbitrary
it suffices to show $C[t_1]\notin\mathcal{G}\Rightarrow C[t_2]\notin\mathcal{G}$.

Assume $C[t_1]\notin\mathcal{G}$, and run the $r$-rounds verification algorithm on $C[t_1]$ and
$C[t_2]$. The algorithm returns ``no'' on at least one node $v$ of $C[t_1]$; this can happen
in three cases:
\begin{enumerate}
    \item $v$ is not a member of $t_1$: all nodes of $C[t_2]\smallsetminus t_2$ have the
    same view to the corresponding nodes of $C[t_1]\smallsetminus t_1$, so the
    corresponding node in $C[t_2]$ will also return ``no'' and $C[t_2]\notin\mathcal{G}$;
    \item $v$ is a member of $t_1$ which is $\leq k$ steps away from the root: since the
    $2k+2$-hop neighborhoods of the roots are isomorphic, the corresponding nodes have
    isomorphic $k$-hop views, so the corresponding node in $C[t_2]$ will also return ``no'' and $C[t_2]\notin\mathcal{G}$;
    \item $v$ is a member of $t_1$ which is $>k$ steps away from the root: then $t_1$ is
    not $k$-away from happy, so neither is $t_2$, then there is a node in $t_2$ and
    $C[t_2]$ which will return ``no'' and $C[t_2]\notin\mathcal{G}$.
\end{enumerate}
\end{proof}
In essence, we have proven that $\equiv_{\Pi}$ (which has infinitely many equivalence
classes) is a refinement of $\equiv_{\mathcal{G}}$, which for $\mathsf{PMSO}$-definable
languages has finitely many equivalence classes (which are up to renaming the states of the minimal
automaton). In other words, knowing the $2k+2$-hop neighborhood of a node and whether it is
$k$-away from happy is sufficient to determine its exact state in the minimal automaton;
we show this correspondence is computable.
\begin{lemma}
    Let $\Pi=(\Sigma_{in},\Sigma_{out},\mathcal{G})$ be a Local $\mathsf{PMSO}$ problem on rooted trees whose solution can be locally
    checked in $k$ rounds. Then there exists a \emph{computable} function
    $$f:\mathcal{T}(\Sigma_{in}\times\Sigma_{out})/\equiv_{\Pi}\to\mathcal{T}(\Sigma_{in}\times\Sigma_{out})/\equiv_{\mathcal{G}}$$
    which for every $(\Sigma_{in}\times\Sigma_{out})$-labeled rooted tree $t$ maps $[t]_{\Pi}$ to the equivalence class $[t]_{\mathcal{G}}$.
\end{lemma}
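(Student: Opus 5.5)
Well-definedness is immediate from Lemma~\ref{lem:refine}: if $t_1\equiv_{\Pi}t_2$ then $t_1\equiv_{\mathcal{G}}t_2$, so the map $[t]_{\Pi}\mapsto[t]_{\mathcal{G}}$ is a function. The real content is computability. An $\equiv_{\Pi}$ class is given by a finite object: an isomorphism class of a labeled $(2k+2)$-hop neighborhood $N$ of a root, together with one bit $b$ (``$k$-away from happy''). Note that because the degree is unbounded there are infinitely many such $(N,b)$, but each one is finitely describable. The $\equiv_{\mathcal{G}}$ classes are identified with the states of the minimal automaton $\mathcal{A}$ from Theorem~\ref{thm:minimal-ratmta}, which is fixed and finite, so $[t]_{\mathcal{G}}$ is just $\mathrm{Type}(t)$ according to $\mathcal{A}$. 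The plan is: on input $(N,b)$, construct a single witness tree $t^*$ with $t^*\equiv_{\Pi}t$, then run $\mathcal{A}$ on $t^*$ bottom-up. This is a finite computation, since the transition sets $\delta(s,q)$ are semilinear and membership of a given multiset is decidable. Output the root state. By Lemma~\ref{lem:refine} the answer does not depend on which witness we picked.

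The key steps, in order, are as follows.

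\textbf{Step 1.} Decide whether $(N,b)$ is realizable at all, i.e.\ whether some finite tree $t$ has root neighborhood $N$ and happiness bit $b$. Unrealizable pairs may be mapped arbitrarily.

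\textbf{Step 2.} If realizable, find a realizing $t^*$ by enumerating finite labeled rooted trees in order of size. For each candidate we check the neighborhood isomorphism and compute $b$ by running the $k$-round verifier on all nodes at depth $\geq k+1$. Both checks are decidable: the verifier is a constant-round algorithm, and $\mathcal{G}$ is decidable.

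\textbf{Step 3.} Run $\mathcal{A}$ on $t^*$ and output the root state.

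Enumeration already yields a partial computable procedure that halts whenever $(N,b)$ is realizable. The remaining issue is deciding realizability, so that the procedure is total.

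The main obstacle is Step 1, in particular the case $b=\text{true}$. Here we need a completion of the boundary of $N$ (the nodes at depth $2k+2$) into subtrees such that every deep node's $k$-view is accepted. My first attempt is to observe that nodes at depth $\geq k+1$ see only inside $t$, so ``$k$-away from happy'' is itself a property of $t$ that is checked locally and restricted to a subtree. It should therefore be expressible in $\mathsf{PMSO}$: take the conjunction of $\mathcal{G}$-style local acceptance with a constraint on depth, where ``depth $\geq k+1$'' is first-order definable. That would give a RatMTA $\mathcal{B}$ for it. Realizability of $(N,b)$ then reduces to emptiness of $\mathcal{B}$ (or of its complement), intersected with the $\mathsf{PMSO}$-definable property ``root neighborhood $\cong N$''. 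Emptiness of a RatMTA is decidable: iterate the reachable-state set, checking nonemptiness of semilinear sets restricted to reachable states.

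If expressing local acceptance in $\mathsf{PMSO}$ is problematic, because the verifier is only known as an algorithm, I will fall back on a cheaper route. Instead of checking realizability of $(N,b)$, compute $f$ on the realized classes lazily. Whenever the function is actually used in Theorem~\ref{thm:gap_jump}, the tree $t$ is a correct solution, so $t$ is $k$-away from happy and its neighborhood is realized. Enumeration in Step 2 then always terminates on the inputs that matter.
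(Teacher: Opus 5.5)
Your argument is correct, but Step~1 is unnecessary, and your first attempt at it does not work. The domain $\mathcal{T}(\Sigma_{in}\times\Sigma_{out})/\equiv_{\Pi}$ consists of nonempty classes by definition, so your ``fallback'' already proves the whole lemma. Size-ordered enumeration halts on every element of the domain and returns some $t^*\equiv_{\Pi}t$. By Lemma~\ref{lem:refine}, the root state of the minimal automaton on $t^*$ is then $[t]_{\mathcal{G}}$. No appeal to how Theorem~\ref{thm:gap_jump} uses $f$ is needed.

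The $\mathsf{PMSO}$ reduction, by contrast, fails in general. A verifier only has to reject \emph{somewhere} on each tree outside $\mathcal{G}$, so it may behave in non-Presburger ways on configurations that never occur in $\mathcal{G}$. For instance, let $\mathcal{G}$ be ``no node is labeled $c$''. Take the radius-$1$ verifier that accepts every $a$-node and accepts a $c$-node iff it has a parent labeled $c$ and a prime number of children. A topmost $c$ always rejects, so this verifier is valid for $\mathcal{G}$. Now restrict to trees consisting of a path of three $c$'s whose last node has only $a$-leaf children. On these, ``$1$-away from happy'' says exactly that the number of leaves is prime, which is not semilinear. Hence ``$1$-away from happy'' is not $\mathsf{PMSO}$-definable.

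The paper avoids unbounded search. For each pair of a minimal-automaton state and a root label, it precomputes a representative of degree at most $\Delta$ and bounded height, using $\mathrm{ForgetChildren}$ and the saturation map $g$. It then tests only the finitely many trees obtained from $N$ by assigning states to its leaves and plugging in the representatives. This gives an a priori bounded search and is meant to also detect pairs $(N,b)$ that describe no tree. However, it relies on the replacement preserving the happiness bit.

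That preservation holds when $t$ is a rooted subtree of a member of $\mathcal{G}$, which is the case used in Theorem~\ref{thm:gap_jump}: the whole tree then stays in $\mathcal{G}$. It is not automatic for arbitrary classes. In the example above, all $c$-rooted trees share one state, and its precomputed representative is the single $c$-leaf, which rejects. So the paper's search finds no witness for the nonempty class of a $1$-away-from-happy tree with a $c$ at depth $4$.

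Your enumeration needs none of this machinery and is correct on every class. The cost is that it gives no running-time bound and no emptiness test.
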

\begin{proof}
    $f$ exists and is well-defined by Lemma~\ref{lem:refine}; we just need to show that it is computable.

    We represent an equivalence class $T_{\Pi}\in\mathcal{T}(\Sigma)/\equiv_{\Pi}$ as a
    pair $(N^{2k+2}(r),b)$ where $N^{2k+2}(r)$ is a $2k+2$-height rooted tree (considered
    as a $2k+2$-hop neighborhood centered at the root) and $b$ is a bit representing
    whether the trees of this class are $k$-away from happy or not; note that this
    representation might lead to empty classes, but the function is defined on nonempty
    classes (alternatively, it can be defined as sending empty class to empty class).

    We do a preprocessing step on $\Pi$: for every equivalence class $q$ of $\mathcal{G}$
    (corresponding to the states of the minimal automaton) and every label pair $(s_i,s_o)$ in $\Sigma_{in}\times\Sigma_{out}$ we want to compute a finite
    representative $t$ whose root is labeled $(s_i,s_o)$ and which is of class $q$. We show that
    every pair has a finite representative of maximum
    degree $\Delta$ and maximum height $\lvert\Sigma_{in}\rvert\cdot\lvert\Sigma_{out}\rvert\cdot h$, where $h$ is the number of equivalence classes
    of $\mathcal{G}$. Let $t$ be any element of class $T_{\mathcal{G}}$; we can compute
    the type of all its nodes and apply the procedure $\mathsf{ForgetChildren}$ to reduce
    its maximum degree to $\Delta$ without affecting the type of the root. 
    
    Now let us only
    consider trees of maximum degree $\Delta$ and initialize a set $\mathcal{Q}$
    containing all possible $\Sigma_{in}\times\Sigma_{out}$-labelings of a leaf. We define a recursive
    function $g$ which constructs all trees of maximum degree $\Delta$ of height $1$ where
    the root is $\Sigma_{in}\times\Sigma_{out}$-labeled and the leaves are labeled with trees of $\mathcal{Q}$ --
    then, for each leaf node labeled by a tree $t_1$ we replace the leaf with the tree
    $t_1$. We iterate over the finitely many trees that are created this way, and
    compute a run of the minimal automaton on it; if the tuple of the root state $q$ and
    root labels $(s_i,s_o)$ of $t$ does not appear
    in any element of $\mathcal{Q}$, we add the tree $t$ to $\mathcal{Q}'$ -- once we have
    iterated over all trees constructed this way, we set
    $g(\mathcal{Q}):=\mathcal{Q}\cup\mathcal{Q}'$. We remark on the properties of $g$:
    \begin{itemize}
        \item $g$ is deterministic (as long as we process the trees in a deterministic order),
        \item $\mathcal{Q}$ starts with one element and is weakly increasing
        ($\mathcal{Q}\subseteq g(\mathcal{Q})$),
        \item $\mathcal{Q}$ does not contain two trees with the same state and root label pair: in other
        words,
        $$\lvert\mathcal{Q}\rvert\leq\lvert\Sigma_{in}\times\Sigma_{out}\rvert\cdot\lvert\mathcal{T}(\Sigma)/\equiv_{\mathcal{G}}\rvert=\lvert\Sigma_{in}\rvert\cdot\lvert\Sigma_{out}\rvert\cdot
        h,$$
        \item if $h_0$ is the maximum height of a tree in $\mathcal{Q}$, then $h_0+1$ is
        the maximum height of a tree in $g(\mathcal{Q})$.
    \end{itemize}
    Since $g$ is a deterministic monotone (in the set inclusion sense) function with an
    upper bound $\lvert\Sigma_{in}\rvert\cdot\lvert\Sigma_{out}\rvert\cdot h$ on size, it reaches a fixed point after at
    most $\lvert\Sigma_{in}\rvert\cdot\lvert\Sigma_{out}\rvert\cdot h$ iterations; at
    that point, we have found a representative of maximum degree $\Delta$ and maximum height $h$
    for every state and label pair which \emph{can} have a representative.

    Finally, we describe the process for computing $f$: given a pair $(N^{2k+2}(r),b)$, we
    consider every possible assignment of states to the leaves of $N^{2k+2}(r)$, which are
    $\Sigma_{in}\times\Sigma_{out}$-labeled. Then for every state and label pair assigned
    to a leaf, we replace the leaf with the finite
    representative that we precomputed; then we can check if this tree is $k$-away from
    happy. If the pair we are given represents a nonempty class, we show that we will find at least one
    tree of this form that represents it: simply take any tree of this class, compute a run of
    the minimal automaton, and then replace the subtrees rooted at nodes $2k+2$-away from
    the root with the corresponding representatives.

    We have finitely many assignments of states to the leaves of $N^{2k+2}(r)$; we can
    iterate through all of them, and if we find a valid representative (a tree which is
    $k$-away from happy if $b=1$, one who is not if $b=0$) we compute its equivalence
    class; we set this class as the result of $f((N^{2k+2}(r),b))$ -- if we cannot find
    any tree of this class, the class is empty and we can set $f(\emptyset)=\emptyset$.
\end{proof}
We can now prove Theorem~\ref{thm:gap_jump}:
\begin{proof}
    We describe an algorithm that starts from a correct solution of $\Pi$ and computes a
    correct solution of $\Pi'$ in $2k+2$ rounds. Since the solution is correct, every
    rooted subtree is in particular $k$-away from happy; then every node can gather its $2k+2$-hop
    descendants and determine its own state by computing $f(N^{2k+2}(r),1)$, then output the pair of its original label
    and state.
\end{proof}
\printbibliography

@inproceedings{fraigniaudIdentifiers2013,
  author       = {Pierre Fraigniaud and
                  Mika G{\"{o}}{\"{o}}s and
                  Amos Korman and
                  Jukka Suomela},
  editor       = {Panagiota Fatourou and
                  Gadi Taubenfeld},
  title        = {What can be decided locally without identifiers?},
  booktitle    = {{ACM} Symposium on Principles of Distributed Computing, {PODC} '13,
                  Montreal, QC, Canada, July 22-24, 2013},
  pages        = {157--165},
  publisher    = {{ACM}},
  year         = {2013},
  url          = {https://doi.org/10.1145/2484239.2484264},
  doi          = {10.1145/2484239.2484264},
  bibsource    = {dblp computer science bibliography, https://dblp.org}
}

@INPROCEEDINGS{linialDGA1987,
  author       = {Nathan Linial},
  title        = {Distributive Graph Algorithms-Global Solutions from Local Data},
  booktitle    = {28th Annual Symposium on Foundations of Computer Science, Los Angeles,
                  California, USA, 27-29 October 1987},
  pages        = {331--335},
  publisher    = {{IEEE} Computer Society},
  year         = {1987},
  url          = {https://doi.org/10.1109/SFCS.1987.20},
  doi          = {10.1109/SFCS.1987.20},
  bibsource    = {dblp computer science bibliography, https://dblp.org}
}

@article{linialLocality1992,
  author       = {Nathan Linial},
  title        = {Locality in Distributed Graph Algorithms},
  journal      = {{SIAM} J. Comput.},
  volume       = {21},
  number       = {1},
  pages        = {193--201},
  year         = {1992},
  url          = {https://doi.org/10.1137/0221015},
  doi          = {10.1137/0221015},
  bibsource    = {dblp computer science bibliography, https://dblp.org}
}

@inproceedings{balliuPaths2019,
  author       = {Alkida Balliu and
                  Sebastian Brandt and
                  Yi{-}Jun Chang and
                  Dennis Olivetti and
                  Mika{\"{e}}l Rabie and
                  Jukka Suomela},
  editor       = {Peter Robinson and
                  Faith Ellen},
  title        = {The Distributed Complexity of Locally Checkable Problems on Paths
                  is Decidable},
  booktitle    = {Proceedings of the 2019 {ACM} Symposium on Principles of Distributed
                  Computing, {PODC} 2019, Toronto, ON, Canada, July 29 - August 2, 2019},
  pages        = {262--271},
  publisher    = {{ACM}},
  year         = {2019},
  url          = {https://doi.org/10.1145/3293611.3331606},
  doi          = {10.1145/3293611.3331606},
  bibsource    = {dblp computer science bibliography, https://dblp.org}
}

@inproceedings{brandtGrids2017,
  author       = {Sebastian Brandt and
                  Juho Hirvonen and
                  Janne H. Korhonen and
                  Tuomo Lempi{\"{a}}inen and
                  Patric R. J. {\"{O}}sterg{\aa}rd and
                  Christopher Purcell and
                  Joel Rybicki and
                  Jukka Suomela and
                  Przemyslaw Uznanski},
  editor       = {Elad Michael Schiller and
                  Alexander A. Schwarzmann},
  title        = {{LCL} Problems on Grids},
  booktitle    = {Proceedings of the {ACM} Symposium on Principles of Distributed Computing,
                  {PODC} 2017, Washington, DC, USA, July 25-27, 2017},
  pages        = {101--110},
  publisher    = {{ACM}},
  year         = {2017},
  url          = {https://doi.org/10.1145/3087801.3087833},
  doi          = {10.1145/3087801.3087833},
  bibsource    = {dblp computer science bibliography, https://dblp.org}
}

@misc{schmidLFL2026,
      title={LCLs Beyond Bounded Degrees}, 
      author={Gustav Schmid},
      year={2026},
      eprint={2602.02340},
      archivePrefix={arXiv},
      primaryClass={cs.DC},
      url={https://arxiv.org/abs/2602.02340}, 
}

@inproceedings{feuilloleyMSOcert2022,
  author       = {Laurent Feuilloley and
                  Nicolas Bousquet and
                  Th{\'{e}}o Pierron},
  editor       = {Alessia Milani and
                  Philipp Woelfel},
  title        = {What Can Be Certified Compactly? Compact local certification of {MSO}
                  properties in tree-like graphs},
  booktitle    = {{PODC} '22: {ACM} Symposium on Principles of Distributed Computing,
                  Salerno, Italy, July 25 - 29, 2022},
  pages        = {131--140},
  publisher    = {{ACM}},
  year         = {2022},
  url          = {https://doi.org/10.1145/3519270.3538416},
  doi          = {10.1145/3519270.3538416},
  bibsource    = {dblp computer science bibliography, https://dblp.org}
}

@inproceedings{balliuRooted2022,
  author       = {Alkida Balliu and
                  Sebastian Brandt and
                  Dennis Olivetti and
                  Jan Studen{\'{y}} and
                  Jukka Suomela and
                  Aleksandr Tereshchenko},
  editor       = {Avery Miller and
                  Keren Censor{-}Hillel and
                  Janne H. Korhonen},
  title        = {Locally Checkable Problems in Rooted Trees},
  booktitle    = {{PODC} '21: {ACM} Symposium on Principles of Distributed Computing,
                  Virtual Event, Italy, July 26-30, 2021},
  pages        = {263--272},
  publisher    = {{ACM}},
  year         = {2021},
  url          = {https://doi.org/10.1145/3465084.3467934},
  doi          = {10.1145/3465084.3467934},
  bibsource    = {dblp computer science bibliography, https://dblp.org}
}

@inproceedings{changHierarchy2019,
  author       = {Yi{-}Jun Chang and
                  Seth Pettie},
  editor       = {Chris Umans},
  title        = {A Time Hierarchy Theorem for the {LOCAL} Model},
  booktitle    = {58th {IEEE} Annual Symposium on Foundations of Computer Science, {FOCS}
                  2017, Berkeley, CA, USA, October 15-17, 2017},
  pages        = {156--167},
  publisher    = {{IEEE} Computer Society},
  year         = {2017},
  url          = {https://doi.org/10.1109/FOCS.2017.23},
  doi          = {10.1109/FOCS.2017.23},
  bibsource    = {dblp computer science bibliography, https://dblp.org}
}

@book{treeautomataTA,
  TITLE = {{Tree Automata Techniques and Applications}},
  AUTHOR = {Comon, Hubert and Dauchet, Max and Gilleron, R{\'e}mi and Jacquemard, Florent and Lugiez, Denis and L{\"o}ding, Christof and Tison, Sophie and Tommasi, Marc},
  URL = {https://inria.hal.science/hal-03367725},
  PAGES = {262},
  YEAR = {2008},
  PUBLISHER = {HAL},
  HAL_ID = {hal-03367725},
  HAL_VERSION = {v1},
}

@inproceedings{csuhajMultiset2000,
  author       = {Erzs{\'{e}}bet Csuhaj{-}Varj{\'{u}} and
                  Carlos Mart{\'{\i}}n{-}Vide and
                  Victor Mitrana},
  editor       = {Cristian Calude and
                  Gheorghe Paun and
                  Grzegorz Rozenberg and
                  Arto Salomaa},
  title        = {Multiset Automata},
  booktitle    = {Multiset Processing, Mathematical, Computer Science, and Molecular
                  Computing Points of View [Workshop on Multiset Processing, {WMP} 2000,
                  Curtea de Arges, Romania, August 21-25, 2000]},
  series       = {Lecture Notes in Computer Science},
  pages        = {69--84},
  publisher    = {Springer},
  year         = {2000},
  url          = {https://doi.org/10.1007/3-540-45523-X\_4},
  doi          = {10.1007/3-540-45523-X\_4},
  bibsource    = {dblp computer science bibliography, https://dblp.org}
}

@inproceedings{bonevaAutomata2005,
  author       = {Iovka Boneva and
                  Jean{-}Marc Talbot},
  editor       = {J{\"{u}}rgen Giesl},
  title        = {Automata and Logics for Unranked and Unordered Trees},
  booktitle    = {Term Rewriting and Applications, 16th International Conference, {RTA}
                  2005, Nara, Japan, April 19-21, 2005, Proceedings},
  series       = {Lecture Notes in Computer Science},
  pages        = {500--515},
  publisher    = {Springer},
  year         = {2005},
  url          = {https://doi.org/10.1007/978-3-540-32033-3\_36},
  doi          = {10.1007/978-3-540-32033-3\_36},
  bibsource    = {dblp computer science bibliography, https://dblp.org}
}

@article{eilenbergRational1969,
  title = {Rational sets in commutative monoids},
  volume = {13},
  ISSN = {0021-8693},
  url = {http://dx.doi.org/10.1016/0021-8693(69)90070-2},
  DOI = {10.1016/0021-8693(69)90070-2},
  number = {2},
  journal = {Journal of Algebra},
  publisher = {Elsevier BV},
  author = {Eilenberg,  Samuel and Sch\"{u}tzenberger,  M.P},
  year = {1969},
  month = oct,
  pages = {173–191}
}

@inproceedings{colcombetRational2003,
  author       = {Thomas Colcombet},
  editor       = {Anton{\'{\i}}n Kucera and
                  Richard Mayr},
  title        = {Rewriting in the partial algebra of typed terms modulo {AC}},
  booktitle    = {4th International Workshop on Verification of Infinite-State Systems
                  {(CONCUR} 2002 Satellite Workshop), Infinity 2002, Brno, Czech Republic,
                  August 24, 2002},
  series       = {Electronic Notes in Theoretical Computer Science},
  number       = {6},
  pages        = {40--54},
  publisher    = {Elsevier},
  year         = {2002},
  url          = {https://doi.org/10.1016/S1571-0661(04)80532-2},
  doi          = {10.1016/S1571-0661(04)80532-2},
  bibsource    = {dblp computer science bibliography, https://dblp.org}
}

@article{ginsburgPresburger1966,
  title = {Semigroups,  Presburger formulas,  and languages},
  volume = {16},
  ISSN = {0030-8730},
  url = {http://dx.doi.org/10.2140/pjm.1966.16.285},
  DOI = {10.2140/pjm.1966.16.285},
  number = {2},
  journal = {Pacific Journal of Mathematics},
  publisher = {Mathematical Sciences Publishers},
  author = {Ginsburg,  Seymour and Spanier,  Edwin},
  year = {1966},
  month = feb,
  pages = {285–296}
}

@inproceedings{treesChang2020,
  author       = {Yi{-}Jun Chang},
  editor       = {Hagit Attiya},
  title        = {The Complexity Landscape of Distributed Locally Checkable Problems
                  on Trees},
  booktitle    = {34th International Symposium on Distributed Computing, {DISC} 2020,
                  Virtual Conference, October 12-16, 2020},
  series       = {LIPIcs},
  pages        = {18:1--18:17},
  publisher    = {Schloss Dagstuhl - Leibniz-Zentrum f{\"{u}}r Informatik},
  year         = {2020},
  url          = {https://doi.org/10.4230/LIPIcs.DISC.2020.18},
  doi          = {10.4230/LIPICS.DISC.2020.18},
  bibsource    = {dblp computer science bibliography, https://dblp.org}
}

@inproceedings{fraignaudIdentifiers2012,
  author       = {Pierre Fraigniaud and
                  Magn{\'{u}}s M. Halld{\'{o}}rsson and
                  Amos Korman},
  editor       = {Roberto Baldoni and
                  Paola Flocchini and
                  Binoy Ravindran},
  title        = {On the Impact of Identifiers on Local Decision},
  booktitle    = {Principles of Distributed Systems, 16th International Conference,
                  {OPODIS} 2012, Rome, Italy, December 18-20, 2012. Proceedings},
  series       = {Lecture Notes in Computer Science},
  pages        = {224--238},
  publisher    = {Springer},
  year         = {2012},
  url          = {https://doi.org/10.1007/978-3-642-35476-2\_16},
  doi          = {10.1007/978-3-642-35476-2\_16},
  bibsource    = {dblp computer science bibliography, https://dblp.org}
}

@inproceedings{suomelaLandscapetalk2020,
  author       = {Jukka Suomela},
  editor       = {Susanne Albers},
  title        = {Landscape of Locality (Invited Talk)},
  booktitle    = {17th Scandinavian Symposium and Workshops on Algorithm Theory, {SWAT}
                  2020, T{\'{o}}rshavn, Faroe Islands, June 22-24, 2020},
  series       = {LIPIcs},
  pages        = {2:1--2:1},
  publisher    = {Schloss Dagstuhl - Leibniz-Zentrum f{\"{u}}r Informatik},
  year         = {2020},
  url          = {https://doi.org/10.4230/LIPIcs.SWAT.2020.2},
  doi          = {10.4230/LIPICS.SWAT.2020.2},
  bibsource    = {dblp computer science bibliography, https://dblp.org}
}

@book{courcelleMSO2012,
  author       = {Bruno Courcelle and
                  Joost Engelfriet},
  title        = {Graph Structure and Monadic Second-Order Logic - {A} Language-Theoretic
                  Approach},
  series       = {Encyclopedia of mathematics and its applications},
  volume       = {138},
  publisher    = {Cambridge University Press},
  year         = {2012},
  url          = {http://www.cambridge.org/fr/knowledge/isbn/item5758776/?site\_locale=fr\_FR},
  isbn         = {978-0-521-89833-1},
  bibsource    = {dblp computer science bibliography, https://dblp.org}
}

@inproceedings{naorLocally1995,
  author       = {Moni Naor and
                  Larry J. Stockmeyer},
  editor       = {S. Rao Kosaraju and
                  David S. Johnson and
                  Alok Aggarwal},
  title        = {What can be computed locally?},
  booktitle    = {Proceedings of the Twenty-Fifth Annual {ACM} Symposium on Theory of
                  Computing, May 16-18, 1993, San Diego, CA, {USA}},
  pages        = {184--193},
  publisher    = {{ACM}},
  year         = {1993},
  url          = {https://doi.org/10.1145/167088.167149},
  doi          = {10.1145/167088.167149},
  bibsource    = {dblp computer science bibliography, https://dblp.org}
}

@inproceedings{balliuNewclasses2018,
  author       = {Alkida Balliu and
                  Juho Hirvonen and
                  Janne H. Korhonen and
                  Tuomo Lempi{\"{a}}inen and
                  Dennis Olivetti and
                  Jukka Suomela},
  editor       = {Ilias Diakonikolas and
                  David Kempe and
                  Monika Henzinger},
  title        = {New classes of distributed time complexity},
  booktitle    = {Proceedings of the 50th Annual {ACM} {SIGACT} Symposium on Theory
                  of Computing, {STOC} 2018, Los Angeles, CA, USA, June 25-29, 2018},
  pages        = {1307--1318},
  publisher    = {{ACM}},
  year         = {2018},
  url          = {https://doi.org/10.1145/3188745.3188860},
  doi          = {10.1145/3188745.3188860},
  bibsource    = {dblp computer science bibliography, https://dblp.org}
}

@INPROCEEDINGS{balliuMIS2019,
  author       = {Alkida Balliu and
                  Sebastian Brandt and
                  Juho Hirvonen and
                  Dennis Olivetti and
                  Mika{\"{e}}l Rabie and
                  Jukka Suomela},
  editor       = {David Zuckerman},
  title        = {Lower Bounds for Maximal Matchings and Maximal Independent Sets},
  booktitle    = {60th {IEEE} Annual Symposium on Foundations of Computer Science, {FOCS}
                  2019, Baltimore, Maryland, USA, November 9-12, 2019},
  pages        = {481--497},
  publisher    = {{IEEE} Computer Society},
  year         = {2019},
  url          = {https://doi.org/10.1109/FOCS.2019.00037},
  doi          = {10.1109/FOCS.2019.00037},
  bibsource    = {dblp computer science bibliography, https://dblp.org}
}

@inproceedings{barenboimMIS2009,
  author       = {Leonid Barenboim and
                  Michael Elkin},
  editor       = {Rida A. Bazzi and
                  Boaz Patt{-}Shamir},
  title        = {Sublogarithmic distributed {MIS} algorithm for sparse graphs using
                  nash-williams decomposition},
  booktitle    = {Proceedings of the Twenty-Seventh Annual {ACM} Symposium on Principles
                  of Distributed Computing, {PODC} 2008, Toronto, Canada, August 18-21,
                  2008},
  pages        = {25--34},
  publisher    = {{ACM}},
  year         = {2008},
  url          = {https://doi.org/10.1145/1400751.1400757},
  doi          = {10.1145/1400751.1400757},
  bibsource    = {dblp computer science bibliography, https://dblp.org}
}

@inproceedings{seidlPresburger2003,
  series = {SIGMOD/PODS03},
  title = {Numerical document queries},
  url = {http://dx.doi.org/10.1145/773153.773169},
  DOI = {10.1145/773153.773169},
  booktitle = {Proceedings of the twenty-second ACM SIGMOD-SIGACT-SIGART symposium on Principles of database systems},
  publisher = {ACM},
  author = {Seidl,  Helmut and Schwentick,  Thomas and Muscholl,  Anca},
  year = {2003},
  pages = {155–166},
  collection = {SIGMOD/PODS03}
}

@inbook{balliuSinkless2023,
  title = {Sinkless Orientation Made Simple},
  ISBN = {9781611977585},
  url = {http://dx.doi.org/10.1137/1.9781611977585.ch17},
  DOI = {10.1137/1.9781611977585.ch17},
  booktitle = {Symposium on Simplicity in Algorithms (SOSA)},
  publisher = {Society for Industrial and Applied Mathematics},
  author = {Balliu,  Alkida and Korhonen,  Janne H. and Kuhn,  Fabian and Lievonen,  Henrik and Olivetti,  Dennis and Pai,  Shreyas and Paz,  Ami and Rybicki,  Joel and Schmid,  Stefan and Studený,  Jan and Suomela,  Jukka and Uitto,  Jara},
  year = {2023},
  pages = {175–191}
}

@misc{jaureguiCourcelle2018,
  doi = {10.48550/ARXIV.1805.10708},
  url = {https://arxiv.org/abs/1805.10708},
  author = {Jauregui,  Benjamin and Li,  Jason and Montealegre,  Pedro and Todinca,  Ioan},
  title = {Distributed Treewidth Computation and Courcelle's Theorem in the CONGEST Model},
  publisher = {arXiv},
  year = {2018},
  copyright = {Creative Commons Attribution 4.0 International}
}

@inproceedings{balliuBinary2020,
  author       = {Alkida Balliu and
                  Sebastian Brandt and
                  Yuval Efron and
                  Juho Hirvonen and
                  Yannic Maus and
                  Dennis Olivetti and
                  Jukka Suomela},
  editor       = {Hagit Attiya},
  title        = {Classification of Distributed Binary Labeling Problems},
  booktitle    = {34th International Symposium on Distributed Computing, {DISC} 2020,
                  Virtual Conference, October 12-16, 2020},
  series       = {LIPIcs},
  pages        = {17:1--17:17},
  publisher    = {Schloss Dagstuhl - Leibniz-Zentrum f{\"{u}}r Informatik},
  year         = {2020},
  url          = {https://doi.org/10.4230/LIPIcs.DISC.2020.17},
  doi          = {10.4230/LIPICS.DISC.2020.17},
  bibsource    = {dblp computer science bibliography, https://dblp.org}
}

@inproceedings{halldorssonVertexsplit2022,
  author       = {Magn{\'{u}}s M. Halld{\'{o}}rsson and
                  Yannic Maus and
                  Alexandre Nolin},
  editor       = {Christian Scheideler},
  title        = {Fast Distributed Vertex Splitting with Applications},
  booktitle    = {36th International Symposium on Distributed Computing, {DISC} 2022,
                  Augusta, Georgia, USA, October 25-27, 2022},
  series       = {LIPIcs},
  pages        = {26:1--26:24},
  publisher    = {Schloss Dagstuhl - Leibniz-Zentrum f{\"{u}}r Informatik},
  year         = {2022},
  url          = {https://doi.org/10.4230/LIPIcs.DISC.2022.26},
  doi          = {10.4230/LIPICS.DISC.2022.26},
  bibsource    = {dblp computer science bibliography, https://dblp.org}
}

@INPROCEEDINGS{reiterAutomata2015,
  author={Reiter, Fabian},
  booktitle={2015 30th Annual ACM/IEEE Symposium on Logic in Computer Science}, 
  title={Distributed Graph Automata}, 
  year={2015},
  volume={},
  number={},
  pages={192-201},
  doi={10.1109/LICS.2015.27}}
\appendix
\section{Converting between problem types}\label{app:conversion}
Throughout the proof of Theorem~\ref{thm:autogap}, we treat $\Pi_{\mathcal{A},\Delta}$
alternatively as a problem on bounded degree \emph{rooted} trees (with or without inputs) and as a
problem on bounded degree \emph{undirected} trees with inputs; this encoding is often mentioned
as a trivial result without detail \cite{naorLocally1995,balliuRooted2022}. In this section
we formalize this result as two lemmas: one showing the immersion (encoding) of rooted
trees in unrooted node-labeled trees (Lemma~\ref{lem:roottound}) and a more general result
showing that we can encode \emph{locally verifiable requirements} on the graph structure
or input labels of \emph{trees} as part of a problem description without changing the
overall complexity (Lemma~\ref{lem:errorLCL}).
\subsection{Edge labels to node labels}
\begin{lemma}\label{lem:roottound}
    Let $\mathcal{T}(\Sigma)$ be the class of $\Sigma$-labeled \emph{undirected}
    trees and let $\mathcal{T}_d(\Sigma)$ be the class of $\Sigma_{in}$-labeled \emph{directed}
    trees; let $\mathcal{E}$ be the set $\{h,t,e\}$ then for any $\Sigma$ there is a function
    $c:\mathcal{T}_d(\Sigma)\hookrightarrow\mathcal{T}(\Sigma\uplus \mathcal{E})$ with the
    following properties:
    \begin{enumerate}[(1)]
        \item $c$ is an immersion (it is injective, but not surjective),
        \item for each $T_d\in \mathcal{T}_d(\Sigma)$ there is an immersion $c_{T_d}$ from
        the vertices of $T_d$ to the vertices of $c(T_d)$ which preserves $\Sigma$-labels, and such
        that $d_{c(T_d)}(c(v_1),c(v_2))=4d_{T_d}(v_1,v_2)$,
        \item for any $T\in \mathcal{T}(\Sigma\uplus \mathcal{E})$ the statement ``does there exist a $\Sigma$-labeled \emph{directed} tree
        $T_d$ such that $c(T_d)=T$?'' is locally decidable, and
        \item for any $T\in \mathcal{T}(\Sigma\uplus \mathcal{E})$ the statement ``does there exist a $\Sigma$-labeled \emph{rooted} tree
        $T_r$ such that $c(T_r)=T$?'' is locally decidable.
    \end{enumerate}
\end{lemma}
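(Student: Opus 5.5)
The plan is the standard edge-subdivision gadget. Given a $\Sigma$-labeled directed tree $T_d$, I build $c(T_d)$ by keeping every vertex $v$ of $T_d$ with its $\Sigma$-label; this gives the vertex map $c_{T_d}$. Each directed edge $(u,w)$ (from parent $u$ to child $w$, or tail to head) is replaced by a path $u - a - b - d - w$ of length $4$. The three new internal nodes get labels $t,e,h$ in that order: $t$ next to the tail, $e$ in the middle, $h$ next to the head. Since $\mathcal{E}$ is disjoint from $\Sigma$, original vertices and subdivision vertices are distinguishable by their labels.

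\textbf{Property (2).} Distances scale by exactly $4$: every path between original vertices in $c(T_d)$ must traverse whole subdivided edges, because the internal nodes have degree $2$.

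\textbf{Property (1).} Injectivity follows because $T_d$ can be reconstructed from $c(T_d)$. Take the $\Sigma$-labeled vertices. Join $u,w$ by an edge directed $u\to w$ exactly when there is a path $u,t,e,h,w$. Non-surjectivity is witnessed by any tree containing, say, two adjacent $\Sigma$-labeled vertices.

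\textbf{Property (3).} Checking that $T$ is a valid image is a radius-$O(1)$ verifier, and the configurations it uses are independent of IDs and of $n$, as $\mathsf{LD}^*$ requires. The local conditions are:
\begin{itemize}
\item each $t$-node has degree $2$, with one $\Sigma$-neighbor and one $e$-neighbor;
\item each $e$-node has exactly one $t$-neighbor and one $h$-neighbor;
\item each $h$-node has one $e$-neighbor and one $\Sigma$-neighbor;
\item each $\Sigma$-node has only $t$- or $h$-neighbors.
\end{itemize}
If every node accepts, then $T$ decomposes into $\Sigma$-vertices joined by $t$-$e$-$h$ chains. Because $T$ is a tree, the decoded directed graph is an (undirected) tree, hence a directed tree in $\mathcal{T}_d(\Sigma)$. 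The only subtlety is ruling out a chain whose two ends meet the same $\Sigma$-vertex; acyclicity of $T$ already excludes this, since it would form a cycle. Conversely, every image passes the checks.

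\textbf{Property (4).} A rooted tree is a directed tree in which every vertex has in-degree at most $1$, there is exactly one vertex of in-degree $0$, and all edges point away from the root. On a tree, the first two conditions imply the third. So the verifier additionally checks at each $\Sigma$-vertex that it has at most one $h$-neighbor.

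The main obstacle is the global requirement of \emph{exactly one} root, which cannot be checked by a single local condition. My plan is to argue that a finite tree in which every vertex has in-degree at most $1$ automatically has exactly one in-degree-$0$ vertex. The count is $|E|=|V|-1$ and the sum of in-degrees equals $|E|$, so exactly one vertex has in-degree $0$. Hence the local in-degree check suffices and (4) follows.
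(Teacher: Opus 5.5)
Your proposal is correct and matches the paper's proof. The paper builds $c(T_d)$ as the twice-iterated incidence graph $L(L(T_d))$, which is exactly your subdivision of each edge into a $t$--$e$--$h$ chain. It uses the same one-round label checks, plus the same ``at most one $h$-neighbor'' check at $\Sigma$-vertices for rootedness, and it recovers $T_d$ by contracting the chains, as you do. Your counting argument (in-degrees sum to $|V|-1$, each is at most $1$) makes explicit the existence of exactly one root, which the paper only asserts.
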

\begin{proof}
    This proof is inspired by similar encodings found in distributed graph algorithms
    literature, for example in \cite{balliuBinary2020}.

    We begin by observing that we can encode an edge orientation as a labeling on
    half-edges, where one half of edge $e$ receives label $h$ (head) and the other half
    receives label $t$ (tail). This encoding can be checked in one round: each node sends
    the label of each half-edge over that edge, and receives the same message from each
    neighbor, then checks that over each edge the messages sent and received exactly form
    the set $\{h,t\}$. Additionally, each node can check that the number of $h$ labels it
    is adjacent to (its own indegree) is at most one; if this is true for every node of a
    graph \emph{which we know to be a tree}, then the tree is rooted.

    Let $T_d=(V,E)$ be a node and half-edge labeled graph as above. We define its \emph{incidence graph} (also known as the Levi
    graph of the associated incidence structure) as the bipartite graph $L(T_d)=(V\uplus
    E,inc)$ which has a node for each node and edge of the original graph, where $v\in V$
    and $e\in E$ are adjacent if they are incident in the original graph. Visually, we add
    a node splitting each edge of the original graph in half.

    We repeat this operation once more: let $hE$ be the set of half-edges of $T_d$, then 
    $c(T_d):=L(L(T_d))=(V\uplus E\uplus hE,inc)$ is a tree with one node for each node, edge and
    half-edge of $T$ such that each edge is adjacent to exactly its two halves, each
    half-edge is adjacent to its edge and incident node, and each node is adjacent to all
    of its incident half-edges. All nodes in $V$ keep their $\Sigma$ labels, all nodes in
    $E$ are labeled $e$, and nodes in $hE$ are labeled $h$ or $t$ according to their
    orientation. In one round, each node can check whether the
    orientation is locally consistent:
    \begin{itemize}
        \item nodes labeled $h$ or $t$ check that they have exactly one neighbor labeled
        $e$ (originally an edge) and exactly one labeled with a label in $\Sigma$
        (originally a vertex), 
        \item nodes labeled $e$ check that they have exactly one neighbor labeled
        $h$ and exactly one labeled $t$,
        \item nodes with a label in $\Sigma$ check that all their neighbors are labeled
        with $h$ or $t$; if we are checking that the original tree was rooted, they
        additionally check that they have at most one $h$-labeled neighbor.
    \end{itemize}
    Since each edge has been split in $4$, we have
    $d_{c(T_d)}(c(v_1),c(v_2))=4d_{T_d}(v_1,v_2)$.
    
    Finally, we observe that the operation is bijective over its codomain; we can contract
    the nodes with labels in $\mathcal{E}$ (which have degree $2$) back to edges which we
    orient towards the node nearest to the $h$ label.
\end{proof}
In the above statement, property $(1)$ ensures we can convert between the two types of
graphs freely; property $(2)$ ensures that if we have a $\mathsf{LOCAL}$ algorithm to
solve \emph{any} problem $\Pi$ on directed/rooted trees (even on directed trees, and outside the
$\mathsf{PMSO}$-verifiable classes) in $T(n)$ rounds, we can run the same algorithm on
their image through $c$ in $\leq 4T(n)+1$ rounds, where the $+1$ accounts for gathering
half-edge information at the boundary (note that this requires knowledge of $n$ or at
least some bound on $T(n)$). Finally, properties $(3)$ and $(4)$ ensure we can properly
define a problem on \emph{all} undirected trees which is only hard on the trees that are
encoding a valid directed/rooted tree structure.
\subsection{Expanding graphs}
We show how to expand \emph{any} problem defined on a subset of trees to a problem defined
on all trees, as long as membership in the subset is locally verifiable. Observe that a
local verifier for a problem can give ``false positives'': since decision problems are
made \emph{by consensus}, as long as one node in every false instance outputs ``no'', the
instance will be rejected regardless of the amount of ``no''s it outputs. We can require
our verifier to be \emph{minimal} -- meaning it will always reject a neighborhood that
does not appear in any valid instance. 
\begin{definition}
    Let $\mathcal{V}$ be a $k$-round local verifier for property $\mathcal{G}$. We say $\mathcal{V}$
    is \emph{minimal} if for every $k$-hop neighborhood $N$ that is accepted by $\mathcal{V}$ there
    is $G\in\mathcal{G}$ such that $N$ is isomorphic to the $k$-hop neighborhood of a node in $G$.
\end{definition}
\begin{remark}\label{rem:minimal}
    If we have a $k$-round local verifier $\mathcal{V}$ for a property $\mathcal{G}$ and we allow
our verification algorithm to compute uncomputable functions, we can always find a minimal
verifier $\mathcal{V}'$; we define $\mathcal{V}'$ as follows: return ``yes'' on neighborhood $N$ if and only if
$\mathcal{V}$ does \emph{and} there exists a $G\in\mathcal{G}$ such that $N$ appears in
$G$. 
\end{remark}
Minimal verifiers for tree languages have an \emph{extension} property, meaning we
can extend neighborhoods of \emph{any} size which are accepted by the verifier into valid
graphs.
\begin{lemma}\label{lem:verextend}
    Let $\mathcal{V}$ be a $k$-round minimal local verifier for a (directed or undirected)
    tree property $\mathcal{G}$. Let $G$ be a tree, and let $W$ be a set of nodes of $G$
    such that:
    \begin{itemize}
        \item $\mathcal{V}$ returns ``yes'' on every node of $W$, and
        \item the subgraph induced by $W$ of $G$ is connected.
    \end{itemize}
    Then the subgraph induced by $W$ and every node that is $k$-away from $W$ in $G$ is
    isomorphic to an
    induced subgraph of a tree $G'\in\mathcal{G}$.
\end{lemma}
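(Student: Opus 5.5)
The plan is to build $G'$ by gluing. Let $U$ be the set of nodes of $G$ at distance $\leq k$ from $W$, and let $H=G[U]$. Since $G[W]$ is connected and every node of $U$ is joined to $W$ by a shortest path inside $U$, $H$ is a subtree of $G$. We will grow $H$ into a tree $G'\in\mathcal{G}$ by attaching a suitable tree to each boundary node of $H$.

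The boundary nodes are those $u\in U$ at distance exactly $k$ from $W$, together with nodes of $U$ that have $G$-neighbors outside $U$; call the latter set $B$. For each such $u$, fix a node $w_u\in W$ with $d(w_u,u)\leq k$. Since $\mathcal{V}$ accepts at $w_u$ and is minimal, there is $G_u\in\mathcal{G}$ and a node $x_u\in G_u$ with $\phi_u:N^k_G(w_u)\cong N^k_{G_u}(x_u)$. The node $u$ lies in $N^k_G(w_u)$. In the tree $G_u$, remove the edge from $\phi_u(u)$ toward $x_u$. Then take the components hanging off $\phi_u(u)$ that correspond to directions leaving $N^k(w_u)$ (or, for simplicity, every subtree of $\phi_u(u)$ pointing away from $x_u$), and graft them onto $u$ in place of the parts of $G$ lying outside $U$.

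Doing this for all boundary nodes gives a tree $G'$ that contains $H$ as an induced subgraph. This is because we only attach disjoint trees at single vertices, and each attachment is a pendant tree, so no cycles or new edges inside $U$ appear. In the directed case, orientation is inherited consistently from $G_u$ and $G$ along the cut edges, because $\phi_u$ preserves edge directions. The rooted case also needs care: the root must be unique, and if $G_u$'s root lies in the grafted part we must make sure only one such root appears. I would argue that the direction of the edge from $u$ toward $w_u$ determines whether the root lies on $W$'s side.

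It remains to show $G'\in\mathcal{G}$, for which we need $\mathcal{V}$ to accept at every node of $G'$.

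For $w\in W$, the $k$-neighborhood of $w$ in $G'$ is contained in $H$ and equals $N^k_G(w)$, so it is accepted. For grafted nodes deep inside some $G_u$, the $k$-view should coincide with their view in $G_u$, where they are accepted. The difficulty is with nodes in $U\setminus W$ and grafted nodes near $u$: their views mix pieces of $G$ and of the various $G_u$.

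This mixing is the main obstacle. My first attempt would be to make the gluing coherent by choosing, for each component of $G\setminus W$ attached at a node $w\in W$ adjacent to it, a single donor $G_w$ with $x_w$ matched to $w$. I would then replace the entire branch of $G$ beyond $w$ in that direction with the corresponding branch of $G_w$. Every node outside $W$ would then lie in a region that agrees with a single $G_w$ up to distance $k$ beyond $N^k(w)$; indeed, within distance $k$ it agrees exactly, since $N^k_G(w)\cong N^k_{G_w}(x_w)$. As a result, each non-$W$ node would have a $k$-view equal to its view in $G_w$, except where views reach back across $w$ into other branches. For those views, I would use the fact that the part beyond $w$ within distance $k$ is again fixed by $N^k_G(w)$, so every branch's near part is genuinely from $G$. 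What remains is to check that views crossing through $W$ from a grafted branch see exactly $G$ near $W$. This consistency check, namely that each node's radius-$k$ ball splits into regions each copied isomorphically from one donor or from $G$, is where I expect the careful work to lie.
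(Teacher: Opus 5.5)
Your second construction is exactly the paper's gluing: for each node $w\in W$ with neighbours outside $W$, take a donor $G_w\in\mathcal{G}$ with $\phi_w\colon N^k_G(w)\cong N^k_{G_w}(x_w)$, and replace everything of $G$ beyond $w$ (through edges leaving $W$) by the corresponding branches of $G_w$. You are also right to discard the first construction. There, nodes near the distance-$k$ frontier of $U$ see $G$-nodes up to distance $2k-1$ from $w_u$, which $\phi_u$ does not certify.

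However, the proposal stops exactly where the proof's content begins. You never show that $\mathcal{V}$ accepts at the nodes outside $W$, and you leave this as the expected ``careful work''. As written this is a gap, but it is closed by a short distance argument.

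The missing argument has two steps. Call the components of $G'-w$ reached through edges leaving $W$ the \emph{outer branches} of $w$.

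(i) For every $w\in W$ we have $N^k_{G'}(w)\cong N^k_G(w)$.
\begin{itemize}
\item The surgery at $w$ itself replaces the near parts of its outer branches by $\phi_w$-isomorphic copies.
\item The surgery at another boundary node $w_2$ only alters nodes in outer branches of $w_2$.
\item Since $G[W]$ is connected, every path from $w$ into such a branch passes through $w_2$. So the affected nodes within distance $k$ of $w$ are within distance $k-d(w,w_2)<k$ of $w_2$, where the replacement is again an isomorphic copy.
\end{itemize}
Hence every node of $W$ is accepted.

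(ii) Take $y\notin W$. Let $w$ be its nearest node in $W$, which is unique because $W$ induces a subtree of the tree $G'$, and let $d=d(y,w)\ge 1$.
\begin{itemize}
\item Every node of $N^k_{G'}(y)$ either lies in an outer branch of $w$, copied verbatim from $G_w$, or is reached from $y$ through $w$.
\item Nodes of the second kind lie in $N^{k-d}_{G'}(w)\subseteq N^{k-1}_{G'}(w)$. By (i) and $\phi_w$, this is a copy of $N^{k-1}_{G_w}(x_w)$.
\item The ball of the corresponding node $y'$ in $G_w$ decomposes the same way. The parts of $G_w$ with no counterpart in $G'$ lie on the $W$-side of $x_w$ at distance $>k$ from $x_w$, hence at distance $>k+d$ from $y'$.
\end{itemize}
So $N^k_{G'}(y)\cong N^k_{G_w}(y')$, and $\mathcal{V}$ accepts at $y$ because $G_w\in\mathcal{G}$.

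Together, (i) and (ii) give acceptance at every node of the tree $G'$, so $G'\in\mathcal{G}$. The maps $\phi_w^{-1}$ on the near parts then embed $N^k_G(W)$ as an induced subgraph of $G'$.

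Two side remarks. First, your separate worry about the root in the rooted case is unnecessary. Once every node of $G'$ is accepted, $G'\in\mathcal{G}$ by the definition of a verifier, so rootedness comes for free. Second, use a single donor per boundary node $w$, as the paper does, rather than one per component of $G\setminus W$. This avoids composing different $\phi$'s on the near parts of sibling branches in step (ii).
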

\begin{proof}
    If $\lvert W\rvert=1$, this is trivially true since $\mathcal{V}$ is minimal.
    Conversely, if $W=V(G)$ this is trivial since every node of $G$ is accepted and
    $G\in\mathcal{G}$.
    
    Let $v$ be a node on the boundary of $W$, that is, a node which is in $W$ but has a
    neighbor in $G$ which is not in $W$. Let $N^k(v)$ be its $k$-hop neighborhood in $G$
    -- we classify the edges incident to $v$ into nonempty sets $E_W(v)$ (edges between
    $v$ and a node of $W$) and $E_N(v)$ (edges between $v$ and a node \emph{not} in $W$).
    Then by the minimality of $\mathcal{V}$, let $H_v$ be the tree $H_v\in\mathcal{G}$ such that $N^k(v)$ is isomorphic to the
    the $k$-hop neighborhood of a node $v'\in H$.
    
    We ``glue'' $H_v$ and $G$ together --
    recall that both $H_v$ and $G$ are trees, so for each node $x\in G$ there is a unique
    shortest path from $x$ to $v$. For every $x\notin N^k(v)$, if this shortest path passes through an edge of
    $E_N(v)$, we remove $x$ -- conceptually, we remove all nodes that are ``on the other
    side'' of $v$ from $W$ and more than $k$-away from $v$. Similarly in $H_v$, for every
    $x'\notin N^k(v')$ if the shortest path between $x'$ and $v'$ passes through the
    isomorphic image of an edge of $E_W(v)$ we remove $x'$; now, we join $H_v$ and $G$ by
    identifying the two isomorphic neighborhoods. Observe that this operation does not
    affect the $k$-hop neighborhood of any node in $W$.

    We repeat this operation on every node $v$ on the boundary of $W$, and claim the
    resulting graph $G'$ is in $\mathcal{G}$: we run $\mathcal{V}$ on every node $v$ of $G'$,
    which has to be either:
    \begin{itemize}
        \item a member of $W$ -- its $k$-hop neighborhood is isomorphic to its
        neighborhood in $G$ and we know that $\mathcal{V}$ outputs ``yes'' on all nodes of $W$,
        so it outputs ``yes'' for $v$ in $G'$ as well, or
        \item not a member of $W$ -- let $w$ be the node of $W$ which $v$ is closest to,
        which is unique since the subgraph induced by $W$ is connected, and let $H_w$ be
        the graph that was glued at $w$. Then the distance between $w$ and any node that
        does not correspond to a node in $H_w$ is at least the distance between $w$ and
        $v$ (at least $1$) plus the distance between $w$ and this node (which is outside
        $N^k(w)$, so at least $k$) -- in other words, $N^k(v)$ in $G'$ is isomorphic to
        $N^k(v)$ in $H_w$, and since $H_w\in\mathcal{G}$ the verifier $\mathcal{V}$
        outputs ``yes'' for $v$ in $G'$ as well.
    \end{itemize}
    Then $\mathcal{V}$ accepts all nodes of $G'$ so $G'\in\mathcal{G}$.
\end{proof}
Observe that this ``extension'' graph $G'$ might be much larger or smaller than the graph $G$
we start from; this makes applying a $\mathsf{LOCAL}$ algorithm for $G'$ (whose complexity depends
on graph size) on $G$ difficult. We can solve this by requiring an even stronger property
from our verifier: the extension $G'$ we find has to be at least as big as $G$, but bigger
by at most a constant multiplicative factor.
\begin{definition}
    Let $\mathcal{V}$ be a $k$-round local verifier for a tree property $\mathcal{G}$. We say
    $\mathcal{V}$ has \emph{bounded expansion} if there is a function $f:\mathbb{N}\to
    \mathbb{N}$ such that $n\leq f(n) \in O(n)$ (the
    \emph{expansion function}) for every tree $G$ and every $W\subseteq
    V(G)$ such that:
    \begin{itemize}
        \item $W$ is not empty,
        \item $N^1(W)\subsetneq G$,
        \item $\mathcal{V}$ returns ``yes'' on every node of $W$, and
        \item the subgraph induced by $W$ of $G$ is connected.
    \end{itemize}
    Then the subgraph induced by $W$ and every node that is $k$-away from $W$ in $G$ is
    isomorphic to an induced subgraph of a tree $G'\in\mathcal{G}$ of size $\lvert
    G'\rvert=f(\lvert G\rvert)$.
\end{definition}
Once again, we show that the verifier from Lemma~\ref{lem:roottound} has this property.
\begin{proposition}
    The $1$-round verifier $\mathcal{V}$ for encoded rooted trees defined in the proof of
    Lemma ~\ref{lem:roottound} has bounded expansion with expansion function $f(n)=4n-3$.
\end{proposition}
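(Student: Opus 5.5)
The plan is to construct $G'$ explicitly in two stages. First we repair the boundary of $W$ so that $N^1(W)$ becomes part of a valid encoded rooted tree. Then we pad that tree up to exactly the right size. Throughout, $G$ is a tree in $\mathcal{T}(\Sigma\uplus\mathcal{E})$ on $n$ nodes, and $W$ is a nonempty connected set of nodes that $\mathcal{V}$ accepts, with $N^1(W)\subsetneq G$.

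The target size comes from a count. An encoding $c(T_r)$ of a rooted tree with $m$ original vertices has $m$ $\Sigma$-nodes, $m-1$ $e$-nodes and $2(m-1)$ half-edge nodes, so it has $4m-3$ nodes in total. The expansion function $f(n)=4n-3$ therefore means that $G'$ must encode a rooted tree on exactly $n$ original vertices.

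\textbf{Step 1 (repair).} Start from the subgraph induced by $N^1(W)$. This is a tree, since $W$ is connected and $G$ is a tree.

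Because every node of $W$ is accepted, the one-round checks from Lemma~\ref{lem:roottound} force the labels of nodes of $N^1(W)\smallsetminus W$ to be consistent with their neighbours in $W$. For each such boundary node $u$ we complete its gadget with fresh nodes, case by case:
\begin{itemize}
\item If $u$ is an $e$-node, it is adjacent to exactly one half-edge node in $W$. We attach a new half-edge node carrying the opposite label, and behind it a new $\Sigma$-leaf.
\item If $u$ is an $h$/$t$-node adjacent to an $e$-node of $W$, we attach a new $\Sigma$-leaf.
\item If $u$ is an $h$/$t$-node adjacent to a $\Sigma$-node of $W$, we attach a new $e$-node, then a new half-edge node with the opposite label, then a new $\Sigma$-leaf.
\item If $u$ is a $\Sigma$-node, we leave it as a leaf with its single incident half-edge.
\end{itemize}
All edges of $u$ leading outside $N^1(W)$ are simply dropped. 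None of these operations changes the induced subgraph on $N^1(W)$.

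Next we check that every node of the repaired tree passes $\mathcal{V}$. Nodes in $W$ see the same neighbourhood as before. New and boundary gadget nodes satisfy the local rules by construction. Every $\Sigma$-node outside $W$ is a leaf, so its indegree is at most $1$. Nodes of $W$ already had indegree at most $1$, and we never add edges to them.

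Finally, the repaired graph is a tree whose orientation has indegree at most $1$ everywhere. It therefore has exactly one root and decodes to a rooted tree $T_0$.

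\textbf{Step 2 (counting).} Let $m'$ be the number of $\Sigma$-nodes of the repaired tree. Each $\Sigma$-node is either an old node of $N^1(W)$ or a new leaf created for a distinct non-$\Sigma$ boundary node $u\in N^1(W)$. Hence the map $\Sigma$-nodes $\to N^1(W)$ is injective, so $m'\le |N^1(W)|\le n-1$, where the last inequality uses $N^1(W)\subsetneq G$.

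\textbf{Step 3 (padding).} Pick any $\Sigma$-node $x$ of the repaired tree. Hang from $x$ a directed path of $n-m'$ new original vertices, oriented away from $x$ and encoded with the usual $h,e,t$ gadgets, so that each new vertex has indegree $1$. The indegree of $x$ does not change, since its new edges are out-edges. The result is $G'=c(T)$ for a rooted tree $T$ with exactly $n$ vertices, so $|G'|=4n-3=f(n)$. Moreover $N^1(W)$ is an induced subgraph of $G'$, and $n\le 4n-3\in O(n)$ for $n\ge 1$.

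The main obstacle I expect is Step 1. I need to make sure that every boundary configuration allowed by acceptance of $W$ is actually covered by the case list, for example an $e$-node in $N(W)$ whose other neighbours in $G$ carry inconsistent labels. I also need to confirm that the injective counting genuinely yields $m'\le n-1$ rather than $m'\le n$. The padding is routine once that bound holds.
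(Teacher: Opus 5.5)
Your proposal is correct and follows essentially the paper's route: the same six-case completion of each boundary node of $N^1(W)$, a count showing at most $n$ original vertices, and padding with an outward-directed path to reach exactly $4n-3$ nodes. Your worry about the other neighbours of a boundary node is moot: in a tree each node of $N^1(W)\smallsetminus W$ has exactly one neighbour in $N^1(W)$, and everything else is discarded. Your injective count into $N^1(W)$ is in fact tighter than the paper's bound $4k-3\le 4n$, although $m'\le n$ would already suffice.

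The one refinement is to hang that padding path from a $\Sigma$-node outside $W$, as the paper does. Such a node always exists, since $N^1(W)\smallsetminus W\neq\emptyset$ and each boundary node either is or produces a $\Sigma$-leaf. This keeps the neighbourhoods of nodes in $W$ unchanged, which the simulation in Lemma~\ref{lem:errorLCL} implicitly needs, even though the stated definition only asks for an induced subgraph.
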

\begin{proof}
    Observe that the encoding of a rooted tree $T$ with $k$ vertices is a undirected tree with
    $k$ $\Sigma$-labeled vertices and $3$ vertices for each edge of $T$: then the encoding
    of $T$ is a tree with $3(k-1)+k=4k-3$ nodes.

    If $W=V(G)$ this is trivial since every node of $G$ is accepted and
    $G\in\mathcal{G}$.
    
    We show how to extend the graph induced by $N^1(W)$ to a valid encoded tree. Let $v$ be a node \emph{immediately outside} the boundary of $W$, meaning a node of
    $G\smallsetminus W$ which has a neighbor in $W$; since $W$ induces a connected subgraph,
    $v$ has exactly one such neighbor, which we call $w$. Let $l:\Sigma\uplus\{h,t,e\}$ be
    the labeling of $G$; one of the following cases has to be true:
    \begin{enumerate}[(1)]
        \item $l(w)\in\Sigma, l(v)=h$: we can extend $N^1(W)$ by appending a path of
        length $3$ to $v$, such that the path $w-v-v_1-v_2-v_3$ is labeled
        $l(w)-h-e-t-l(w)$; then every vertex in this path is accepted by $\mathcal{V}$.
        \item $l(w)\in\Sigma, l(v)=t$: the same as the previous case, but the path is
        instead labeled $l(w)-t-e-h-l(w)$.
        \item $l(w)=h,l(v)=e$: we can extend $N^1(W)$ by appending a path of
        length $2$ to $v$, such that the path $w-v-v_1-v_2$ is labeled
        $h-e-t-s$ and $s$ is an arbitrary element of $\Sigma$; 
        then every vertex in this path is accepted by $\mathcal{V}$.
        \item $l(w)=t,l(v)=e$: the same as the previous case, but the path is labeled
        $t-e-h-s$ instead.
        \item $l(w)=e,l(v)\in\{h,t\}$: we can extend $N^1(W)$ by appending a single
        vertex to $v$ and labeling it by an arbitrary element $s\in\Sigma$.
        \item $l(w)\in\{h,t\},l(v)\in\Sigma$: we do not need to extend $N^1(W)$, as $v$ is
        already accepted by $\mathcal{V}$.
    \end{enumerate}
    Let $n$ be the number of vertices of $G$. This procedure adds at most $3$ new vertices every time it is used, and has to be run once for every
    vertex on the boundary of $W$, of which there are at most $n$; then the
    new graph $G'$ has at most $4n$ vertices. Let $n'$ be its actual number of vertices:
    since $G'$ is a valid instance (every node of $G'$ is accepted by $\mathcal{V}$) there
    is a $k$ such that $n'=4k-3$, then $4k-3=n'\leq 4n\implies k<n$. 
    
    $G'$ is an encoding of a tree $T$ with $k$ vertices; since $W\subseteq N^1(W)\subsetneq G$,
    there is at least one leaf of $G$ which is not in $W$ -- then there is also a leaf $v$ of
    $G'$ which is not in the isomorphic image of $W$, either because it was in $N^1(W)$
    (case (6) above) or because it was added at the end of a path (cases (1)-(5)). We can
    append a directed path of $n-k$ vertices to the degree 1 node corresponding to $v$ in
    $T$; the resulting rooted tree $T'$ has $n$ vertices, and its encoding is an
    undirected tree with $4n-3$ vertices which contains $N^1(W)$ as an isomorphic subtree.
\end{proof}
\subsection{Extending problems}
Under the strong assumptions we proved in the previous section, we can extend a locally
verifiable problem on a locally decidable class of graphs to a larger class of graphs --
specifically, we can extend encodings of rooted tree problems to all undirected labeled trees.
\begin{lemma}\label{lem:errorLCL}
    Let $\Pi$ be a locally verifiable (with radius $r$) problem defined on a class $\mathcal{G}$ of
    $\Sigma_{in}$-labeled \emph{trees}. Let $\mathcal{G}'$ be a class of trees such that:
    \begin{itemize}
        \item $\mathcal{G}\subseteq\mathcal{G}'$,
        \item $\mathcal{G}'$ is $\Sigma_{in}'$-labeled for some finite $\Sigma_{in}'$
        (clearly $\Sigma_{in}\subseteq\Sigma_{in}'$),
        \item $\mathcal{G}\in \mathsf{LD}^*(k)$ for some $k\in\mathbb{N}$ over
        $\mathcal{G}'$; in other words, the statement ``$G\in\mathcal{G}$'' is locally
        decidable in $k$ rounds for any $G\in\mathcal{G}'$, and
        \item the $k$-round local verifier for $\mathcal{G}$ has \emph{bounded expansion}. 
    \end{itemize}
    Then we can define a locally verifiable problem $\Pi'$ on $\mathcal{G}'$ such that:
    \begin{enumerate}
        \item any valid solution of $\Pi'$ for a graph $G\in\mathcal{G}$ is also a valid
        solution of $\Pi$ for $G$, and
        \item $\Pi$ and $\Pi'$ have the same asymptotic complexity.
    \end{enumerate}
\end{lemma}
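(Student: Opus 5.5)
The approach is the standard "error-label" construction. We add a special output symbol $\bot$ and let nodes certify that the instance is not in $\mathcal{G}$. Write $\mathcal{V}$ for the bounded-expansion $k$-round verifier of $\mathcal{G}$ over $\mathcal{G}'$, made minimal via Remark~\ref{rem:minimal}. Set $\Sigma_{out}'=\Sigma_{out}\uplus\{\bot\}$. A labeling is declared correct for $\Pi'$ if and only if, at every node $v$, one of the following holds:
\begin{enumerate}[(a)]
\item $\mathcal{V}$ rejects at $v$, and $v$ outputs $\bot$ (or any label, if we want to be permissive);
\item $v$ is accepted by $\mathcal{V}$ and outputs $\bot$, and some node within distance $R$ is rejected by $\mathcal{V}$ or labeled $\bot$ with a rejected node closer, i.e.\ $\bot$-labels form "pointer chains" to rejected nodes;
\item every node in $N^{r}(v)$ is accepted, carries a non-$\bot$ label, and the $\Pi$-verifier accepts at $v$.
\end{enumerate}
To make chains locally checkable, I would simply output, with $\bot$, a distance counter or a pointer towards a rejecting node. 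On trees this is the usual "point toward a witness" encoding and is checkable in $O(k+r)$ rounds. Property 1 is then immediate: if $G\in\mathcal{G}$, no node is rejected, so no valid pointer chain can terminate. Hence every node is in case (c), and the labeling is a $\Pi$-solution.

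For property 2, one direction is easy: $\Pi'$ restricted to $\mathcal{G}$ is $\Pi$ by property 1, so $\Pi'$ is at least as hard. For the upper bound, given a $T(n)$-round algorithm $\mathcal{B}$ for $\Pi$, run it on $G\in\mathcal{G}'$ as follows.

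First, in $k$ rounds every node learns whether it is rejected. Nodes within distance $D=T(n')+r$ of a rejected node, where $n'=f(n)$, output $\bot$ with a pointer. Every other node $v$ looks at the connected component $W$ of accepted nodes containing it; if $W=V(G)$ we just run $\mathcal{B}$.

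Otherwise, by the bounded expansion property, $N^k(W)$ embeds into some $G'\in\mathcal{G}$ with $|G'|=f(n)\in O(n)$. Node $v$ simulates $\mathcal{B}$ on $G'$ with size parameter $f(n)$. Its $T(f(n))$-view lies inside $W$, since $v$ is far from rejected nodes, so it is the same in $G$ and $G'$. Neighboring nodes must use a consistent $G'$; I would fix a canonical choice, because $\mathcal{B}$'s output at $v$ depends only on the view, which is identical in any such extension.

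IDs need separate handling. IDs from $[1,n^c]$ are valid for graphs of size $f(n)=O(n)$ with a slightly larger constant $c$, and fresh nodes of $G'$ are never seen, so no conflict arises. In the randomized model the failure probability bound changes only polynomially; we can boost it by simulating with a larger $n$.

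Finally, I check the $\Pi$-constraints on nodes in case (c), near the $\bot$ boundary. A node $u$ adjacent to $\bot$-nodes is not required to satisfy (c) unless its whole $r$-ball is non-$\bot$. If it is, the ball lies inside the simulated region and $\mathcal{B}$'s output is correct on $G'$, hence locally correct. The total time is $O(T(f(n))+k+r)=O(T(n))$, using $f(n)\in O(n)$ and $T$ sublinear-regular, e.g.\ polynomial or logarithmic.

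I expect the main obstacle to be consistency at the boundary. Different nodes may implicitly simulate different extensions $G'$, so their outputs must agree whenever their radius-$r$ balls overlap. I would handle this by arguing that $\mathcal{B}$'s output is a function of the $T$-view only. Two nearby non-$\bot$ nodes have views contained in the common accepted region, so the choice of $G'$ is irrelevant. This is exactly where the distance threshold $D$ must exceed $T(f(n))+r$.
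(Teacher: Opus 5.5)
Your route is the paper's: error labels emanating from $\mathcal{V}$-rejected nodes, far-away nodes simulating $\mathcal{B}$ with size parameter $f(n)$ on a bounded-expansion extension, and consistency because outputs depend only on views. Two steps fail as written, though.

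The first is the encoding of the $\bot$-chains. An exact distance counter is loop-free, but it ranges up to $D=T(f(n))+r$, which grows with $n$. So $\Sigma'_{out}$ becomes infinite and $\Pi'$ is not a problem with finitely many labels. That matters, since the lemma is used to turn a rooted-tree $\mathsf{LCL}$ into an undirected-tree $\mathsf{LCL}$. A pointer is not expressible as a node label on unbounded-degree trees. Even with port numbers, it admits two-node loops $u\to w\to u$ of $\bot$'s that are locally consistent without any rejected node, so your argument for property~1 (``no valid chain can terminate'') breaks. The missing idea is the paper's mod-$3$ counter: labels $err,err_0,err_1,err_2$, where $err$ is forced exactly at rejected nodes, $err_i$ needs a neighbor labeled $err_{i-1 \bmod 3}$, and $err_0$ needs a neighbor labeled $err_2$ or $err$. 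The predecessor and successor of a chain node then carry different indices. Hence a chain is a non-backtracking walk in a tree, hence a simple path, hence it must end at an $err$ node. With only two counter values the loop reappears. Separately, your clauses (a)--(c) as stated reject every non-$\bot$ node with a $\bot$ within distance $r$. You need the additional clause, which you tacitly use later, that such nodes are accepted unconditionally.

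The second problem is the case split ``if $W=V(G)$ we just run $\mathcal{B}$, otherwise simulate on $G'$''. A node far from all rejected nodes cannot tell whether $W=V(G)$, i.e.\ whether $G\in\mathcal{G}$. So it cannot choose consistently between the size parameters $n$ and $f(n)$. Moreover, bounded expansion as defined needs $N^1(W)\subsetneq G$. This can fail for the whole accepted component, and it fails whenever the relevant ball is essentially all of $G$. The paper's fix has three parts:
\begin{itemize}
\item simulate with $n'=f(n)$ uniformly, also when $G\in\mathcal{G}$;
\item apply bounded expansion to a ball around the node whose constraint is being checked, rather than to $W$;
\item split instead on whether \emph{some} node sees all of $G$ within $T'(n)$ hops.
\end{itemize}
Every node decides this last condition consistently after gathering $2T'(n)+k$ hops. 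If such a node exists, everyone sees the whole graph and computes a solution directly; otherwise everyone knows none exists.

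Finally, boosting the randomized success probability by simulating with a polynomially larger $n$ costs $T(\mathrm{poly}(n))$. That is not $O(T(n))$ for $T(n)=n^{1/k}$, so the remark does not deliver the claimed bound (the paper does not address randomness in this lemma).
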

\begin{proof}
    We describe the locally verifiable problem $\Pi'$ by its verification algorithm: $\Pi$ is a
    tuple $(\Sigma_{in},\Sigma_{out},r,\mathcal{A})$ where $\Sigma_{in},\Sigma_{out}$ are
    finite sets of input and output labels respectively, $r$ is the checkability radius
    and $\mathcal{A}$ is a $r$-round anonymous-$\mathsf{LOCAL}$ algorithm on
    $\Sigma_{in}\times\Sigma_{out}$-labeled graphs such that a $\Sigma_{out}$ labeling of
    a $\Sigma_{in}$-labeled graph $G$ is a valid solution to $\Pi$ if and only if
    $\mathcal{A}$ returns ``yes'' on every node of $G$.

    We define
    $\Pi'=(\Sigma'_{in},\Sigma_{out}\uplus\{err,err_0,err_1,err_2\},\max\{r,k\},\mathcal{A}')$: we only need
    to define $\mathcal{A}'$. Let $\mathcal{V}$ be the minimal $k$-round local verifier of
    membership in $\mathcal{G}$; we define $\mathcal{A}'$ to accept if and only if:
    \begin{itemize}
        \item the center is out-labeled with $err$ if and only if $\mathcal{V}$ outputs
        ``no'' at the center (looking at its $k$-hop neighborhood), \emph{and}
        \item if the center is out-labeled with $err_1$ (resp: $err_2$) then it is
        accepted if and only if it has a neighbor labeled with $err_0$ (resp: $err_1$), \emph{and}
        \item if the center is out-labeled with $err_0$ then it is
        accepted if and only if it has a neighbor labeled with either $err_2$ or $err$, \emph{and}
        \item if the center is out-labeled with an element of $\Sigma_{out}$ and there is
        \emph{any} error label ($err$, $err_0$, $err_1$, or $err_2$) in its $r$-hop neighborhood, it
        is accepted, \emph{and}
        \item if the center is out-labeled with an element of $\Sigma_{out}$ and there is
        \emph{no} error label in its $r$-hop neighborhood, it is accepted if and only if
        $\mathcal{A}$ returns ``yes'' at the center (looking at its $r$-hop neighborhood).
    \end{itemize}
    We prove that $\Pi'$ has the required properties. First, let $G\in\mathcal{G}$: we
    solve $\Pi'$ on $G$, and show that all out-labels are in $\Sigma_{out}$. Assume by
    contradiction there is a $err$ out-label on some node: then the local verifier has
    output ``no'' on that node, which would imply $G\notin\mathcal{G}$, but this is false
    by hypothesis. Assume then that there is a $err_i$ for some $i\in\{0,1,2\}$: since
    there are no $err$ out-labels, every $err_i$ out-labeled node is adjacent to an
    $err_{i-1\mod 3}$ out-labeled node: we can follow a path in the tree which is
    out-labeled $err_2,err_1,err_0,err_2\ldots$ -- this path will never contain a cycle,
    as $T$ is by definition a tree, but it can also never end (as it can only end at an
    $err$ label), so it will eventually
    contain more nodes than $G$ itself, which is a contradiction. Finally, since all nodes
    are out-labeled from $\Sigma_{out}$, their $r$-hop neighborhood has to be accepted by
    $\mathcal{A}$, so the solution is correct for $\Pi$.

    Second, we show that $\Pi$ and $\Pi'$ have the same asymptotic complexity: let $f\in O(n)$ be
    the expansion function of $\mathcal{G}$ over $\mathcal{G}'$, and assume we have a
    $T(n)$ algorithm to solve $\Pi$; since $T(n)$ and $f(n)$ are at
    most linear there is $c>0$ (WLOG, we assume $c\geq 1$) and $n_0\in\mathbb{N}$ such that for every $n'\geq n_0$ we have
    $T(f(n'))\leq cf(T(n'))$. We can assume without loss of generality that $T(n)$ is
    weakly increasing, as if $T(n)>T(n+1)$ we can find a $T(n)$-round algorithm for graphs
    of size $n+1$ as a $T(n)$-round algorithm which does not use the extra information it gathers.
    
    Let $n$ be the number of nodes of our input instance $G$: if $n<n_0$, we gather
    the whole graph and run the verifier algorithm $\mathcal{V}$ on every node: if any
    nodes output ``no'', the algorithm out-labels them with $err$, and every other node
    out-labels itself with $err_{d-1\mod 3}$ where $d$ is the shortest distance between
    itself and an error.
    
    Assume $n\geq n_0$, and so $T(f(n))\leq cf(T(n))$: let $T'(n):=cf(T(n))\in
    O(T(n))$. Each node gathers its $(2T'(n)+k)$-hop
    neighborhood and runs the $k$-round verifier $\mathcal{V}$ for every node of its
    $2T'(n)$-hop neighborhood. Let $E$ the set of nodes for which $\mathcal{V}$ returns
    ``no''. If there are any such nodes in the $2T'(n)$-hop neighborhood of $v$, let $d$ be
    the minimum distance between $v$ and a node of $E$: if $d=0$, $v$ is out-labeled with
    $err$, else it is out-labeled with $err_{d-1 \mod 3}$.

    Assume no nodes in the $2T'(n)$-hop neighborhood of $v$ return ``no'', and there is a
    node in the tree that can see $n$ nodes (the whole graph) in its $T'(n)$-hop
    neighborhood: then any two nodes are at most $2T'(n)$ distance from each other, and
    every node has gathered the whole graph, which is a correct instance since no nodes
    have been rejected -- then each node can use the $T(n)\leq f(T(n))\leq cf(T(n))$
    algorithm for $\Pi$ to compute a full solution and choose the out-label for $v$ accordingly.

    Assume no nodes in the $2T'(n)$-hop neighborhood of $v$ return ``no'', and there are
    no nodes in the tree that can see $n$ nodes (the whole graph) in their $T'(n)$-hop
    neighborhood: then by the bounded expansion property of $\mathcal{V}$, there is a graph with at
    most $f(n)$ nodes which contains the $T'(n)+k$-hop neighborhood of $v$ as a isomorphic
    induced subgraph -- we then run the $T(n')$-round algorithm for $\Pi$ while pretending
    that the real number of nodes is $n'=f(n)$, since $T(n')=T(f(n))\leq cf(T(n))=T'(n)$,
    and choose the out-label for $v$ accordingly.

    We verify that this results in a correct solution: let $v$ be a node of $G'$. Then
    \begin{itemize}
        \item $v$ is out-labeled $err$ if and only if it is at distance $0$ from a node
        rejected by $\mathcal{V}$ -- meaning it is rejected itself;
        \item if $v$ is out-labeled $err_i$, then there is a $m\in\mathbb{N}$ such that the minimum
        distance between $v$ and a node $v_e$ labeled $err$ is $3m+i+1$; let $v'$ be the
        neighbor of $v$ along the minimum path between $v$ and $v_e$, then $v'$ is at
        distance $3m+i$ from $v_e$ and it is labeled either $err$ (if $m=i=0$) or
        $err_{i-1\mod 3}$;
        \item if every node in the $r$-hop neighborhood of $v$ is out-labeled with an
        element of $\Sigma_{out}$, then the $(T(n)+r)$-hop neighborhood of $v$ is accepted
        by $\mathcal{V}$ -- there is a graph $G\in\mathcal{G}$ which contains an induced
        subgraph isomorphic to the $(T'(n)+r+k)$-hop neighborhood of $v$, and the algorithm for $\Pi$ has to
        be correct on $G$, meaning the $r$-hop neighborhood of $v$ has to be accepted by
        $\mathcal{A}$, and
        \item if none of the above apply, $v$ is out-labeled with an element of
        $\Sigma_{out}$ and has a $r$-hop neighbor which is labeled with an error label,
        which is correct.
    \end{itemize} 
    The runtime of this algorithm for $\Pi'$ is $\max\{n_0,2T'(n)+k\}$; since $2T'(n)+k\in O(T(n))$, $\Pi$ and $\Pi'$ have the same asymptotic complexity.
\end{proof}
Since we have shown that the verifier for the set of undirected graphs encoding rooted
trees has a bounded expansion verifier, we can convert a problem on rooted trees to a
problem on unrooted trees; specifically, we can convert a $\mathsf{LCL}$ on rooted trees
(with or without input labels) to a $\mathsf{LCL}$ on undirected trees with input labels. 
Then, classification results which apply to all problems on
undirected trees with inputs (such as the classification results of
\cite{changHierarchy2019,treesChang2020}) apply to problems on rooted trees as well,
making the statements of Theorem~\ref{thm:logred} and~\ref{thm:polyred} well-defined.
\section{Graph decompositions}\label{app:decomposition}
We provide a more detailed proof of Lemmas~\ref{lem:logdec} and~\ref{lem:polydec}. Both
proofs are standard for bounded degree graphs, and their proofs do not strongly rely on
the bounded degree; they are included purely for completeness. We
define two shared steps for the two algorithms:

$\mathsf{Rake}$: a Rake step takes one round -- each node which is a leaf joins the current
layer, then removes itself from the graph by communicating its removal to its parent.

$\mathsf{Compress}(l)$: a Compress step takes a parameter $l$, and takes $l+1$ rounds --
each node that has indegree$=1$ and outdegree$=$1 checks whether it is part of a directed path of
nodes with the same degrees of length $\geq l$. If it is, it joins the current layer and removes itself from the graph
by communicating its removal to both its parent and child.
\begin{lemma}\label{lem:postprocess}
    Let $\ell$ be a constant and $P$ be a path of length $n\geq\ell$. Then in $O(\log^*n)$ rounds in the randomized
   or deterministic $\mathsf{LOCAL}$ model we can construct a
    $(\ell,\ell)$-ruling set $R$ such that $P\smallsetminus R$ has connected components of
    length $[\ell,2\ell]$.
\end{lemma}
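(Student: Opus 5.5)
We compute a ruling set on the path in the standard way and then adjust it so that the gaps have controlled lengths. Interpret an $(\ell,\ell)$-ruling set as a set $R$ of nodes of $P$ in which any two elements are at distance at least $\ell$, and every node of $P$ is within distance $\ell$ of $R$. Since $P$ is a consistently directed path, every node knows its predecessor and successor. The symmetry breaking reduces to classical path coloring.

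\textbf{Step 1 (coloring).} Apply Cole--Vishkin color reduction along the orientation to compute a proper $3$-coloring of the power graph $P^{2\ell}$ in $O(\ell\log^* n)=O(\log^*n)$ rounds.
\begin{itemize}
\item In deterministic $\mathsf{LOCAL}$, start from the IDs and simulate each round on $P^{2\ell}$ with $2\ell$ rounds on $P$.
\item In randomized $\mathsf{LOCAL}$, first draw random $O(\log n)$-bit identifiers, which are unique w.h.p., and then proceed identically.
\end{itemize}

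\textbf{Step 2 (independent ruling set).} Greedily extract a maximal independent set $I$ of $P^{\ell}$ by iterating over the color classes, in $O(\ell)$ rounds. Then any two nodes of $I$ are more than $\ell$ apart. Any two consecutive nodes of $I$ are at most $2\ell+1$ apart, because maximality forces every node to be within $\ell$ of $I$.

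\textbf{Step 3 (fixing gap lengths).} The components of $P\smallsetminus I$ between consecutive elements of $I$ have length between $\ell$ and $2\ell$, up to off-by-one conventions. If the convention requires strict bounds, run Step 2 on $P^{\ell+1}$ instead, and shift by one node where needed.

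\textbf{Boundary of $P$.} The two end segments before the first and after the last element of $I$ may be shorter than $\ell$. This is the main obstacle.
\begin{itemize}
\item Use the hypothesis $n\geq\ell$ and let endpoints detect themselves within $O(\ell)$ rounds.
\item If an end segment has length $<\ell$, remove the extremal ruling node. The end segment then merges with the neighbouring gap, giving length at most $3\ell+1$.
\item If this merged segment exceeds $2\ell$, split it by inserting a new ruling node at its midpoint, which yields two pieces in $[\ell,2\ell]$. All of this is decided locally in $O(\ell)$ rounds.
\item If $n<3\ell$ or so, the whole path is seen in $O(\ell)$ rounds and is handled directly.
\end{itemize}

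The total is $O(\log^* n)$ rounds. Correctness follows from the distance guarantees of the maximal independent set in $P^{\ell}$.
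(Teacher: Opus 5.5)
Your route is the paper's: an $O(\log^* n)$-round distance coloring, a greedy ruling set (a maximal independent set of $P^{\ell}$) extracted one color class at a time, and a boundary repair. In that repair the extremal ruling node is removed or moved, and the resulting end segment, of length at most $3\ell$, is split at its midpoint if it exceeds $2\ell$. Your version of the repair is more explicit than the paper's ``reposition the first and last ruling nodes''.

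\textbf{Step 1 is false as stated.} $P^{2\ell}$ has no proper $3$-coloring once $\ell\ge 2$: any $2\ell+1$ consecutive nodes of $P$ are pairwise adjacent in $P^{2\ell}$, so its chromatic number is $2\ell+1$. Nor does Cole--Vishkin run ``along the orientation'' of $P^{2\ell}$ as is, since there each node has $2\ell$ out-neighbours, not one.

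The repair is routine but needed. Compute a proper coloring of $P^{\ell}$, which is all the greedy step requires, with $c=O_\ell(1)$ colors, in either of two ways:
\begin{itemize}
\item run Linial's algorithm on the power graph, as the paper does, giving $(2\ell+1)^2$ colors; or
\item $3$-color each ``$j$-th successor'' graph, $j=1,\dots,\ell$, by Cole--Vishkin (each is a disjoint union of oriented paths), and take the tuple of colors, giving $3^{\ell}$ colors.
\end{itemize}
The greedy extraction then costs $O(c\,\ell)$ rounds instead of $O(\ell)$. This is still $O(1)$ because $\ell$ is constant, so the $O(\log^* n)$ bound survives.

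Two smaller points.
\begin{itemize}
\item Your threshold for handling short paths directly must cover every case where the left and right repairs act on the same gap or on the same ruling node, i.e.\ $\lvert I\rvert\le 2$. That allows paths of up to about $4\ell+2$ nodes, not just $3\ell$; otherwise the two independent removals and midpoint insertions can conflict. Any constant threshold works, since such paths are gathered in $O(\ell)$ rounds.
\item Under the definition you state (every node within distance $\ell$ of $R$), the repaired end segments of length up to $2\ell$ are no longer $\ell$-dominated. The paper's proof has the same feature. What is actually established, and what is used afterwards, is only the component-length property.
\end{itemize}
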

\begin{proof}
    We start by computing a $(\ell+1)$-hop $(2\ell+1)^2$-coloring -- since $\ell$ is a constant, this is
    possible in $O(\log^* n)$ rounds using Linial's algorithm \cite{linialLocality1992}. Then, we use this coloring to greedily compute
    a $(\ell,\ell)$ ruling set of the graph: nodes can choose by looking at their $l$-hop
    neighborhood whether to join the ruling set, and no nodes that are $\leq\ell$ apart
    make the choice at the same time. The nodes of the ruling set split the path in parts
    of length $[\ell,2\ell]$, except for the first and last part, which have
    length $[0,\ell]$. In $2\ell$ rounds, the first and last nodes of the ruling set can
    reposition themselves: this results in first and last parts of length $x\in[\ell,3\ell]$.
    If $x\leq 2\ell$, we are happy; if $x\in(2\ell,3\ell]$ we add the node at position
    $\lceil x/2\rceil$ to the ruling set. Every part is now of size $[\ell,2\ell]$; the
    overall complexity is $O(\log^*n)$.
\end{proof}
We can now give complete proofs of Lemmas~\ref{lem:logdec} and~\ref{lem:polydec}.
\logdec*
\begin{proof}
    Assume $n\geq 2$; we do a $\mathsf{Rake}$ step and a
    $\mathsf{Compress}(\ell)$ step at the same time, adding the $\mathsf{Rake}$ vertices
    to $R_i$ and the $\mathsf{Compress}(\ell)$ vertices to $C_i$ -- this is possible as no vertex can
    be removed or even considered by both $\mathsf{Rake}$ and $\mathsf{Compress}(\ell)$ at
    the same time. We show that overall, we remove at least
    $\frac{n}{4\ell}$ vertices. Denote:
    \begin{align*}
        n_1&=\text{size of }V_1:=\{\text{vertices with indeg$\,=1$, outdeg$\,=0$}\}\\
        n_2&=\text{size of }V_2:=\{\text{vertices with indeg$\,=1$, outdeg$\,=1$}\}\\
        n_3&=\text{size of }V_3:=\{\text{vertices with indeg$\,=1$, outdeg}\geq 2\}\\
        n_A&=\text{size of }A:=\{\text{vertices in $V_2$ that are removed with $\mathsf{Compress}(\ell)$}\}\\
        n_B&=\text{size of }B:=\{\text{vertices in $V_2$ that are not removed with $\mathsf{Compress}(\ell)$}\}
    \end{align*}
    Observe that the root is not counted in any of these sets, as it has indegree $1$;
    then $n_1+n_2+n_3=n-1$. We denote by $d_r$ the outdegree of the root, then by adding
    the outdegrees:
    \begin{equation}
        n_2+2n_3\leq 0n_1+1n_2+2n_3+d_r\leq n-1=n_1+n_2+n_3\Rightarrow n_1\geq n_3.
    \end{equation}
    Consider the vertices of $B$: their connected components are directed paths of length $\leq\ell-1$
    where both endpoints are connected to a vertex in $V_1\cup V_3$ or the root. By contracting these
    paths to a single edge and removing the vertices of $A$, we obtain a forest with
    $n_1+n_3+1$ vertices and $\leq n_1+n_3$ edges: then
    \begin{gather*}
        n_B\leq(\ell-1)(n_1+n_3)\\
        n_A=n_2-n_B\geq n_2-(\ell-1)(n_1+n_3)=\ell n_2-(n-1)(\ell-1)\\
        n_2\leq \frac{n_A}{\ell}+(n-1)\frac{\ell-1}{\ell}\tag{2}
    \end{gather*}
    We apply inequalities (1) and (2) to obtain:
    \begin{gather*}
        2n_1+\frac{n_A}{\ell}+(n-1)\frac{\ell-1}{\ell}\geq n_1+n_2+n_3=n-1\\
        2n_1+2n_A\geq2n_1+\frac{n_A}{\ell}\geq \frac{n-1}{\ell}\\
        n_1+n_A\geq \frac{n-1}{2\ell}\geq \frac{n}{4\ell}
    \end{gather*}
    for $n\geq 2$. Since $n_1$ is the number of vertices removed by $\mathsf{Rake}$ and
    $n_A$ is the number of vertices removed by $\mathsf{Compress}$, we remove at
    least $\frac{n}{4\ell}$ vertices at each step until we have at most $2$ vertices left; they are removed by
    two extra $\mathsf{Rake}$ steps. Then in 
    $$L=\lceil\log_{\frac{4\ell}{4\ell-1}}(n)-\log_{\frac{4\ell}{4\ell-1}}(2)+2\rceil\in O(\log n)$$
    steps we remove all the vertices. 
    
    The resulting $R_1,C_1,\ldots,R_L$ is not yet a valid $(1,\ell,L)$ decomposition,
    as paths in $C_i$ might be longer than $2\ell$. We use Lemma~\ref{lem:postprocess} on
    each connected component of each $C_i$ \emph{separately} and \emph{in parallel}; this
    is possible even if the overall graph has unbounded degree, since each component is a
    path. Then for each $C_i$, we promote all nodes that are in the ruling set to
    $R_{i+1}$: these nodes are happy since they have no neighbors in higher layers and the
    remaining components of $C_i$ have length $[\ell,2\ell]$, so we now have a valid 
    $(1,\ell,L)$ decomposition.

    Overall, it takes $\ell+1\in O(1)$ rounds to do $\mathsf{Rake}$ and $\mathsf{Compress}(\ell)$
    and we repeat them for $O(\log n)$ times; the postprocessing step takes $O(\log^*n)$
    rounds, so the overall complexity is $O(\log n)$.
\end{proof}
\polydec*
\begin{proof}
    In this proof, we do $\gamma$ $\mathsf{Rake}$ steps and one $\mathsf{Compress}(\ell)$
    step for each pair of layers $R_i,C_i$; however, we cannot do them at the same time as
    repeated instances of $\mathsf{Rake}$ can remove vertices that started with
    outdegree$=1$, which affects the $\mathsf{Compress}$ step. We add all vertices removed
    by the $i$th iteration of applying $\mathsf{Rake}$ $\gamma$ times to $R_i$, and all
    vertices removed by the $i$th iteration of $\mathsf{Compress}(\ell)$ to $C_i$.

    Let $S_i$ be the vertices left after performing the above operations for
    $i-1$ times, and $S_i'$ be the result of applying $\mathsf{Rake}\ \gamma$ times --
    note that $S_1=V$. We
    want to prove that $\lvert S_k\rvert\leq \gamma$, meaning that all
    vertices will be removed by the last application of $\gamma\ \mathsf{Rake}$ steps. 

    Let $A_i$ be the vertices of $S_i'$ that have indegree$\,=1$ and
    outdegree$=\,1$ but are not removed by the $i$th $\mathsf{Compress}(\ell)$; let $B_i$
    be the vertices of $S_i'$ which have either indegree$\,=0$ (the root) or
    outdegree$\,\neq1$; then $\lvert S_{i+1}\rvert=\lvert A_i\rvert+\lvert B_i\rvert$. The connected components of
    the subgraph induced by $A_i$ are directed paths of length at most $(\ell-1)$: if we
    contract them in $S_{i+1}$, we obtain a forest with vertices in $B_i$ -- then $\lvert
    A_i\rvert\leq (\ell-1)(\lvert B_i\rvert-1)$.
    
    Let $B^1_i$ be the set of vertices that have outdegree$=0$ at the beginning of the
    $i$th $\mathsf{Compress}$; by adding outdegrees, we find that 
    \begin{gather*}
        0\lvert B_i^1\rvert+2\lvert B_i\smallsetminus B_i^1\rvert=2\lvert B_i\smallsetminus B_i^1\rvert\leq \lvert B_i\rvert\\
        \lvert B_i\smallsetminus B_i^1\rvert\leq \lvert B_i\rvert /2\\
        \lvert B_i^1\rvert\geq \lvert B_i\rvert/2
    \end{gather*}
    Additionally, each element of $B_i^1$ had to have been adjacent to a connected
    component of $S_i\smallsetminus S_i'$ (removed in the $i$th $\mathsf{Rake}$ step) of
    size at least $\gamma$: therefore 
    $$\lvert B_i\rvert/2\leq\lvert B_i^1\rvert\leq\lvert S_i\rvert/\gamma$$
    Finally, we show
    $$\lvert S_{i+1}\rvert= \lvert A_i\rvert+\lvert B_i\rvert\leq(\ell-1)(\lvert
    B_i\rvert-1)+\lvert B_i\rvert\leq \ell \lvert B_i\rvert\leq \frac{2\ell}{\gamma}\lvert
    S_i\rvert$$
    By applying this to every layer from $1$ to $k-1$, we get
    $$\lvert S_k\rvert\leq \left(\frac{2\ell}{\gamma}\right)^{k-1}n\leq \gamma,$$
    and all vertices get removed by the $k$th $\mathsf{Rake}$ step.

    The resulting $R_1,C_1,\ldots,R_k$ is not yet a valid $(\gamma,\ell,k)$ decomposition,
    as paths in $C_i$ might be longer than $2\ell$. We use Lemma~\ref{lem:postprocess} on
    each connected component of each $C_i$ \emph{separately} and \emph{in parallel}; this
    is possible even if the overall graph has unbounded degree, since each component is a
    path. Then for each $C_i$, we promote all nodes that are in the ruling set to
    $R_{i+1}$: these nodes are happy since they have no neighbors in higher layers and the
    remaining components of $C_i$ have length $[\ell,2\ell]$, so we now have a valid 
    $(\gamma,\ell,k)$ decomposition.

    Overall, it takes $\ell+\gamma\in O(n^{1/k})$ rounds to do $\mathsf{Rake}$ $\gamma$ times and $\mathsf{Compress}(\ell)$
    and we repeat them for $k\in O(1)$ times; the postprocessing step takes $O(\log^*n)$
    rounds, so the overall complexity is $O(n^{1/k})$.
\end{proof}
\section{Myhill-Nerode}\label{app:myhill}
\minratmtathm*
\begin{proof}
    $(1)$: let $\delta(t)$ denote the state of the root of $t$; observe that
    $\delta(t_1)=\delta(t_2)\Rightarrow t_1\equiv_\mathcal{G} t_2$, so the number of
    congruence classes of $\equiv_\mathcal{G}$ is bounded by the number of states, which
    is finite.

    $(2)$: let $\mathcal{A}=(\Sigma,Q,\delta,F)$ and let $Q'$ be the set of
    congruence classes of $\equiv_\mathcal{G}$ (defined equivalently over all trees or
    over the set of states; see (1)). Consider the quotient projection $h:Q\to Q'$; we can
    define $\dot{h}$ which applies $h$ to multisets pointwise, and $\ddot{h}$ which
    applies $\dot{h}$ to multiset languages pointwise. $\ddot{h}$ sends semilinear (rational) sets to semilinear
    (rational) sets, as in any vector defining a linear set we can find the index of
    $h(q)$ by adding the indices of all elements of $h(q)$. We define a transition
    $\delta_{min}:\Sigma\times Q'\to\mathrm{Rat}(\mathcal{M}(Q'))$ as $\delta_{min}(s,h(q_1))=\ddot{h}(\delta(s,q_1))$.
    Recall that $\mathcal{A}$ is complete and deterministic, so we can define an inverse function $\gamma:\Sigma\times\mathcal{M}(Q)\to
    Q$ where $\gamma(s,M)=q \Leftrightarrow M\in\delta(s,q)$. Similarly, we define
    $$\gamma_{min}(s,\dot{h}(M))=h(q)\Leftrightarrow \dot{h}(M)\in\delta_{min}(s,h(q))=\ddot{h}(\delta(s,q_1))$$ 
    and observe that $\delta_{min}$ is well-defined and induces a deterministic and
    complete RatMTA iff $\gamma_{min}$ is well-defined and total.

    Let $M_1,M_2$ such that $\dot{h}(M_1)=\dot{h}(M_2)$: we show that for all $s\in\Sigma$
    we have $h(\gamma(s,M_1))=h(\gamma(s,M_2))$. Consider rooted trees $t_1,t_2$ with
    $\lvert M_1\rvert=\lvert M_2\rvert$ children to the root, such that the children of
    $t_1,t_2$ have sub-root states corresponding to the elements of $M_1,M_2$
    respectively. Clearly $\delta(t_1)=q_1$ and $\delta(t_2)=q_2$. Since
    $\dot{h}(M_1)=\dot{h}(M_2)$, we can find a one-to-one correspondence $f:M_1\to M_2$
    such that $h(q)=h(f(q))$ for all $q\in M_1$. Let $C[x]$ be a context, and consider
    $C[t_1]$. We replace each child tree of the root of
    $t_1$ of type $q$ with the corresponding child tree of the root of $t_2$ of type
    $f(q)$: this operation is equivalent to replacing $t_1$ with $t_2$. Each step of this
    replacement preserves membership of the overall tree in $\mathcal{G}$, so
    $C[t_1]\in\mathcal{G}\Leftrightarrow C[t_2]\in\mathcal{G}$ and
    $t_1\equiv_{\mathcal{G}} t_2$. Then $\gamma_{min}$ is well defined; since
    $h,\dot{h},\ddot{h}$ are surjective it is also total.

    Finally, we define $\mathcal{A}_{min}=(\Sigma, Q', \delta_{min},
    F'=F/\equiv_{\mathcal{G}})$. We need to show that $\mathcal{A}_{min}$ correctly
    recognises $\mathcal{G}$: given a rooted tree $(V,E,r)$ any accepting run $l:V\to Q$
    of $\mathcal{A}$ corresponds to an accepting run $h\circ l$ of
    $\mathcal{A}_{min}$. Conversely, for each accepting run $\dot{l}$ of
    $\mathcal{A}_{min}$, we can label $\mathcal{A}$ by exploring
    the tree root-to-leaves and letting each subroot (with label $s$ and state $q$) choose a valid multiset of children
    states $M\in\delta(s,q)$ such that $\dot{h}(M)$ is the set of states of its children
    in $\dot{l}$; then, for each child subtree $t_1$ with $\delta(t_1)=q_1$ which was
    wrongly labeled $q_2$, we can replace it with $t_2$ with $\delta(t_2)=q_2$ since
    $q_1\equiv_\mathcal{G} q_2$. The result is a tree $T'$ labeled with an accepting run
    of $\mathcal{A}$ such that $T\in\mathcal{G}\Leftrightarrow T'\in\mathcal{G}$.
\end{proof}
\end{document}